%% file: dmp.tex
\documentclass[conference]{IEEEtran}

\usepackage[prologue,dvipsnames,table]{xcolor}

\usepackage{amsmath} 
\usepackage{mathtools} 
\usepackage{soul}      
\usepackage{mathpartir}
\usepackage{multirow} 
\usepackage{stmaryrd} 
\usepackage{amssymb}  
\usepackage{amsthm}   
\usepackage{thmtools}
\usepackage{thm-restate}
\usepackage{halloweenmath}      
\usepackage{tabularx}
\usepackage[retain-explicit-plus=true]{siunitx} 
\usepackage{wasysym}  

\usepackage{booktabs}

\usepackage{todonotes}
\usepackage[inline]{enumitem}   
\usepackage{cite}
\usepackage{hyperref}
\hypersetup{
  colorlinks,
  linkcolor={link_color},
  citecolor={link_color},
  urlcolor={link_color}
}

\usetikzlibrary{decorations.pathmorphing}
\input{semantics/macros.sty}
\input{semantics/rules/step_rule.tex}
\input{semantics/rules/updated_mem_rules.tex}

\input{semantics/rules/fetch_rules.tex}
\input{semantics/rules/execute_jmp_rules.tex}
\input{semantics/rules/execute_mem_rules.sty}
\input{semantics/rules/retire_rules.tex}
\input{semantics/prefetchers.tex}
\input{textmacros.sty}

\usepackage{acronym}

\newcommand{\riscv}{RISC\nobreakdash-V\xspace}
\newcommand{\artifactrepo}{\url{https://github.com/JensTheGrens/secdmp}}
\newcommand{\zenodoartifact}{\url{https://doi.org/10.5281/zenodo.23691560}}

\acrodef{BTB}{branch target buffer}
\acrodef{CASCO}{contract-aware secure compilation}
\acrodef{CI}{continuous integration}
\acrodef{CDB}{common data bus}
\acrodef{CSR}{control and status register}
\acrodef{DMP}{data memory-dependent prefetcher}
\acrodef{FPGA}{field-programmable gate array}
\acrodef{HDL}{hardware description language}
\acrodef{ISA}{instruction set architecture}
\acrodef{ISR}{interrupt service routine}
\acrodef{IVT}{interrupt vector table}
\acrodef{MMU}{memory management unit}
\acrodef{MPU}{memory protection unit}
\acrodef{NMI}{non-maskable interrupt}
\acrodef{PHT}{pattern history table}
\acrodef{PMP}{physical memory protection}
\acrodef{PSF}{predictive store forwarding}
\acrodef{ROB}{reorder buffer}
\acrodef{ROP}{return-oriented programming}
\acrodef{RSB}{return stack buffer}
\acrodef{SSB}{speculative store bypass}
\acrodef{STL}{store-to-load forwarding}
\acrodef{TCB}{trusted computing base}
\acrodef{TEE}{trusted execution environment}
\acrodef{TLB}{translation lookaside buffer}
\acrodef{VCD}{value change dump}

\def\extendedVersion{1}

\def\hideComments{0}

\newcommand{\extendedElse}[2]{\if\extendedVersion1#1\else#2\fi}

\if\hideComments1
    \newcommand{\marton}[1]{}
    \newcommand{\lesly}[1]{}
    \newcommand{\jens}[1]{}
\else
    \newcommand{\marton}[1]{\todo[color=magenta!40,inline]{marton: #1}}
    \newcommand{\lesly}[1]{\todo[color=green!40,inline]{Lesly: #1}}
    \newcommand{\jens}[1]{\todo[color=yellow!40,inline]{Jens: #1}}
\fi

\usepackage[available,functional,reproduced]{ndssbadges}

\begin{document}
%

\if\extendedVersion1
\title{\secdmp{}: Provably Secure Data Memory-Dependent Prefetching\\(Extended Version)}
\else
\title{\secdmp{}: Provably Secure Data Memory-Dependent Prefetching}
\fi

\author{
    \IEEEauthorblockN{Jens Sprengers}
    \IEEEauthorblockA{DistriNet, KU Leuven\\
        jens.sprengers@kuleuven.be}
    \and
    \IEEEauthorblockN{Marton Bognar}
    \IEEEauthorblockA{DistriNet, KU Leuven\\
        marton.bognar@kuleuven.be}
    \and
    \IEEEauthorblockN{Lesly-Ann Daniel}
    \IEEEauthorblockA{EURECOM\\
        lesly-ann.daniel@eurecom.fr}
    \and
    \IEEEauthorblockN{Frank Piessens}
    \IEEEauthorblockA{DistriNet, KU Leuven\\
        frank.piessens@kuleuven.be}
}


\if\extendedVersion0

\IEEEoverridecommandlockouts
\makeatletter\def\@IEEEpubidpullup{6.5\baselineskip}\makeatother
\IEEEpubid{\parbox{\columnwidth}{
		Network and Distributed System Security (NDSS) Symposium 2027\\
		22--26 March 2027, Seoul, Republic of Korea\\
		ISBN 978-1-970672-09-1\\
		https://dx.doi.org/10.14722/ndss.2027.230173\\
		www.ndss-symposium.org
}
\hspace{\columnsep}\makebox[\columnwidth]{}}

\fi

\maketitle

\begin{abstract}
    \input{sections/00-abstract.tex}
\end{abstract}


%
\IEEEpeerreviewmaketitle

\input{sections/10-introduction.tex}
\input{sections/20-background.tex}
\input{sections/25-threat-model.tex}
\input{sections/30-leakage-classification.tex}
\input{sections/35-overview.tex}
\input{sections/40-formalization.tex}
\input{sections/50-secdmp.tex}
\input{sections/60-implementation-evaluation.tex}

\input{sections/70-discussion-related-future.tex}
\input{sections/80-conclusion.tex}


\section*{Acknowledgment}
This research was partially funded by the Internal Funds KU Leuven, the Cybersecurity Research Program Flanders, and the Research Foundation -- Flanders (FWO) under grant number G081322N.
This work has benefited from a government grant managed by the National Research Agency under France 2030 with reference ANR-22-PECY-0009.

\section*{Ethics Considerations}
This research analyzes existing vulnerabilities in processors but does not demonstrate any attacks on production systems, only on the Proteus research core.
We propose a defense that may help secure future products, but do not expose unknown vulnerabilities in current CPUs.
However, as our research presents the first concrete demonstration that selective disabling does not prevent all DMP leakage, we notified Intel about an ambiguity in their documentation regarding selectively setting the DOIT bit on January 27, 2026.



\bibliographystyle{IEEEtran}
\bibliography{references,bibliography/generated}

\appendices
\crefalias{section}{appendix}
\crefalias{subsection}{appendix}

\input{sections/100-appendix.tex}

\end{document}

%% file: semantics/rules/step_rule.tex
\newcommand\stepRule[0]{
  \infer[step] {
    \directive \mydef \nextdir(\micro) \\
    \smicro{\directive}{} {\conf{\mem, \reg, \buf, \micro}} {\conf{\mem',
        \reg', \buf', \micro'}}
    {} \\
    \micro'' \mydef \update(\micro, \micro')
  }{
    \smicro{}{}
    {\conf{\mem, \reg, \buf, \micro}}
    {\conf{\mem', \reg', \buf', \micro''}}
    {}
  }
}

%% file: semantics/rules/updated_mem_rules.tex
\newcommand\abstractDMPactivate[0]{
  \inferrule[dmp-activate]
  {
    \mconf{} \mydef \conf{\mem, \reg, \buf, \micro} \quad
    \loc \mydef \getpc(\mconf{}, \directive) \\
    \dmpactivate(\micro{}, \loc, \directive) \quad
    \val{a}^{\bot} \mydef \loadaddr(\mconf{}, \directive) \\
    \smicro
    {\directive}{\absbase{}}
    {\conf{\mem, \reg, \buf, \micro}}
    {\conf{ \mem', \reg', \buf',\micro'}}
    {}\\
    \micro_{\text{train}} \mydef \train(\micro, \loc, \mem, \val{a}^{\bot}) \quad
    \micro_{\text{pref}} \mydef \prefetch(\micro, \loc, \mem, \val{a}^{\bot}) \\
    \micro'' \mydef \update(\micro, \micro', \mathsf{activate}, \micro_{\text{train}}, \micro_{\text{pref}}, \val{a}^{\bot}) \\
  }
  {
    \smicro
    {\directive{}}{\dmpbase{}}
    {\conf{\mem, \reg, \buf, \micro}}
    {\conf{\mem', \reg', \buf', \micro''}}
    {}
  }
}

\newcommand\abstractDMPnoactivate[0]{
  \inferrule[dmp-no-activate]
  {
    \loc \mydef \getpc(\mconf{}, \directive) \\
    \neg \dmpactivate(\micro{}, \loc, \directive) \\
    \smicro
    {\directive}{\absbase{}}
    {\conf{\mem, \reg, \buf, \micro}}
    {\conf{\mem', \reg', \buf', \micro'}}
    {}\\
    \micro'' \mydef \update(\micro, \micro', \mathsf{no\mbox{-}activate}) \\
  }
  {
    \smicro
    {\directive{}}{\dmpbase{}}
    {\conf{\mem, \reg, \buf, \micro}}
    {\conf{\mem', \reg', \buf', \micro''}}
    {}
  }
}

\newcommand\updatedLoad[0]{
  \inferrule[execute-load-dmp]
  {
    |\buf| = i-1 \\
    inst = \load{x}{e} @ T \\
    \changedrule{\loc \mydef \semexpr{\pc}{\apl(\buf, \reg)}} \\
    \val{a} \mydef \semexpr{e}{\apl(\buf, \reg)} \\
    \val{v} \mydef \mem(\val{a}) \\
    inst' \mydef \mov{x}{\val{v}} @ T\\
    \store{\_}{\_} \not\in \buf\\
    \changedrule{\micro_\text{train} \mydef \train(\micro, \loc, \mem, \val{a})} \quad
    \changedrule{\micro_\text{pref} \mydef \prefetch(\micro, \loc, \mem, \val{a})} \\
    \micro' \mydef \update(\micro, \changedrule{\micro_\text{train}}, \changedrule{\micro_\text{pref}}, \leaked{\val{a}}) \\
  }
  {
    \smicro
    {\execute{i}}{\dmp{}}
    {\conf{\mem, \reg, \buf \cdot inst \cdot \buf', \micro}}
    {\conf{ \mem, \reg, \buf \cdot inst' \cdot \buf',\micro'}}
    {}
  }
}

\newcommand\updatedSecLoad[0]{
  \inferrule[execute-load]
  {
    |\buf| = i-1 \\
    inst = \load{x}{e} @ T \\
    \loc \mydef \semexpr{\pc}{\apl(\buf, \reg)} \\
    \val{a} \mydef \semexpr{e}{\apl(\buf, \reg)} \\
    \val{v} \mydef \mem(\val{a}) \\
    inst' \mydef \mov{x}{\val{v}} @ T\\
    \store{\_}{\_} \not\in \buf\\
    \changedrule{\micro_\text{train} \mydef \train(\micro, \loc, \sproj{\mem}, \val{a})} \quad
    \changedrule{\micro_\text{pref} \mydef \prefetch(\micro, \loc, \sproj{\mem}, \val{a})} \\
    \micro' \mydef \update(\micro, \changedrule{\micro_\text{train}}, \changedrule{\micro_\text{pref}}, \leaked{\val{a}}) \\
  }
  {
    \smicro
    {\execute{i}}{\secdmp{}}
    {\conf{\mem, \reg, \buf \cdot inst \cdot \buf', \micro}}
    {\conf{ \mem, \reg, \buf \cdot inst' \cdot \buf',\micro'}}
    {}
  }
}

%% file: semantics/rules/fetch_rules.tex
\newcommand\fetchPredictBranchJump[0]{
  \inferrule[fetch-predict-branch-jmp]
  {
    \loc \mydef \semexpr{\pc}{\apl(\buf, \reg)} \\
    \prog{\loc} \in \{\beqz{e}{\_}, \jmp{e}\} \\
    \loc' \mydef \predict(\micro) \\
    \micro' \mydef \update(\micro, \leaked{\loc}) \\
  }
  {
    \smicro{\fetch}{}
    {\conf{\mem,\reg, \buf, \micro}}
    {\conf{\mem,\reg, \buf \cdot \pc \gets \loc' @ \loc, \micro'}}
    {}
  }
}

\newcommand\fetchOthers[0]{
  \inferrule[fetch-others]
  {
    \loc{} \mydef \semexpr{\pc}{\apl(\buf, \reg)} \\
    \prog{\loc} \not\in \{\beqz{\_}{\_}, \jmp{\_}\} \\
    \loc' \mydef \loc + 1 \\
    \micro' \mydef \update(\micro, \leaked{\loc}) \\
  }
  {
    \smicro
    {\fetch}{}
    {\conf{\mem,\reg, \buf, \micro}} {\conf{\mem,\reg, \buf \cdot
        \prog{\loc}@\varepsilon \cdot \incrpc{\loc'}
          @\varepsilon, \micro'}} {}
  }
}

%% file: semantics/rules/execute_jmp_rules.tex
\newcommand\branchCommit[0]{
  \infer[branch-commit]
  {
    |\buf| = i-1 \\
    inst = \pc \gets \loc @ \loc_0 \\
    \prog{\loc_0} = \beqz{e}{\loc'} \\
    \val{c} \mydef \semexpr{e}{\apl(\buf, \reg)} \\
    \loc'' \mydef \ite{\val{c} = 0}{\loc'}{\loc_0 + 1}\\
    \loc = \loc''\\
    inst' = \pc \gets \loc @ \varepsilon \\
    \micro' \mydef \update(\micro, \leaked{\loc}, \leaked{\loc_0})
  }{
    \smicro
    {\execute{i}}{}
    {\conf{\mem,\reg, \buf \cdot  inst \cdot \buf', \micro}}
    {\conf{\mem, \reg, \buf \cdot inst' \cdot \buf', \micro'}}
    {}
  }
}

\newcommand\branchRollback[0]{
  \infer[branch-rollback]
  {
    |\buf| = i-1 \\
    inst = \pc \gets \loc @ \loc_0 \\
    \prog{\loc_0} = \beqz{e}{\loc'} \\
    \val{c} \mydef \semexpr{e}{\apl(\buf, \reg)} \\
    \loc'' \mydef \ite{\val{c} = 0}{\loc'}{\loc_0 + 1}\\
    \loc \neq \loc''\\
    inst' = \pc \gets \loc'' @ \varepsilon \\
    \micro' \mydef \update(\micro, \leaked{\loc''}, \leaked{\loc_0})
  }{
    \smicro
    {\execute{i}}{}
    {\conf{\mem,\reg, \buf \cdot  inst \cdot \buf', \micro}}
    {\conf{\mem, \reg, \buf \cdot inst', \micro'}}
    {}
  }
}

\newcommand\jmpCommit[0]{
  \infer[jmp-commit]
  {
    |\buf| = i-1 \\
    inst = \pc \gets \loc @ \loc_0 \\
    \prog{\loc_0} = \jmp{e} \\
    \loc' \mydef{} \semexpr{e}{\apl(\buf, \reg)} \\
    \loc' = \loc \\
    inst' = \pc \gets \loc @ \varepsilon \\
    \micro' \mydef \update(\micro, \leaked{\loc'}, \leaked{\loc_0})
  }
  {
    \smicro
    {\execute{i}}{}
    {\conf{\mem,\reg, \buf \cdot  inst \cdot \buf', \micro}}
    {\conf{\mem, \reg, \buf \cdot inst' \cdot \buf', \micro'}}
    {}
  }
}

\newcommand\jmpRollback[0]{
  \infer[jmp-rollback]
  {
    |\buf| = i-1 \\
    inst = \pc \gets \loc @ \loc_0 \\
    \prog{\loc_0} = \jmp{e} \\
    \loc' \mydef{} \semexpr{e}{\apl(\buf, \reg)} \\
    \loc' \neq \loc \\
    inst' = \pc \gets \loc' @ \varepsilon \\
    \micro' \mydef \update(\micro, \leaked{\loc'}, \leaked{\loc_0})
  }
  {
    \smicro
    {\execute{i}}{}
    {\conf{\mem,\reg, \buf \cdot  inst \cdot \buf', \micro}}
    {\conf{\mem, \reg, \buf \cdot inst', \micro'}}
    {}
  }
}

%% file: semantics/rules/retire_rules.tex
\newcommand\retireAssign[0]{
\infer[retire-assign]
  {
    \buf = \mov{r}{v}  @ \varepsilon \cdot \buf' \\
  }
  {
    \smicro
    {\retire}
    {}
    {\conf{\mem,\reg, \buf, \micro}}
    {\conf{\mem, \reg\change{r}{v}, \buf', \micro}}
    {}
  }
}

\newcommand\retireStoreLow[0]{
  \inferrule[retire-store-low] {
    \buf = \store{a}{\val{v}} @ \varepsilon \cdot \buf' \\
    \micro' = \update(\micro, \leaked{\val{a}}, \leaked{\val{v}}) \\
    \memsec(\val{a}) = \botsec \\
  }{ \smicro {\retire}{}
    {\conf{\mem, \reg, \buf, \micro}}
    {\conf{\mem\change{\val{a}}{\val{v}}, \reg, \buf', \micro'}}
    {}
  }
}

\newcommand\retireStoreHigh[0]{
  \inferrule[retire-store-high] {
    \buf = \store{a}{\val{v}} @ \varepsilon \cdot \buf' \\
    \micro' = \update(\micro, \leaked{\val{a}}) \\
    \memsec(\val{a}) = \topsec\\
  }{ \smicro {\retire}{}
    {\conf{\mem, \reg, \buf, \micro}}
    {\conf{\mem\change{\val{a}}{\val{v}}, \reg, \buf', \micro'}}
    {}
  }
}

%% file: semantics/prefetchers.tex
\newcommand\secureAppleDmp{
    \begin{equation*}
        \begin{array}{rl}
            \codeline{1}  & \function{function}~\prefetch(\micro, \loc, \mem, \val{a})\colon                                     \\
            \codeline{2}  & \tab \val{base} := \val{a} - \val{a} \bmod{n}                                                        \\
            \codeline{3}  & \tab \val{end} := \val{base} + n - 1                                                                 \\
            \codeline{4}  & \tab \function{for}~\val{address} := \val{base}~\function{to}~\val{end}~\function{do}                \\
            \codeline{5}  & \tab \tab \changedrule{\function{if}~\memsec(\val{address}) = \botsec~\function{then}}               \\
            \codeline{6}  & \tab \tab \tab \val{target} := \mem(\val{address})                                                   \\
            \codeline{7}  & \tab \tab \tab \function{if}~\leaked{\pointerHeuristic(\val{target}, \val{address})}~\function{then} \\
            \codeline{8}  & \tab \tab \tab \tab \micro := \update(\micro, \leaked{\val{target}}, \leaked{\val{target} + n})      \\ 
            \codeline{9}  & \tab \function{return}~\micro                                                                        \\
            \\
            \codeline{10} & \function{function}~\pointerHeuristic(\val{target}, \val{address})\colon                             \\
            \codeline{11} & \tab \function{return}~\val{target}[55\colon\!32] = \val{address}[55\colon\!32]                      \\
        \end{array}
    \end{equation*}
}

\newcommand\secureIntelDmp{
    \begin{equation*}
        \begin{array}{rl}
            \codeline{1}  & \function{function}~\train(\micro, \loc, \mem, \val{a})\colon                                                                    \\
            \codeline{2}  & \tab \val{previousLoads} := \getLoadHistory(\micro)                                                                              \\
            \codeline{3}  & \tab \function{for}~\angles{\loc', \val{a}'}~\function{in}~\val{previousLoads}~\function{do}                                     \\
            \codeline{4}  & \tab \tab \changedrule{\function{if}~\memsec(\val{a}') = \botsec~\function{then}}                                                \\
            \codeline{5}  & \tab \tab \tab \micro := \update(\micro, \leaked{\loc'}, \leaked{\val{a}'}, \leaked{\val{a}}, \leaked{\mem(\val{a}') = \val{a}}) \\
            \codeline{6}  & \tab \micro := \appendToLoadHistory(\micro, \angles{\leaked{\loc}, \leaked{\val{a}}})                                            \\
            \codeline{7}  & \tab \function{return}~\micro                                                                                                    \\
            \\
            \codeline{8}  & \function{function}~\prefetch(\micro, \loc, \mem, \val{a})\colon                                                                 \\
            \codeline{9}  & \tab \val{stride} := \predictStride(\micro, \loc)                                                                                \\
            \codeline{10} & \tab \function{if}~\leaked{\val{stride} \neq 0}~\function{then}                                                                  \\
            \codeline{11} & \tab \tab \val{address} := \val{a} + \val{stride} \times 16                                                                      \\
            \codeline{12} & \tab \tab \function{if}~\leaked{\val{address}[47\colon\!18] = \val{a}[47\colon\!18]}~\function{then}                             \\
            \codeline{13} & \tab \tab \tab \changedrule{\function{if}~\memsec(\val{address}) = \botsec~\function{then}}                                      \\
            \codeline{14} & \tab \tab \tab \tab \val{target} := \mem(\val{address})                                                                          \\
            \codeline{15} & \tab \tab \tab \tab \function{if}~\leaked{\pointerHeuristic(\val{target})}~\function{then}                                       \\
            \codeline{16} & \tab \tab \tab \tab \tab \micro := \update(\micro, \leaked{\val{target}})                                                        \\
            \codeline{17} & \tab \function{return}~\micro                                                                                                    \\
            \\
            \codeline{18} & \function{function}~\pointerHeuristic(\val{target})\colon                                                                        \\
            \codeline{19} & \tab \function{if}~\val{LAM}~\function{then}                                                                                     \\
            \codeline{20} & \tab \tab \function{return}~\val{target}[63] = 0~\function{and}~\val{target}[47] = 0                                             \\
            \codeline{21} & \tab \function{else}                                                                                                             \\
            \codeline{22} & \tab \tab \function{return}~\val{target}[63\colon\!47] = 0                                                                       \\
        \end{array}
    \end{equation*}
}

\newcommand\secureCdp{
    \begin{equation*}
        \begin{array}{rl}
            \codeline{1}  & \function{function}~\prefetch(\micro, \loc, \mem, \val{a})\colon                                              \\
            \codeline{2}  & \tab \function{return}~\prefetchrec(\micro, \loc, \mem, \val{a}, 0)                                           \\
            \\
            \codeline{3}  & \function{function}~\prefetchrec(\micro, \loc, \mem, \val{a}, \val{depth})\colon                              \\
            \codeline{4}  & \tab \function{if}~\val{depth} > \val{maxRecursionDepth}~\function{then}                                      \\
            \codeline{5}  & \tab \tab \function{return}~\micro                                                                            \\
            \codeline{6}  & \tab \val{base} := \val{a} - \val{a} \bmod{n}                                                                 \\
            \codeline{7}  & \tab \val{end} := \val{base} + n - 1                                                                          \\
            \codeline{8}  & \tab \function{for}~\val{address} := \val{base}~\function{to}~\val{end}~\function{do}                         \\
            \codeline{9}  & \tab \tab \changedrule{\function{if}~\memsec(\val{address}) = \botsec~\function{then}}                        \\
            \codeline{10} & \tab \tab \tab \val{target} := \mem(\val{address})                                                            \\
            \codeline{11} & \tab \tab \tab \function{if}~\leaked{\pointerHeuristic(\val{target}, \val{address})}~\function{then}          \\
            \codeline{12} & \tab \tab \tab \tab \function{for}~\val{line} := \val{0}~\function{to}~\val{nextLinePrefetches}~\function{do} \\
            \codeline{13} & \tab \tab \tab \tab \tab \micro := \update(\micro, \leaked{\val{target} + n \times line})                     \\
            \codeline{14} & \tab \tab \tab \tab \tab  \micro := \prefetchrec(\micro, \loc, \mem,                                          \\
                          & \tab \tab \tab \tab \tab \tab \tab \tab \val{target} + n \times line, \val{depth} + 1)                        \\
            \codeline{15} & \tab \function{return}~\micro                                                                                 \\
            \\
            \codeline{16} & \function{function}~\pointerHeuristic(\val{target}, \val{address})\colon                                      \\
            \codeline{17} & \tab \function{return}~\val{target}.highBits(\val{compareBits})                                               \\
                          & \tab \tab \tab \tab \tab = \val{address}.highBits(\val{compareBits})                                          \\
        \end{array}
    \end{equation*}
}

\newcommand\secureImp{
    \begin{equation*}
        \begin{array}{rl}
            \codeline{1}  & \function{function}~\train(\micro, \loc, \mem, \val{a})\colon                                                                       \\
            \codeline{2}  & \tab \micro := \update(\micro, \angles{\leaked{\loc}, \leaked{\val{a}}})                                                            \\
            \codeline{3}  & \tab \function{if}~\function{not}~\leaked{\isStrideAccess(\micro, \loc, \val{a})}~\function{then}                                   \\
            \codeline{4}  & \tab \tab \function{for}~\angles{\loc', \val{a_1}, \val{a_2}}~\function{in}~\getIndexAddresses(\micro)~\function{do}                \\
            \codeline{5}  & \tab \tab \tab \changedrule{\function{if}~\memsec(\val{a_1}) = \botsec~\function{and}~\memsec(\val{a_2}) = \botsec~\function{then}} \\
            \codeline{6}  & \tab \tab \tab \tab \function{for}~\val{s}~\function{in}~\shiftvalues~\function{do}                                                 \\
            \codeline{7}  & \tab \tab \tab \tab \tab \function{for}~\val{a_c}~\function{in}~\getCandidates(\micro)~\function{do}                                \\
            \codeline{8}  & \tab \tab \tab \tab \tab \tab \function{if}~\leaked{\val{a} - (\mem(\val{a_2}) \ll \val{s})                                         \\
                          & \tab \tab \tab \tab \tab \tab \tab = \val{a_c} - (\mem(\val{a_1}) \ll \val{s})}                                                     \\
                          & \tab \tab \tab \tab \tab \tab \function{then}                                                                                       \\
            \codeline{9}  & \tab \tab \tab \tab \tab \tab \tab \micro := \setIndirectPattern(\micro, \leaked{\loc'}, \leaked{\val{s}},                          \\
                          & \tab \tab \tab \tab \tab \tab \tab \tab \tab \tab \leaked{\val{a} - (\mem(\val{a_2}) \ll \val{s})})                                 \\
            \codeline{10} & \tab \tab \function{for}~\angles{\loc', \val{s}, \val{b}, \val{a_i}}~\function{in}~\getIndirectPatterns(\micro)~\function{do}       \\
            \codeline{11} & \tab \tab \tab \changedrule{\function{if}~\memsec(\val{a_i}) = \botsec~\function{then}}                                             \\
            \codeline{12} & \tab \tab \tab \tab \function{if}~\leaked{\val{a} = \val{b} + (\mem(\val{a_i}) \ll \val{s})}~\function{then}                        \\
            \codeline{13} & \tab \tab \tab \tab \tab \micro := \increaseConfidence(\micro, \leaked{\loc'})                                                      \\
            \codeline{14} & \tab \tab \micro := \appendCandidate(\micro, \leaked{\val{a}})                                                                      \\
            \codeline{15} & \tab \function{return}~\micro                                                                                                       \\
            \\
            \codeline{16} & \function{function}~\prefetch(\micro, \loc, \mem, \val{a})\colon                                                                    \\
            \codeline{17} & \tab \function{if}~\leaked{\isConfident(\micro, \loc)}~\function{then}                                                              \\
            \codeline{18} & \tab \tab \val{address} := \val{a} + \getDistance(\micro, \loc)                                                                     \\
            \codeline{19} & \tab \tab \changedrule{\function{if}~\memsec(\val{address}) = \botsec~\function{then}}                                              \\
            \codeline{20} & \tab \tab \tab \angles{\val{s}, \val{b}} := \getIndirectPattern(\micro, \loc)                                                       \\
            \codeline{21} & \tab \tab \tab \val{target} := \val{b} + (\mem(\val{address}) \ll \val{s})                                                          \\
            \codeline{22} & \tab \tab \tab \micro := \update(\micro, \leaked{\val{address}}, \leaked{\val{target}})                                             \\
            \codeline{23} & \tab \function{return}~\micro                                                                                                       \\
        \end{array}
    \end{equation*}
}

%% file: sections/00-abstract.tex
Data memory-dependent prefetchers (DMPs) are a microarchitectural optimization recently introduced in mainstream desktop and server CPUs.
Although DMPs offer performance improvements, they have been shown to enable side-channel leakage of secrets, even from constant-time code that has been hardened against Spectre attacks.
Unfortunately, existing defenses either incur significant performance overhead or, as we show, fail to prevent all DMP leakage.

In this work, we address this gap by designing a defense that is both secure and efficient.
First, we propose a classification of DMP leakage, categorize existing and novel attacks, and analyze which leakage classes are mitigated by existing defenses.
We validate this analysis empirically by implementing the first open-source DMPs on a RISC-V processor, showing that the only existing defense that aims to preserve DMP performance fails to protect against entire leakage classes.
Next, we propose SecDMP, a provably secure and efficient hardware-based defense.
SecDMP requires secrets to be marked in memory and ensures that DMPs never encode these secrets in the microarchitecture---without compromising performance for public data.
We formally prove that SecDMP introduces no leakage beyond a baseline semantics without DMPs.
Notably, when combined with speculative-execution defenses, SecDMP restores complete microarchitectural security for constant-time code.
Finally, our empirical evaluation shows that SecDMP prevents all DMP leakage while incurring minimal performance overhead and hardware cost.

%% file: sections/10-introduction.tex
\section{Introduction}
Microarchitectural side channels pose a significant challenge to secure software.
Since the first timing attack~\cite{kocher96timing}, a growing body of work has shown that many microarchitectural components and optimizations leak sensitive data.
Well-known leakage sources include instruction~\cite{aciicmez07icache} and data caches~\cite{bernstein05cache}, port contention~\cite{aldaya19port}, branch predictor state~\cite{yu23allyourpc}, and (classical) hardware prefetchers~\cite{bhattacharya12pref}.
Additionally, modern CPUs employ speculative execution induced by mechanisms such as branch prediction and memory disambiguation, which Spectre attacks~\cite{kocher19spectre,horn18spectrev4} exploit to leak secrets during transient execution caused by mispredictions.

To defend against microarchitectural side-channel attacks, the most commonly adopted approach is to implement software in a constant-time fashion~\cite{bernstein12nacl,pornin18constanttime,almeida17jasmin,almeida16verifying}.
This approach requires that secret data is never used as operands of computations that may produce operand-dependent observable microarchitectural effects.
Specifically, secrets must not be used as branch conditions, memory addresses, or operands of variable-time instructions (\eg{} divisions).
While constant-time programming itself does not protect against Spectre, defenses have been proposed in both software and hardware that secure constant-time code against these attacks~\cite{cauligi20ctfoundations,daniel23prospect}.

\textbf{\Acfp{DMP}.}
Recent work has shown that new generations of Apple, Intel, and Arm CPUs implement a novel type of prefetcher, known as a \acf{DMP}~\cite{chen24gofetch,vicarte22augury,wang25peekawalk,arm25dmp}.
Unlike classical prefetchers, which base their predictions on the history of accessed memory addresses, \acp{DMP} also consider the \emph{loaded data} itself.
This has severe security implications, as \acp{DMP} can encode secrets into microarchitectural state without the program ever explicitly (not even transiently) performing this encoding.
For instance, GoFetch~\cite{chen24gofetch} showed that the \ac{DMP} on modern Apple processors can leak secrets from constant-time cryptographic implementations.
Mitigating this type of leakage is particularly challenging, as it does not rely on any specific property of the victim program aside from containing sensitive data in memory.
The most practical proposed mitigations are to disable the \ac{DMP}, either completely~\cite{vicarte22augury,chen24gofetch,wang25peekawalk} or selectively~\cite{schluter25prefence,chen24gofetch}.

\textbf{This paper.}
In this work, we propose a systematic and formal approach to analyzing and mitigating \ac{DMP} leakage.
We start by classifying how DMPs access and leak secret data.
Our main insight is that DMPs can be activated not only on data \emph{architecturally} loaded by the victim program, but also \emph{indirectly} via
\begin{enumerate*}[label=(\arabic*)]
    \item adjacent data within a cache block,
    \item data loaded transiently, and
    \item data loaded by prefetchers (either the DMP itself or other prefetchers).
\end{enumerate*}
We analyze which commercial DMPs are prone to each leakage class based on prior reverse-engineering efforts, report which classes have been demonstrated in prior attacks, and identify the remaining unknown design points that could enable new attacks.
Importantly, we show that selective disabling~\cite{schluter25prefence} does not mitigate all leakage classes, making it an error-prone defense whose security is hard to reason about.
In addition to this informal analysis, we formally model existing DMP designs as extensions of a standard speculative-execution semantics.

\textbf{SecDMP.}
To overcome the limitations of existing defenses, we propose \secdmp{}, a new hardware-based defense mechanism for securing DMPs.
\secdmp{} requires that secrets are marked in memory---a technique employed in many Spectre defenses~\cite{daniel23prospect,yu19oisa,schwarz20context,fustos19spectreguard}---and prevents prefetches and state updates that are based on secret data.
Using our formal model, we prove that \secdmp{} prevents all secret leakage through the DMP.
Specifically, when combined with a baseline speculative semantics without a DMP, \secdmp{} \emph{does not introduce additional leakage} beyond the baseline.
In particular, when combined with a strong Spectre defense such as \prospect{}~\cite{daniel23prospect}, \secdmp{} restores full microarchitectural security for constant-time cryptographic code.

On the practical side, we implement several DMPs modeled after commercial designs~\cite{vicarte22augury,chen24gofetch,wang25peekawalk} and academic proposals~\cite{cooksey02dmp,yu15imp} on the Proteus \riscv CPU~\cite{bognar23proteus}.
Using our implementations, we demonstrate all DMP leakage classes through proof-of-concept (PoC) programs, based on existing and novel attacks.
To the best of our knowledge, these are the first open-source DMP implementations and PoC DMP attacks on \riscv.
To assess \secdmp{}'s cost and effectiveness, we evaluate two configurations: \secdmp{} on a speculative out-of-order baseline, which only prevents leakage through the DMP; and \secdmp{} combined with \prospect{}~\cite{daniel23prospect}, restoring full microarchitectural security for constant-time code.
Results show that \secdmp{} incurs minimal hardware cost and performance overhead: it increases chip area by \SI{0.5}{\percent} without affecting the critical path, and has an average performance overhead of \SI{0.36}{\percent} on our benchmarks with a maximum of \SI{3.3}{\percent}.
Furthermore, we experimentally confirm that \secdmp{} eliminates all secret leakage across our benchmark programs.

\textbf{Contributions.}
In summary, our contributions are:

\begin{itemize}
    \item We classify how \acp{DMP} leak secrets (\cref{sec:leakage-classification}). Our classification reveals that current mitigations are insufficient or expensive, which we experimentally validate.
    \item We present a modular formal model of DMPs that we instantiate with existing designs (\cref{sec:formalization}).
    \item We propose \secdmp{}, a hardware-based defense mechanism that prevents DMPs from leaking secrets (\cref{sec:overview}), with a formal proof of security (\cref{sec:proofs}).
    \item We implement the first open-source DMPs on a research processor, and implement and evaluate \secdmp{} and its combination with \prospect{}. Our results show that \secdmp{} is effective, incurring low hardware cost and minimal performance overhead (\cref{sec:implementation-evaluation}).
\end{itemize}

\textbf{Availability.}
All artifacts are available at \artifactrepo{}.
\extendedElse{This paper is an extended version of~\cite{sprengers27secdmp}, containing the full formalization and security proofs.}{An extended version of this paper with the full formalization and security proofs is available at~\cite{sprengers27secdmpextended}.}

%% file: sections/20-background.tex
\section{Background}\label{sec:background}

\subsection{Microarchitectural Side-Channel Attacks}\label{sec:side-channel-attacks}

Microarchitectural side-channel attacks expose a victim's secrets by observing secret-dependent effects on hardware components shared between attacker and victim, such as caches, branch predictors, and prefetchers~\cite{zhang24sidechannels,ge18survey,szefer19survey}.
Notably, a wide range of attacks expose a victim's memory usage by (partially) leaking accessed memory addresses, which might depend on secret data.
Prominent examples include cache attacks such as Flush+Reload~\cite{yarom14flushreload} and Prime+Probe~\cite{osvik06cache}, address-translation attacks targeting TLBs and page walks~\cite{gras18tlbleed,tatar22tlbdr,gras17anc,wang25peekawalk}, and contention-based attacks on shared interconnects~\cite{paccagnella21lotr}.
While most of these attacks leak information within a single CPU and at cache-block granularity, prior work has demonstrated leakage across CPUs~\cite{irazoqui16crossprocessor,pessl16drama} and even recovery of fine-grained address bits~\cite{sieck23teejam,moghimi19memjam,yarom17cachebleed}.

\subsubsection{Constant-time programming}
Constant-time is a programming discipline addressing the root cause of side-channel leakage by ensuring that secrets do not influence observable microarchitectural behavior~\cite{almeida16verifying}, and has become the standard practice for securing cryptographic software~\cite{bernstein12nacl,pornin18constanttime,almeida17jasmin}.
Concretely, it requires memory accesses, control flow, and variable-latency instruction operands to be secret-independent.

\subsubsection{Spectre}
Speculative execution improves performance by executing instructions ahead of time based on predictions.
Modern processors include various speculation mechanisms~\cite{xiong21survey,canella19systematic}, such as conditional and indirect branch prediction~\cite{kocher19spectre}, memory disambiguation~\cite{horn18spectrev4}, load value prediction~\cite{kim25flop}, and load address prediction~\cite{kim25slap}.
Instructions executed based on incorrect predictions, known as \emph{transient} instructions, are architecturally reverted to preserve functional correctness.
However, their microarchitectural effects remain and can be observed through side-channel attacks.
Spectre attacks~\cite{kocher19spectre} exploit this by deliberately causing a victim program to mispredict and transiently execute instructions that produce secret-dependent microarchitectural effects.

\subsubsection{Spectre defenses}
Transient execution attacks showed that classical constant-time programming is insufficient: even architecturally constant-time programs may leak secrets via misspeculated execution~\cite{cauligi20ctfoundations}.
As a result, the constant-time discipline has been extended to \emph{speculative constant-time}~\cite{cauligi20ctfoundations}, which additionally requires the protection of secrets during transient execution.
However, enforcing speculative constant-time purely in software is extremely challenging.
Consequently, a lot of research has focused on hardware-based defenses against leakage during speculation~\cite{hu21hardware}.
Among these, a particularly promising class relies on explicitly informing the hardware which data should be protected~\cite{daniel23prospect,yu19oisa,schwarz20context,fustos19spectreguard}.
This enables the hardware to restrict speculation on secrets while preserving full performance for public data.

\subsection{Hardware Prefetchers}
Prefetching is an optimization technique to reduce memory latency.
Hardware prefetchers predict which addresses will be accessed by a program in the near future and bring the corresponding data into the cache before they are requested by the program.
If the prediction is correct, prefetching increases performance as it avoids the penalty of a cache miss.

Hardware prefetchers can be categorized into two groups: \emph{classical prefetchers} and \emph{\acfp{DMP}}~\cite{vicarte21pandora,vicarte22augury,chen24gofetch}.
Classical prefetchers are widely implemented on commercial processors and make predictions by monitoring a program's accessed memory addresses.
For instance, a \emph{classical stride prefetcher} detects array traversals with a constant stride, prefetching elements ahead in the array~\cite{vicarte22augury}.
\emph{\acp{DMP}}, however, have only recently been implemented on commercial processors following years of academic proposals~\cite{vicarte22augury,chen24gofetch}.
DMPs base their predictions not only on the accessed memory addresses but also on the \emph{data itself} stored in memory.
This enables them to identify irregular access patterns for which classical prefetchers fail~\cite{vicarte21pandora}.
For example, they can speed up pointer dereferences by scanning memory for pointers and prefetching their targets.
In 2022, Vicarte \etal~\cite{vicarte22augury} demonstrated for the first time the existence of a \ac{DMP} on a commercial processor, namely the Apple M1.
As of now, several Apple~\cite{vicarte22augury,schluter23fetchbench,chen24gofetch}, Intel~\cite{wang25peekawalk}, and Arm~\cite{arm25dmp} processors are known to include a DMP.

Next, we identify three key mechanisms in existing DMP designs based on the literature~\cite{cooksey02dmp,yu15imp} and reverse-engineering~\cite{vicarte22augury,schluter23fetchbench,chen24gofetch,wang25peekawalk}.
We discuss concrete examples of commercial and academic designs in \cref{sec:dmp-models}.

\subsubsection{Address heuristics}
DMPs read data to derive target addresses to prefetch, optionally applying a heuristic to filter out invalid targets.
For example, \emph{pointer-chasing DMPs} scan cache blocks for values that resemble valid pointers and prefetch the corresponding addresses~\cite{vicarte22augury,wang25peekawalk}, relying on \emph{pointer heuristics} to distinguish pointers from other data.

\subsubsection{Recursion and chain prefetching}
DMPs typically read data loaded directly by the program~\cite{cooksey02dmp,yu15imp,vicarte22augury,wang25peekawalk}, but they may also read data loaded by a \emph{prefetcher}.
First, DMPs can scan data that they prefetched themselves, which is known as \emph{recursive prefetching}~\cite{cooksey02dmp}.
For instance, such DMPs can recursively prefetch entire linked data structures ahead of time.
Second, DMPs can also scan data prefetched by other prefetchers~\cite{chen24gofetch}; we call this \emph{chain prefetching}.

\subsubsection{Activation patterns}
Some DMPs only activate after detecting a specific memory access pattern~\cite{yu15imp,wang25peekawalk}.
They first train on the program's memory accesses to learn a data-dependent pattern, then prefetch ahead in the learned pattern.
These DMPs are often recursive, as data-dependent patterns require reading data ahead in the pattern to derive the next prefetch target.
For example, the Intel DMP can traverse an array of pointers~\cite{wang25peekawalk}: it first prefetches ahead in the array, then reads the returned pointer and prefetches its target address.

\subsection{Security Risks of DMPs} \label{sec:dmp-security-risks}
The effects of prefetchers on microarchitectural state match those of load requests: prefetching a (virtual) address first triggers an address translation, and, if successful, fetches the data into the cache.
As a result, any microarchitectural attack that can detect loaded addresses can be used to infer prefetch targets, including standard cache attacks.
Additionally, some attacks exploit prefetchers' \emph{internal states} directly~\cite{chen23afterimage,hetterich25shadowload}.
As such, any secret data that influences the behavior of a prefetcher (\ie the information analyzed to determine its internal state and prefetch targets) can potentially leak.

Vicarte \etal~\cite{vicarte21pandora} were the first to warn about the security risks of DMPs, showing that the academic IMP DMP~\cite{yu15imp} can serve as a universal read gadget through out-of-bounds prefetching.
Later, Augury~\cite{vicarte22augury} identified the first commercial DMP on Apple processors and showed that it can leak data beyond the bounds of an array of pointers.
Barthe \etal formalized this leakage behavior in LmSpec~\cite{barthe24future}, detecting leakage in constant-time cryptographic implementations under this model.
Subsequently, GoFetch~\cite{chen24gofetch} demonstrated how the Apple DMP can be exploited in practice to leak secret keys from constant-time code, performing attacks on several cryptographic primitives.
Because Apple's pointer heuristic is fairly strict, the GoFetch attacks are limited to specific scenarios where the attacker forces a pointer to appear in the victim's memory only for some variants of the secret.
Peek-a-Walk~\cite{wang25peekawalk} showed that an attacker can deliberately mistrain the Intel DMP so that it activates during the victim's execution and leaks data at an address of the attacker's choosing within certain constraints.
Their attack demonstrates that DMPs with less restrictive pointer heuristics have more potential to expose secrets.
In this case, the impact is limited, as it is not possible to train the Intel DMP across processes~\cite{wang25peekawalk}.

To address the identified leakage caused by DMPs, a number of countermeasures were proposed in the literature.
We discuss and analyze these in \cref{sec:existing-defenses}.

%% file: sections/25-threat-model.tex
\section{Threat Model}

In line with related work, we assume an attacker who can monitor all microarchitectural side channels introduced in \cref{sec:side-channel-attacks}, including those induced by speculative execution.
The attacker is abstract and may operate from an intra-process sandbox~\cite{wang25peekawalk}, a sibling SMT context, or another processor core.
Our goal is to protect against side-channel leakage introduced by DMPs---specifically, we care about leakage relative to a baseline with DMPs disabled.
In \cref{sec:secdmp-combination-secure-speculation}, we additionally show that full microarchitectural security is achieved when complementary countermeasures are in place for other leakage, either in software (speculative constant-time~\cite{cauligi20ctfoundations}) or in hardware (secure speculation defenses~\cite{daniel23prospect}).

%% file: sections/30-leakage-classification.tex
\section{Classification of DMP Leakage}
\label{sec:leakage-classification}
To understand the different ways in which DMPs may leak secret data, we propose a leakage classification based on existing DMP designs.
We focus our analysis on designs described in the literature, consisting of the commercial designs of the Apple~\cite{vicarte22augury,schluter23fetchbench,chen24gofetch} and Intel~\cite{wang25peekawalk} DMPs as reverse-engineered by prior work, and two academic designs~\cite{cooksey02dmp,yu15imp} with unique properties.
First, we present our classification in \cref{sec:leakage-classes} and describe the existing DMP designs and how they fit into the leakage classes in \cref{sec:dmp-models}.
Next, we categorize existing DMP attacks into our leakage classes, devise new attack techniques, and demonstrate the leakage and attacks in practice (\cref{sec:leakage-exploitation}).
For the demonstration, we implement several DMPs on a \riscv processor.
Finally, we describe existing defenses (\cref{sec:existing-defenses}) and provide an informal analysis and empirical validation of which leakage classes they mitigate (\cref{sec:defense-classification}).
Our results show that the most promising defense, selective disabling, does not mitigate all leakage, highlighting the need for secure DMP defenses.

\subsection{DMP Leakage Classes}
\label{sec:leakage-classes}
All attacks that exploit DMPs involve at least three steps:
\begin{enumerate*}[label=(\arabic*)]
    \item the DMP scans secret data,
    \item the data is encoded into the microarchitecture, and
    \item the attacker uses existing side-channel attacks to recover it~\cite{vicarte22augury}.
\end{enumerate*}
As discussed previously, DMPs have the potential to leak any data they scan in step (1), making this the crucial step to reason about DMP leakage.
We therefore classify DMP leakage based on the sources of data that the DMP scans for both training and prefetching.

\subsubsection{Architecturally accessed data (Arch)}
The \emph{Arch} class represents all data that is architecturally accessed (via a non-transient instruction) by the program.
This is the baseline source of data for any DMP: when the program architecturally loads data from memory, the DMP may scan it to derive prefetch targets, potentially leaking the data.

\subsubsection{Indirectly accessed data (Ind)}
In addition to \emph{Arch}, DMPs can also scan and leak data that the program never architecturally accesses, divided into three subclasses.

\paragraph{Sub-cache-block data (Sub)}
When accessing memory, the processor typically loads an entire cache block, \eg a 64-byte cache block for a 4-byte load instruction.
The data adjacent to the architecturally accessed value in the cache block may also be scanned by DMPs.

\paragraph{Transient loads (Trans)}
Due to incorrect speculative execution, the processor may issue load requests that would never occur during architectural execution.
Some DMPs may read the data accessed by such transient loads.
We expect that no real-world DMP distinguishes between transient and non-transient loads, and thus leaks through the \emph{Trans} class.
Distinguishing them would require complex hardware and reduce performance, as at the time a speculative load is executed, it is not yet known whether it will be squashed.

\paragraph{Prefetched data (Pref)}
Finally, DMPs can activate on data without any load instruction (architectural or transient) explicitly requesting it.
This happens when the data was accessed through prefetching.
There are two ways in which a DMP can scan data belonging to the \emph{Pref} class: recursion and chain prefetching; reading data prefetched by itself or by another prefetcher, respectively.

\subsection{DMP Designs and Classification} \label{sec:dmp-models}
We now describe the four DMP designs of interest and categorize their leakage into the classes above.
The results are summarized in \cref{tab:leakage-classification}.

\begin{table}
    \centering
    \caption{Leakage classes applicable to different DMP designs, explicitly documented (\CIRCLE) or presumed (\Circle).}
    \label{tab:leakage-classification}
    \begin{tabular}{l c c c c}\toprule
        \multirow{2}{*}{DMP}                           & \multirow{2}{*}{\emph{Arch}} & \multicolumn{3}{c}{\emph{Ind}}                              \\ \cmidrule{3-5}
                                                       &                              & \emph{Sub}                     & \emph{Trans} & \emph{Pref} \\ \midrule
        Apple DMP~\cite{vicarte22augury,chen24gofetch} & \CIRCLE                      & \CIRCLE                        & \CIRCLE      & \CIRCLE     \\
        Intel DMP~\cite{wang25peekawalk}               & \CIRCLE                      &                                & \Circle      & \CIRCLE     \\
        CDP~\cite{cooksey02dmp}                        & \CIRCLE                      & \CIRCLE                        & \Circle      & \CIRCLE     \\
        IMP~\cite{yu15imp}                             & \CIRCLE                      &                                & \Circle      & \CIRCLE     \\ \bottomrule
    \end{tabular}
\end{table}

\subsubsection{Apple DMP}\label{sec:dmp-models-apple}
The Apple DMP~\cite{chen24gofetch} is a pointer-chasing DMP.
When a data block is loaded into the L1 cache, the DMP applies a pointer heuristic to every 64-bit value in the block, referred to as candidate pointers.
The DMP scans all blocks loaded into the L1 cache for candidate pointers, including from transient loads, hence leaking through the \emph{Arch}, \emph{Sub}, and \emph{Trans} classes.
A prefetch request is then initiated for each candidate pointer that passes the heuristic, which checks whether bits 55--32 of the candidate pointer match those of the current block address.
While GoFetch~\cite{chen24gofetch} does not report any recursive prefetching behavior, it does report the DMP triggering a next-line prefetcher, fetching two adjacent blocks.

GoFetch additionally observes behavior that is not explained by its model: looping through an array of random values triggers the DMP to dereference pointers located after the array, demonstrating \emph{Pref} leakage.
While the exact mechanism responsible for this leakage is unknown, we believe it to be chain prefetching, where a stride prefetcher first fetches data beyond the array, which the DMP then scans for pointers.

\subsubsection{Intel DMP}\label{sec:dmp-models-intel}
The Intel DMP~\cite{wang25peekawalk} is a pointer-chasing DMP requiring a specific activation pattern: accessing and dereferencing the pointers inside an array of pointers (AoP).
During its training phase, the DMP learns the stride of array accesses issued in a loop (detected by storing and comparing the 10 lowest bits of the program counter (PC) of the load instructions), and checks whether the accessed array entries are dereferenced.
The DMP thus analyzes loaded addresses and values, leaking through the \emph{Arch} class.
Whether the Intel DMP also reads transient loads is unknown, but we presume \emph{Trans} leakage to be present (cf. \cref{sec:leakage-classes}).

When the DMP detects the AoP pattern and learns the stride \val{k}, it activates whenever it encounters a load instruction with the matching 10 lowest PC bits: it prefetches \val{k*16} entries ahead in the array, applies a pointer heuristic to the value when the prefetch returns, and dereferences the value if the heuristic succeeds.
This recursive behavior thus enables leakage through the \emph{Pref} class: the DMP can read data it prefetched.

Finally, the Intel DMP limits the stride by requiring the current entry and the entry to prefetch to be located in the same \SI{256}{\kibi\byte} aligned memory region, and its pointer heuristic restricts the prefetching to user memory.

\subsubsection{CDP}\label{sec:dmp-models-cdp}
The content-directed data prefetcher (CDP)~\cite{cooksey02dmp}, predating commercial designs, is a recursive variant of the Apple DMP: it scans cache blocks for pointers from both load and prefetch requests, leaking through the \emph{Arch}, \emph{Sub}, and \emph{Pref} classes.
CDP's design does not explicitly mention transient loads; however, since the DMP monitors all memory traffic for a specific cache level, it would also scan transient loads, capturing the \emph{Trans} class.
CDP applies a pointer heuristic to each candidate pointer of the scanned cache block, checking whether a configurable number of upper bits of the candidate pointer and the address at which it is stored match.
CDP also features a configurable number of next lines to prefetch alongside any target address that passes the heuristic.
To prevent recursive prefetching from continuing indefinitely, it has a configurable maximum recursion depth.

\subsubsection{IMP}\label{sec:dmp-models-imp}
The indirect memory prefetcher (IMP)~\cite{yu15imp} monitors memory traffic for the pattern \val{A[B[i]]}, \ie indexing one array with values from another.
After detecting this pattern, IMP prefetches and reads values ahead in the first array, adds them to the base of the second array, and prefetches the resulting address.
This recursive behavior enables leakage through the \emph{Arch} and \emph{Pref} classes.
IMP's design does not mention transient loads, but when implemented as described, it would not distinguish them from non-transient loads, leaking through the \emph{Trans} class.
In addition to the 2-level variant, the authors proposed a 3-level variant \val{A[B[C[i]]]}.
Notably, IMP does not use a pointer heuristic to filter prefetch targets.

\subsection{Exploiting Leakage Classes} \label{sec:leakage-exploitation}
We now illustrate how programs can leak secrets through each leakage class, classifying existing DMP attacks and devising new techniques.
We demonstrate the leakage classes and attacks in practice by crafting simple test programs for the Proteus \riscv processor~\cite{bognar23proteus}.
To enable the demonstration, we extended Proteus with the four DMPs from \cref{sec:dmp-models}, \ie the Apple DMP~\cite{vicarte22augury,chen24gofetch}, Intel DMP~\cite{wang25peekawalk}, CDP~\cite{cooksey02dmp} and IMP~\cite{yu15imp}, modeling them as faithfully as possible after the descriptions in the literature.
The test programs and results of the leakage demonstration are listed in \cref{tab:programs-classification}.
For more details regarding the implementation and evaluation methodology, we refer to \cref{sec:implementation-evaluation}.

\newcommand{\xcat}[2]{\multirow{#1}{*}{\rotatebox[origin=c]{90}{\footnotesize\textsl{#2}}}}
\definecolor{existingbg}{RGB}{252,228,228}
\newcommand{\hlexisting}[1]{{\setlength{\fboxsep}{0pt}\smash{\colorbox{existingbg}{\strut #1}}}}
\begin{table}
    \centering
    \caption{Leakage classes targeted by each of our test programs. Programs adapted from existing attacks are \protect\hlexisting{highlighted}.}
    \label{tab:programs-classification}
    \begin{tabular}{c >{\columncolor{white}[3pt][\tabcolsep]}l c c c c}
        \toprule
                                                & \multirow{2}{*}{Program}           & \multirow{2}{*}{\emph{Arch}} & \multicolumn{3}{c}{\emph{Ind}}                              \\
        \cmidrule{4-6}
                                                & \multicolumn{1}{l}{}               &                              & \emph{Sub}                     & \emph{Trans} & \emph{Pref} \\
        \midrule
        \rowcolor{existingbg} \cellcolor{white} & IMP Leak~\cite{vicarte21pandora}   &                              &                                &              & \CIRCLE     \\
        \rowcolor{existingbg} \cellcolor{white} & Augury PoC~\cite{vicarte22augury}  &                              &                                &              & \CIRCLE     \\
        \rowcolor{existingbg} \cellcolor{white} & GoFetch PoC~\cite{chen24gofetch}   & \CIRCLE                      &                                &              &             \\
        \rowcolor{existingbg} \cellcolor{white} & Peek-a-Walk~\cite{wang25peekawalk} &                              &                                &              & \CIRCLE     \\
                                                & DMP-Spectre                        &                              &                                & \CIRCLE      &             \\
        \xcat{-6}{Attacks}                      & Recursion                          &                              &                                &              & \CIRCLE     \\
        \midrule
        \addlinespace[5pt]
        \xcat{3}{General}                       & Load Secret                        & \CIRCLE                      &                                &              &             \\
                                                & Same Cache Block                   &                              & \CIRCLE                        &              &             \\
                                                & Transient Load                     &                              &                                & \CIRCLE      &             \\
        \addlinespace[2pt]
        \bottomrule
    \end{tabular}
\end{table}

\subsubsection{Classification of existing attacks}
We revisit the attacks from \cref{sec:dmp-security-risks} and classify them according to our leakage classes.
Vicarte \etal~\cite{vicarte21pandora} showed that the 3-level IMP~\cite{yu15imp} can serve as a universal read gadget through out-of-bounds prefetching, hence leaking through the \emph{Pref} class.
Augury~\cite{vicarte22augury} showed that pointer-chasing DMPs that look for an AoP pattern can leak secrets located after the array's bounds, also falling in the \emph{Pref} class.
These two cases highlight the dangers of the recursive behavior of pattern-activated DMPs: when they prefetch too far ahead in the pattern, they may access data out-of-bounds and use it to derive further prefetch targets, thereby leaking it.
In GoFetch's attacks~\cite{chen24gofetch}, the DMP leaks whether an architecturally loaded secret resembles a valid pointer, leaking through \emph{Arch}.
In the Peek-a-Walk DMP attack~\cite{wang25peekawalk}, an attacker deliberately mistrains the Intel DMP into activating during the victim's runtime, causing it to load, scan, and leak a secret located at an attacker-influenced address, \ie \emph{Pref} leakage.

We demonstrate these attacks via the following programs: \underline{IMP Leak} demonstrates a 2-level IMP attack, \underline{Augury PoC} shows the out-of-bounds leakage of AoP DMPs, \underline{GoFetch PoC} replicates GoFetch's proof-of-concept attack on OpenSSL's constant-time conditional swap, and \underline{Peek-a-Walk} mistrains the Intel DMP into activating and leaking a secret.

\subsubsection{Novel attacks exploiting leakage classes} \label{sec:new-dmp-attacks}
We now devise novel attacks that rely on speculative execution and recursive prefetching, leaking through \emph{Trans} and \emph{Pref}, respectively.

\paragraph{DMP-Spectre}
Attackers can exploit DMPs that are capable of leaking through the \emph{Trans} class (see \cref{tab:leakage-classification}) in a manner similar to Spectre attacks.
While typical Spectre attacks require a gadget that transiently loads a secret and subsequently leaks it through a follow-up instruction~\cite{xiong21survey}, DMPs can replace this follow-up instruction.
For example, by deliberately mistraining prediction mechanisms, the attacker can induce the victim to transiently load a secret, which the DMP can then scan and leak.
We demonstrate the potential of these attacks with the \underline{DMP-Spectre} program, which loops through a public array and triggers a transient load of a secret located beyond the array bounds due to branch misprediction.

\paragraph{Recursion and chain prefetching}
In the case of a recursive pointer-chasing DMP, an attacker can embed the address of a secret in user input.
The victim will then load this input, causing the DMP to interpret it as a pointer and fetch the secret data.
Because the DMP is recursive, it reads the prefetched secret, potentially leaking it.
This attack vector was also mentioned but not exploited by Schlüter \etal~\cite{schluter25prefence}.
We demonstrate it for CDP via the \underline{Recursion} test program.
For chain prefetching, the attacker can mistrain the first prefetcher into loading a secret during the victim's execution, causing the chained DMP to activate and read the secret.

\subsubsection{General tests}
Beyond the attacks described above, we run three general leakage tests for the \emph{Arch}, \emph{Sub}, and \emph{Trans} classes.
Unlike the attacks, these tests are not tailored to specific DMPs.
\underline{Load Secret} simply loads a secret, the simplest example of architectural leakage.
\underline{Same Cache Block} accesses public data sharing a cache block with a secret, demonstrating the \emph{Sub} class.
While our test deliberately co-locates secret and public data, this can occur naturally in real programs, \eg when compilers reuse stack space by overwriting unused secrets with public data~\cite{olmos25ctcomp}.
Finally, \underline{Transient Load} shows \emph{Trans} leakage by transiently loading a secret due to misspeculation.

\subsection{Proposed DMP Mitigations}
\label{sec:existing-defenses}

Given our new classification of DMP leakage, we now apply the same concept to mitigations.
First, we briefly introduce defense proposals before classifying them in \cref{sec:defense-classification}.

\subsubsection{Permanent disabling}
The simplest, often proposed defense is disabling the DMP for the entire program's execution~\cite{vicarte22augury,chen24gofetch,wang25peekawalk}, \eg via software-accessible DIT/DOIT bits~\cite{arm25dit,intel25doit}.
While permanent disabling addresses the vulnerabilities of DMPs, it imposes a potentially heavy performance penalty~\cite{vicarte22augury,chen24gofetch,schluter25prefence}.
For example, a program with only a small security-critical code region will lose all the performance benefits of DMPs, including when operating on public data.

\subsubsection{Selective disabling}\label{sec:selective-disabling-defense}
To address the performance cost of permanent disabling, some papers propose to selectively disable DMPs only when computing on secret data~\cite{schluter25prefence,chen24gofetch}.
Apple and Intel similarly recommend selectively setting the DIT and DOIT bits only during security-critical operations~\cite{arm25dit,intel25doit}.
Recently, Schlüter \etal{} implemented selective disabling of all prefetchers in PreFence~\cite{schluter25prefence}, reporting a significant performance improvement over permanent disabling.

\subsubsection{Preventing secret extraction}
Some defenses attempt to prevent attackers from extracting secrets from the microarchitecture.
For instance, PhantomFetch~\cite{zhang25phantomfetch} flushes the prefetcher's state on context switches.
However, as prefetching affects many microarchitectural components, it is impractical to secure all leakage channels without incurring high overhead.
Additionally, this defense is likely insecure in intra-process sandbox settings or when the attacker can continuously monitor the victim from an SMT thread or another core.

\subsubsection{Interleaving}
Pätschke \etal~\cite{patschke25zebrafix} proposed interleaving secret data with specific values in memory as a theoretical defense to prevent secrets from activating the DMP pointer heuristic.
This defense only applies to DMPs with a pointer heuristic and must be adapted to each DMP's specific heuristic~\cite{chen24gofetch,wang25peekawalk,cooksey02dmp}.
Pätschke \etal also note the high performance cost and instead suggest using DMP disabling, which Sharma \etal~\cite{sharma26splittingsecrets} confirmed: their implementation of interleaving incurred a performance overhead of \SI{274}{\percent}.

\subsection{Classification of DMP Mitigations}
\label{sec:defense-classification}

Since flushing the prefetcher state is insecure in our threat model and interleaving is less secure and has much higher overhead than permanent disabling~\cite{patschke25zebrafix,sharma26splittingsecrets}, we focus on permanent and selective disabling.
The theoretical analysis of this section is complemented by an experimental validation and performance measurements on Proteus~\cite{bognar23proteus} in \cref{sec:evaluation}.

Trivially, with \ul{no defense}, no leakage class is mitigated.
Conversely, with \ul{permanent disabling}, DMPs cannot scan any data, mitigating all their leakage.
However, all performance improvements of the DMPs are lost.

\begin{table}
    \centering
    \caption{Leakage classes mitigated by different DMP defenses.}
    \label{tab:defense-classification}
    \begin{tabular}{l c c c c}\toprule
        \multirow{2}{*}{Defense} & \multirow{2}{*}{\emph{Arch}} & \multicolumn{3}{c}{\emph{Ind}}                              \\ \cmidrule{3-5}
                                 &                              & \emph{Sub}                     & \emph{Trans} & \emph{Pref} \\ \midrule
        No Defense               & \xmarkr                      & \xmarkr                        & \xmarkr      & \xmarkr     \\
        Permanent Disabling      & \cmarkg                      & \cmarkg                        & \cmarkg      & \cmarkg     \\
        Selective Disabling      & \cmarkg                      & \xmarkr                        & \xmarkr      & \xmarkr     \\
        \secdmp{}                & \cmarkg                      & \cmarkg                        & \cmarkg      & \cmarkg     \\ \bottomrule
    \end{tabular}
\end{table}

\label{sec:selective-disabling-leakage-classification}
\ul{Selective disabling}~\cite{schluter25prefence,chen24gofetch} consists of disabling the DMP during security-critical code regions (\ie regions where the program architecturally accesses secret data), effectively preventing leakage of the \emph{Arch} class.
However, leakage of the \emph{Ind} classes can still occur in public code regions where DMPs are enabled.
Schlüter \etal~\cite{schluter25prefence} acknowledge that selective disabling does not prevent leakage for recursive DMPs (\emph{Pref}), and advise to disable DMPs when looping through public data if secret data is located after it.
However, this additional disabling only prevents accidental leakage of the \emph{Pref} class caused by out-of-bounds prefetching, without addressing other types of indirect leakage.
Attackers can still deliberately mistrain predictors to trigger \emph{Trans} and \emph{Pref} leakage during public code regions, and \emph{Sub} leakage can occur in non-trivial ways when public and secret data are mixed, e.g., on the stack.
In conclusion, DMPs cannot be safely enabled in public code regions while secret data is accessible in memory.

This highlights a critical gap in DMP defenses: users either need to sacrifice security or suffer performance degradation from permanent DMP disabling.

%% file: sections/35-overview.tex
\section{Overview of \secdmp{}}\label{sec:overview}

To overcome the inefficiency of permanent disabling and the insecurity of selective disabling, we propose \secdmp{}, a general defense strategy for securing DMP designs against data leakage.
\secdmp{} relies on a hardware-software co-design: on the software side, developers mark all data that is secret, while on the hardware side, the DMP is prevented from using secret data for training or prefetching.

Importantly, \secdmp{} prevents all secret leakage through the DMP while imposing minimal performance overhead: it prevents only prefetches that depend on secret data without hindering any prefetches based on public data.

\subsection{Software Changes}

On the software side, application data must be partitioned into public and secret data.
\secdmp{} can be combined with different methods for partitioning~\cite{schwarz20context,daniel23prospect,fustos19spectreguard,yu19oisa}, as long as secret data is correctly labeled.

Our security proof relies on a static memory partition, which we implement in \cref{sec:implementation} using coarse-grained static secret regions, similar to prior academic defenses~\cite{daniel23prospect,yu19oisa,schwarz20context,fustos19spectreguard} and industry patents by Arm~\cite{arm_spec_restricted_memory_2023} and NVIDIA~\cite{nvidia_cacheable_non_speculative_2020}.
While coarse-grained static partitioning introduces several challenges, recent work has shown that it can be practical~\cite{song25securing}; we discuss these challenges and solutions in \cref{sec:practicality-partitioning}.

\subsection{Hardware Changes}
On the hardware side, \secdmp{} ensures that data labeled as secret is never encoded in the DMP state or used for initiating prefetch requests.
Concretely, \secdmp{} requires only two hardware modifications.
The first is the aforementioned mechanism for identifying secret data in memory by associating loaded data with a taint bit.
Second, the DMP itself must be modified to recognize this taint bit and ensure that it ignores tagged data.
Importantly, these changes are local and lightweight.
Unlike existing Spectre defenses based on taint tracking~\cite{daniel23prospect,yu19oisa,schwarz20context,fustos19spectreguard}, \secdmp{} itself does not require secrets to be tracked throughout the core, making adoption more realistic in complex CPU designs---though \secdmp{} can optionally be paired with dynamic secret-tracking schemes as an alternative to static partitioning.

\subsection{Security of SecDMP}
\secdmp{} closes all leakage channels introduced by DMPs for data marked as secret.
We formally prove this property in \cref{sec:proofs} (\cref{thm:secdmpsecurity,thm:secdmpbase-security}) for a static secret partitioning: \secdmp{} provides the same security guarantees as a baseline configuration without a DMP.

Programs may still leak secrets through other microarchitectural channels, such as standard side-channel attacks (\eg{} exploiting caches or port contention) and Spectre attacks.
To achieve comprehensive microarchitectural security, \secdmp{} must therefore be combined with existing defenses against such attacks.
If a program does not leak secrets in the baseline configuration without a DMP, it will also not leak secrets when the baseline is extended with a \secdmp{}-secured DMP.

Importantly, even cryptographic programs that follow the strict speculative constant-time discipline, and are thus secure against standard side-channel and Spectre attacks, can remain vulnerable to DMP-based attacks.
\secdmp{} closes this additional leakage channel and restores the security guarantees offered by speculative constant-time defenses.
For instance, \secdmp{} can be combined with existing hardware-based~\cite{daniel23prospect,davoli26dfence,yu19oisa,schwarz20context,fustos19spectreguard} or software-based~\cite{DBLP:conf/uss/ZhangBCSY23,olmos25protecting} Spectre defenses; we show this formally in \cref{sec:secdmp-combination-secure-speculation}.

%% file: sections/40-formalization.tex
\section{Formalization of DMPs and \secdmp{}}\label{sec:formalization}
To formally reason about DMP security, we propose a formal semantics capturing DMP behavior.
We start by introducing a baseline out-of-order semantics (\(\protect\basesem{}\), \cref{sec:baseline-semantics}), which we extend to model DMP behavior (\(\protect\dmpsem{}\), \cref{sec:formalization-existing-dmps}).
Finally, we formalize \secdmp{} as an extension of an arbitrary base semantics (\(\protect\secdmpbasesem{}\), \cref{sec:secdmp-main}).
\if\extendedVersion0
We keep the formalization high-level; the full formalization can be found in the extended version~\cite{sprengers27secdmpextended}.
\fi

\subsection[Baseline Speculative Semantics]{Baseline Speculative Semantics (\(\protect\basesem{}\))}\label{sec:baseline-semantics}

\subsubsection{Language}
The language in \Cref{fig:syntax} defines our ISA. It contains instructions for moving data between registers and memory, as well as jumps and conditional branches. Direct assignments to the program counter are not allowed.

\subsubsection{Configurations}
A hardware configuration \(\mconf{}\) is a tuple \(\mconffull{}\) where
\(\mem: \Vals \to \Vals\) is a memory mapping addresses to values;
\(\reg: \Regs \to \Vals\) is a register map of registers to values;
$\buf$ is a reorder buffer (ROB), a sequence of (possibly
transient) instructions;
and $\micro$ is the \emph{microarchitectural context}.
Similar to existing formal semantics~\cite{daniel23prospect}, we assume that \(\micro\) is the \emph{observable part of the microarchitectural state}: attackers can observe \(\micro\) at each execution step.
Finally, we also refer to \(\conf{\mem, \reg}\) as the \emph{architectural configuration} and denote it \(\aconf\).

During the execution, we (partially) apply changes that occur in the ROB using a function \(\apl(\buf, \reg)\), which applies all effects of instructions in \(\buf\) to
\if\extendedVersion1
\(\reg\) (\cref{def:apl} in \cref{app:baseline-semantics}).
\else
\(\reg\)~\cite{sprengers27secdmpextended}.
\fi
When the value of an assignment in \(\buf\) is not resolved yet, the corresponding register is mapped to a special symbol \(\botval\), meaning that it is undefined.
Thus, \(\apl\) generates a new register map where the value of some registers is undefined.

\subsubsection{Attacker strategies}
Our semantics is parameterized by a set of functions \(\funcs = \{\nextdir, \predict, \update\}\), which model the next directive to execute, control flow predictions, and microarchitectural state updates, respectively.
These functions can be adversarially chosen to model different attacker strategies. We call an \emph{attacker strategy} \(\attacker = (\micro_0, \funcs)\) a choice of initial microarchitectural state \(\micro_0\) and functions \(\funcs\).
Given a program \(\prog{}\) and an attacker strategy \(\attacker = (\micro_0, \funcs)\), we let \(\bigmicro{}{\prog{}}{\aconf{}}{\attacker}\) denote the sequence of observable microarchitectural states \(\micro_{0} \dots \micro_{n}\) produced by the execution of \(\prog{}\) from the initial configuration \(\conf{\aconf{}, \varepsilon, \micro_0}\) in a semantics parameterized by functions \(\funcs\).
For an \(n\)-step execution \(\conf{\aconf{}, \varepsilon, \micro_0} \relmicro{}{} \dots \relmicro{}{} \conf{\aconf{}_n, \buf_n, \micro_n}\), \(\bigmicro{}{\prog{}}{\aconf{}}{\attacker} = \micro_{0} \dots \micro_{n}\).

\input{semantics/language}

\subsubsection{Leakage model}
The function \(\update\) models information leakage by updating the microarchitectural context \(\micro\).
Intuitively, all arguments of the update function are leaked and become attacker-observable.
Following the common constant-time leakage model, \(\protect\basesem{}\) leaks the \emph{program counter} of fetched and executed instructions and the \emph{memory addresses} accessed by (speculative) loads and stores.

\subsubsection{Semantics}
Our base semantics \(\protect\basesem{}\) models the execution of assembly programs at the microarchitectural level with a deterministic relation $\smicro{}{}{\mconf_{1}}{\mconf_{2}}{}$, meaning that a step in the configuration \(\mconf_{1}\) produces a unique configuration \(\mconf_{2}\).
It captures direct and indirect control flow speculation (Spectre-v1 and Spectre-v2), similar to existing microarchitectural semantics~\cite{daniel23prospect, guarnieri21hardwaresoftware}.
It is unknown how existing DMPs interact with other speculation mechanisms like memory disambiguation~\cite{horn18spectrev4}, load value~\cite{kim25flop}, or load address~\cite{kim25slap} prediction.

The semantics uses \emph{directives} to model the different stages of an out-of-order pipeline: $\fetch{}$, $\execute{i}$ for \(i \in \mathbb{N}\), and $\retire{}$.
At each step, the semantics selects the next directive \(\directive{}\) to execute using the function \(\directive{} = \nextdir(\micro)\) and applies the corresponding rule for that directive, modeled as an auxiliary relation \(\smicro{\directive{}}{}{\mconf_{1}}{\mconf_{2}}{}\).
\if\extendedVersion1
Semantics rules are standard, and we leave them to \cref{app:baseline-semantics}.
\else
Semantics rules are standard; we give an excerpt of the complete rules~\cite{sprengers27secdmpextended} in \cref{app:short-semantics}.
\fi

\subsection[DMP Extensions]{DMP Extensions (\(\protect{\dmpsem{}}\) and \(\protect{\dmpbasesem{}}\))}\label{sec:formalization-existing-dmps}

To model the behavior of DMPs, we introduce two functions, \(\train(\micro, \loc, \mem, \val{a})\) and \(\prefetch(\micro, \loc, \mem, \val{a})\), modeling training and prefetching operations.
Both functions take as input the current microarchitectural state \(\micro\), the location \(\loc\) of the instruction triggering the DMP, the address \(\val{a}\) loaded by the instruction, and the memory \(\mem\).
For generality, we give the DMP access to the whole memory: this allows us to model DMPs that perform recursive prefetches based on loaded values.
Both functions return a new microarchitectural state \(\micro'\) modeling the updates performed by the DMP.
In general, we leave these functions undefined: they can be instantiated to model different DMP implementations.
This also means that an attacker strategy \(\attacker\) includes \(\train\) and \(\prefetch\) functions.

\subsubsection[Modeling existing implementations]{Modeling existing implementations (\(\protect{\dmpsem{}}\))}\label{sec:apple-model}
In existing DMP implementations, the DMP is trained and activated on memory loads.
To model this behavior, we extend the \textsc{load} rule of \(\basesem{}\) with the \(\train\) and \(\prefetch\) functions, obtaining the \rulename{execute-load-dmp}{fig:execute-load-dmp} rule defined in \Cref{fig:execute-load-dmp}.
To show that our model captures existing DMP designs, we now present a high-level model of the Apple DMP and include models of the Intel DMP, CDP, and IMP in \cref{app:extraDMP}.

\begin{figure}[t]
    \centering
    \begin{mathpar}
        \updatedLoad{}
    \end{mathpar}
    \caption{Load evaluation for \(\protect\bigmicro{\dmp{}}{}{}{}{}\). Boxes mark differences with \(\protect\bigmicro{\base{}}{}{}{}{}\).}\label{fig:execute-load-dmp}
\end{figure}

\Cref{fig:sec-apple-dmp} contains the model of the Apple DMP~\cite{chen24gofetch} described in \cref{sec:dmp-models-apple} (the gray box will be explained later and can be ignored for now).
Since this DMP does not have a training phase, we only define the \(\prefetch\) function.
This function applies Apple's pointer heuristic to all words (\ie candidate pointers) in the loaded cache block (lines~2--7).
If the heuristic passes, it dereferences the candidate pointer (line~8).
To model the DMP's next-line prefetching behavior, the function also prefetches the cache block following the pointer's target in memory (line~8).
With this behavior, our model captures leakage through both the \emph{Arch} and \emph{Sub} classes.
It also captures leakage through the \emph{Trans} class, as the semantics allows \(\prefetch\) to be called on speculative loads.
However, since the specific mechanism responsible for the Apple DMP's \emph{Pref} leakage is unknown (as mentioned in \cref{sec:dmp-models-apple}), we do not capture this leakage class in our model.

\begin{figure}[t]
    \centering
    \secureAppleDmp
    \caption{Model of the Apple DMP where \(n\) represents the number of words in a cache block.}
    \label{fig:sec-apple-dmp}
\end{figure}

\subsubsection[Generalization]{Generalization (\(\protect{\dmpbasesem{}}\))}
More generally, given an arbitrary hardware semantics \absbase{}, we define a \dmp{}-enhanced semantics
\(\dmpbasesem{}\).
We assume \(\absbase{}\) is equipped with two functions \(\getpc(\mconf{}, \directive)\) and \(\loadaddr(\mconf{}, \directive)\), which, given the current configuration and directive, respectively return the current program counter and the address being loaded (\(\loadaddr\) returns \(\bot\) if no address is accessed).
We also add a predicate \(\dmpactivate(\micro{}, \loc, \directive)\), which returns whether the DMP is activated at the current location \(\loc\) and directive \(\directive\).
These functions are defined by the semantics, not attacker-controlled.

The semantics of \(\dmpbasesem\) is defined in \Cref{fig:dmpbase}.
Each rule follows the base semantics $\absbase{}$ to compute the next configuration.
Additionally, they update the microarchitectural state to reflect whether the DMP was activated.
If this is the case (\ie $\dmpactivate{}$ is true), we trigger the DMP using the $\train{}$ and $\prefetch{}$ functions.
To leave the order in which microarchitectural effects are combined unspecified, we pass the incoming microarchitectural state together with every computed state to \(\update\), which produces the final microarchitectural state.
Note that, compared to the base semantics, the \textsc{dmp-activate} rule only modifies the microarchitectural state; the architectural state and reorder buffer follow the base semantics.

\begin{figure}[t]
    \centering
    \begin{mathpar}
        \abstractDMPactivate{}
        \and
        \abstractDMPnoactivate{}
    \end{mathpar}
    \caption{Semantics of \(\protect\dmpbasesem{}\), with $\val{a}^{\bot} \in \Vals \cup \{\bot\}$.}\label{fig:dmpbase}
\end{figure}

\subsection[Formalization of SecDMP]{Formalization of SecDMP (\(\protect\secdmpsem{}\) and \(\protect\secdmpbasesem{}\))}\label{sec:secdmp-main}

\subsubsection{Memory partitioning} \label{sec:security-levels}
\secdmp{} assumes that memory is partitioned between public (low, \(\botsec\)) and secret (high, \(\topsec\)) data.
We model this assumption using a memory partition \(\memsec\) that maps memory addresses to their security level.
Similar to prior work~\cite{daniel23prospect}, we require this partition to be fixed (\ie it cannot change over time), meaning that any secret data written to the public memory region is \emph{declassified}.
Our semantics does not make a distinction between declassified values and leakage; both are simply leaked to the microarchitectural state via $\update$.
We capture this with the rule
\if\extendedVersion1
\rulename{retire-store-low}{fig:retire-rules} (\cf \Cref{fig:retire-rules} in \cref{app:baseline-semantics}),
\else
\rulename{retire-store-low}{fig:relevant-rules} (\cf \Cref{fig:relevant-rules} in \cref{app:short-semantics}),
\fi
which leaks the value \(\val{v}\) when storing it to a public address (\ie{} \(\memsec(\val{a}) = \botsec\)).
Note that declassification never occurs speculatively.

Given such a partition, we define the \emph{public projection} of a memory \(\mem\) as the memory \(\sproj{\mem}\) such that for all addresses \(a\):
\[
    \sproj{\mem}(a) = \begin{cases}
        \mem(a) & \text{if } \memsec(a) = \botsec \\
        \bot    & \text{otherwise}
    \end{cases}
\]

\subsubsection{Securing existing DMPs with \secdmp{}}\label{sec:secdmp-applied}
We first show how \secdmp{} can be applied to existing DMP designs by securing our model of the Apple DMP (\(\protect{\dmpsem{}}\)).
We modify the \(\prefetch\) function as shown by the gray box in \Cref{fig:sec-apple-dmp}: for any address \(\val{address}\) analyzed by the DMP, \secdmp{} first checks whether this address corresponds to public data (\(\memsec(\val{address}) = \botsec\)).
If so, we proceed as in the insecure model and prefetch the corresponding target address if the pointer heuristic passes.
Otherwise, we do not analyze the address as it contains secret data.
\cref{app:extraDMP} describes the secure models of the Intel DMP, CDP, and IMP.

\subsubsection{Generalization}\label{sec:secdmp-general}
More generally, \secdmp{} can secure any DMP by restricting its behavior ($\train$ and $\prefetch$) to public memory regions defined by $\sproj{\mem}$:
\begin{gather*}
    \micro_\text{train} \mydef \train(\micro, \loc, \rulebox{\sproj{\mem}}, \val{a}) \\
    \micro_\text{pref} \mydef \prefetch(\micro, \loc, \rulebox{\sproj{\mem}}, \val{a}) \\
    \micro' \mydef \update(\micro, \mathsf{activate}, \micro_\text{train},
      \micro_\text{pref}, \val{a})
\end{gather*}
\secdmp{} can therefore use the entire public memory for training and prefetching.

The individual \secdmp{} models for the Apple DMP, and those for the Intel DMP, CDP, and IMP in \cref{app:extraDMP}, are naturally subsumed by a more abstract maximally-leaking model where $\train$ and $\prefetch$ in \Cref{fig:execute-load-dmp} are kept abstract and restricted to the public memory projection $\sproj{\mem}$.
We call this model $\secdmpsem{}$.
Additionally, given an arbitrary hardware semantics $\absbase{}$, we define a $\secdmp{}{}$-enhanced semantics, $\secdmpbasesem{}$, as in \Cref{fig:dmpbase}, but with $\train$ and $\prefetch$ restricted to the public memory projection $\sproj{\mem}$.

%% file: semantics/language.tex
\begin{figure}[t]

  \begin{tabularx}{\linewidth}{r@{~}r@{~}X }
  \multicolumn{3}{l}{
    (Values) $\val{a}$, $\val{}, \loc{} \in \Vals \quad$%
    (Registers) $\register{x}, \register{r} \in \Regs \quad$%
  }\\
  \(e\)  & \(::=\) & \(\val{v}\) \(\vert\) \(\register{x}\) \(\vert\) \(e_1 \otimes e_2\) \\
  \(inst\) & \(::=\) &
  \(\mov{\register{r}}{e}\)  \(\vert\)
  \(\load{\register{r}}{e}\) \(\vert\)
  \(\store{e_a}{e_v}\)
  \(\vert\)
  \(\beqz{e}{\loc}\) \(\vert\)
  \(\jmp{e}\)
  \end{tabularx}

  \caption{Syntax of assembly instructions: \(\otimes\) is an abstract binary operator, \(\register{r} \neq \pc{}\). A program \(\protect\prog{}\) is a partial mapping from locations to instructions and \(\protect\prog{\loc{}}\) denotes the instruction at location \(\protect\loc{}\).
  }\label{fig:syntax}%
\end{figure}

%% file: sections/50-secdmp.tex
\section{Security Theorems and Proofs}\label{sec:proofs}
In this section, we prove that \secdmp{} instantiations do not leak more than what is already leaked (or declassified) by the baseline semantics they extend.
\if\extendedVersion0
The full formalization and proofs can be found in the extended version of our paper~\cite{sprengers27secdmpextended}.
\fi

\subsection{Definitions}

\subsubsection{Public equivalence}
We extend the memory partition to registers by defining a partition \(\parsec = (\memsec, \regsec)\), where \(\memsec\) maps memory addresses to their security level and \(\regsec\) maps registers to their security level.
Two architectural configurations $\aconf, \aconf'$ are \emph{public-equivalent} (or \(\parsec\)-equivalent) according to a partition \(\parsec\), written \(\aconf{} \loweq \aconf'\), if they agree on their program counter, public memory, and public registers.
\if\extendedVersion1
The formal definition is given in \cref{app:secdmp-proofs}.
\fi

\subsubsection{Microarchitectural security}
Given a hardware semantics \(\bigmicro{}{}{}{}{}\), a program \(\prog{}\), a security partition \(\parsec{}\), and two initial architectural configurations \(\aconf\), \(\aconf'\), we say that \(\prog{}\) leaks on \(\aconf, \aconf'\) if \(\aconf, \aconf'\) are \(\parsec\)-equivalent and there exists an attacker strategy \(\attacker\) that produces distinct observations from them.
Formally:
\begin{definition}
    \label{def:hwleak}
    Given a hardware semantics \(\bigmicro{}{}{}{}{}\), a program \(\prog{}\) leaks on \(\aconf, \aconf'\) under partition \(\parsec\), written
    \(\hwleak{\prog{}}{\parsec{}}{\aconf}{\aconf'}{\bigmicro{}{}{}{}{}}\), if:
    \begin{equation*}
        \aconf{} \loweq \aconf{'} \wedge
        \exists \attacker.\
        \bigmicro{}{\prog{}}{\aconf{}}{\attacker} \neq \bigmicro{}{\prog{}}{\aconf{}'}{\attacker}.
    \end{equation*}
\end{definition}

\noindent
Conversely, if no pair of \(\parsec\)-equivalent configurations leaks, we say that \(\prog{}\) is \emph{secure} for security partition \(\parsec{}\) on \(\bigmicro{}{}{}{}{}\).

\subsubsection{Comparing the security of hardware semantics}
In the following, we compare the security guarantees of different hardware semantics (\eg{} with and without DMPs).
To do so, we define a relation \(\moresec\) where \(\optsem{} \moresec \absbasesem{}\) means that \(\optsem{}\) does not leak any secret information that is not already leaked by \(\absbasesem{}\). Formally:

\begin{definition}[\(\protect\optsemnocolor{} \moresec{}  \protect\absbasesemnocolor{}\)]\label{def:security}
    A hardware semantics \(\optsem{}\) is at least as secure as a hardware semantics \(\absbasesem{}\), written \(\optsem{} \moresec \absbasesem{}\), if for every program \(\prog{}\), security partition \(\parsec{}\), and pair of initial architectural configurations \(\aconf,\aconf'\):
    \[
        \hwleak{\prog{}}{\parsec{}}{\aconf}{\aconf'}{\optsem{}}
        \implies
        \hwleak{\prog{}}{\parsec{}}{\aconf}{\aconf'}{\absbasesem{}}.
    \]
\end{definition}

For example, a sequential hardware semantics \(\bigmicro{\textsc{seq}}{}{}{}\) is (typically) more secure than a speculative hardware semantics \(\bigmicro{\textsc{spec}}{}{}{}\) (\(\bigmicro{\textsc{seq}}{}{}{} \moresec \bigmicro{\textsc{spec}}{}{}{}\)) since sequential execution does not leak through speculation.
Note that if \(\optsem{} \moresec \absbasesem{}\), then all programs that are secure on \(\absbasesem{}\) are also secure on \(\optsem{}\).

\subsection{Security Proof}

\subsubsection[Security of SecDMP]{Security of $\protect\secdmpsem{}$}
We start by showing that our speculative semantics equipped with \secdmp{} does not leak more than \(\basesem\).
Formally:

\begin{restatable}[]{theorem}{secdmpthm}\label{thm:secdmpsecurity}
    \(\protect\secdmpsem{} \moresec \protect\basesem{}\)
\end{restatable}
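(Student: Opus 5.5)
We prove the contrapositive in its ``simulation'' form: given a program \(\prog{}\), partition \(\parsec\), \(\parsec\)-equivalent configurations \(\aconf,\aconf'\), and an attacker strategy \(\attacker\) for \(\secdmpsem{}\) that distinguishes them, we construct an attacker strategy \(\attacker^{\mathit{b}}\) for \(\basesem{}\) that also distinguishes them. The idea is that \(\secdmpsem{}\) differs from \(\basesem{}\) only in the \rulename{execute-load}{fig:execute-load-dmp} rule. There the extra inputs to \(\update\) are \(\train(\micro,\loc,\sproj{\mem},\val{a})\) and \(\prefetch(\micro,\loc,\sproj{\mem},\val{a})\), which depend only on \(\micro\), \(\loc\), \(\val{a}\) and the public projection of memory. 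The values \(\micro\), \(\loc\) and \(\val{a}\) are all already leaked (or present) in the base rule. So everything the DMP adds is a function of the base observation plus \(\sproj{\mem}\).

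\textbf{Step 1 (public memory is low-equivalent along the run).} We show that for the two lockstep executions under the same strategy, as long as the observation traces agree so far, the memories remain public-equivalent: \(\sproj{\mem_k} = \sproj{\mem'_k}\). Initially this holds by \(\aconf\loweq\aconf'\). Memory only changes at \rulename{retire-store-low}{fig:retire-rules} and retire-store-high. In the high case the written address is secret, so \(\sproj{\cdot}\) is unchanged. In the low case the address \(\val{a}\) and value \(\val{v}\) are both passed to \(\update\); since the observations agree, either the two runs wrote the same public value at the same address, or the traces would already differ. Here we need \(\update\) to be injective enough that equal outputs imply equal leaked arguments. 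The cleanest route is to quantify over attackers: if the runs write different values, we choose a base attacker whose \(\update\) records its arguments, which already distinguishes the configurations in \(\basesem{}\).

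\textbf{Step 2 (the simulating attacker).} We define a base attacker whose microarchitectural state is a pair \((\micro,\mathit{pm})\). The component \(\micro\) is the simulated \(\secdmpsem{}\) state, and \(\mathit{pm}\) is a ``shadow public memory'' initialized to the common \(\sproj{\mem_0}\). The shadow memory is updated from the leaked \((\val{a},\val{v})\) at every low store, and is therefore reconstructible from observations. On the base load rule, whose update receives \(\micro\), \(\loc\) (via the \(\pc\)) and \(\val{a}\), the new \(\update\) computes \(\train\) and \(\prefetch\) on \(\mathit{pm}\) and then applies the \(\secdmpsem{}\) \(\update\). On all other rules it delegates directly. \(\nextdir\) and \(\predict\) read the \(\micro\) component. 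By induction on steps, while the runs agree, the base run from \(\aconf\) under \(\attacker^{\mathit{b}}\) reproduces the \(\secdmpsem{}\) states paired with \(\sproj{\mem_k}\), using Step 1 so that \(\mathit{pm}=\sproj{\mem_k}\). Since the two semantics share the architectural and ROB transitions, the first point where the \(\secdmpsem{}\) traces diverge yields differing first components in the base traces.

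\textbf{Main obstacle.} The delicate point is that \(\loc\) and \(\mem\) are not literally arguments of the base load's \(\update\). We must verify that \(\loc\) was leaked at fetch, so the attacker can remember it in its state. We must also justify that attacker functions may be arbitrary functions of the accumulated state. This is needed to maintain the shadow memory and to instantiate \(\train\) and \(\prefetch\) inside \(\update\). If the formal \(\update\) signature is restricted to its listed arguments, we would instead carry \(\loc\) and the shadow memory explicitly inside \(\micro\). We would also fold the ``first divergence'' argument of Step 1 into the same induction, so that the two runs stay synchronized up to the first distinguishing step.
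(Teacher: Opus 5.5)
Your proof is correct in outline, but it takes a different route from the paper's.

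\textbf{The paper's route.} The paper argues relationally. It assumes without loss of generality that the \secdmp{} attacker's \(\update\) is maximally leaking. It then inducts over the pair of \(\secdmpsem{}\) runs with three invariants:
\begin{enumerate}
  \item[(IH1)] the \(\basesem{}\) runs have the same architectural states and reorder buffers as the \(\secdmpsem{}\) runs, obtained by choosing \(\nextdir\) and \(\predict\) to replay the \secdmp{} choices and using (IH2) to tell the runs apart after divergence;
  \item[(IH2)] divergence transfers from \(\secdmpsem{}\) to \(\basesem{}\);
  \item[(IH3)] the two \secdmp{} configurations are strongly publicly equivalent before divergence.
\end{enumerate}
The load case is then immediate. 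The incoming states are equal, and \(\loc\) and \(\sproj{\mem}\) are equal by (IH3). So \(\val{a}\neq\val{a}'\) is the only possible source of a \secdmp{}-induced divergence, and the base rule already leaks \(\val{a}\).

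\textbf{Your route and what it buys.} You build a single uniform base attacker that simulates the \secdmp{} attacker per run, so each \(\secdmpsem{}\) trace is a projection of the corresponding \(\basesem{}\) trace. This gives a stronger and more modular statement. You need no maximal-leakage assumption; justifying that assumption itself essentially requires an embedding like yours, because changing \(\update\) changes what \(\nextdir\) and \(\predict\) see. You also need no strong public equivalence, and you use public equivalence only to give both runs the same initial shadow memory.

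\textbf{Step~1 can be dropped.} As stated, it also only holds for an injective \(\update\). In each run the shadow memory equals that run's own \(\sproj{\mem_k}\) by construction: \textsc{retire-store-low} leaks both address and value, and \textsc{retire-store-high} never touches public memory.

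\textbf{The cost: reconstructing \(\loc\).} This is the obstacle you flag, and the paper gets it for free from (IH3). The base load passes only \(\micro\) and \(\val{a}\) to \(\update\), so your attacker must reconstruct \(\loc\) in each run. This is bookkeeping rather than a missing idea, but it has to be written out. The attacker may depend on \(\prog{}\) and chooses directives and predictions itself, so it can keep a list of fetch locations aligned with the reorder buffer:
\begin{itemize}
  \item on fetch, \(\loc\) is leaked, and \(\prog{\loc}\) determines whether one or two entries are appended;
  \item commit versus rollback is read off by comparing the leaked target with the attacker's own prediction, and a rollback truncates the list;
  \item retire pops the list.
\end{itemize}
The location a load uses at \(\execute{i}\) equals its fetch location, because pc-assignments before it are only ever retired, or else the load is squashed. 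Together with \(\nextdir\), the same list also lets \(\update\) recognize which rule fired, for example a load versus a store, which pass arguments of the same shape.
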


\begin{proof}
    \if\extendedVersion1
    We leave the full proof to \cref{app:secdmp-proofs} and sketch the main idea here.
    \else
    We leave the full proof to our extended version~\cite{sprengers27secdmpextended} and sketch the main idea here.
    \fi
    Fix two publicly equivalent initial configurations and assume an attacker \(\secdmp{\attacker}\) distinguishes them in \(\secdmpsem{}\); let \(n+1\) be the first step where the resulting traces diverge (\ie{} the microarchitectural states are different).
    Without loss of generality, we assume this attacker to be equipped with a maximally-leaking $\update$ function that leaks all its arguments to the microarchitectural state---\ie{} it represents the strongest possible leakage allowed by the semantics.\footnote{In particular, this attacker captures, in a non-destructive way, all leakage piled up by our semantics rules. This assumption does not restrict the generality of the proof since a divergence produced by any other $\update$ function is also captured by such a maximally-leaking $\update$ function.}
    We then construct a baseline attacker \(\base{\attacker}\) that distinguishes the same two configurations in \(\basesem{}\) at the same step by simulating the $\secdmpsem{}$ execution.
    By induction over the small step semantics of \(\secdmpsem\), we show that we can construct pairs of executions in \(\basesem{}\) that (1) match the architectural state and reorder buffer of executions in \(\secdmpsem{}\) at every step, and (2) ensure that whenever the \(\secdmpsem{}\) executions diverge, the baseline executions also diverge.
    Intuitively, because \(\secdmpsem{}\) restricts \(\train\) and \(\prefetch\) to the public projection \(\sproj{\mem}\), any divergence caused by \secdmp{} must arise from data already leaked in the baseline model (program counters, accessed addresses, and public memory); thus \(\basesem{}\) can always produce the same leakage as \(\secdmpsem{}\).
\end{proof}

\subsubsection{Instantiating SecDMP implementations}
It is easy to see that the \secdmp{} implementations proposed in \cref{sec:secdmp-applied} and \cref{app:extraDMP} are instances of $\secdmpsem{}$: they only use public memory for training and prefetching.
Therefore, it directly follows from \cref{thm:secdmpsecurity} that these \secdmp{} implementations are as secure as the baseline speculative semantics.

\subsection{Generalization}\label{sec:secdmp-combination-secure-speculation}
We generalize our results for an implementation of $\secdmp{}$ combined with an arbitrary baseline $\absbase{}$.
Specifically, we show that under natural assumptions on the base semantics \(\absbasesem{}\), the \secdmp{}-enhanced semantics \(\secdmpbasesem{}\) is at least as secure as the base semantics \(\absbasesem{}\).

\subsubsection[Assumptions over B]{Assumptions over $\protect\absbasesem{}$}
\if\extendedVersion1
We give a high-level summary and leave formal definitions to \cref{app:secdmp-proofs}.
\else
We give a high-level summary and leave formal definitions to the extended version~\cite{sprengers27secdmpextended}.
\fi
\begin{itemize}
\item[\if\extendedVersion1\cref{hyp:micro}\else\textbf{A1}\fi]
The microarchitectural state \(\micro\) is updated only through \(\update\), and never directly from the architectural state or reorder buffer.
\item[\if\extendedVersion1\cref{hyp:corr}\else\textbf{A2}\fi]
The semantics cannot encode information from \(\micro\) into the architectural state or reorder buffer without using \(\predict\) or \(\nextdir\).
\item[\if\extendedVersion1\cref{hyp:ct}\else\textbf{A3}\fi]
The base semantics respects a constant-time leakage model, exposing at least the program counter and accessed memory addresses, and it declassifies values written to public memory.
\end{itemize}
\if\extendedVersion1\cref{hyp:micro} \else\textbf{A1} \fi and \if\extendedVersion1\cref{hyp:corr} \else\textbf{A2} \fi ensure that
leakage and attacker-controlled speculation are modeled by dedicated interfaces in the semantics---\ie{} the attacker can neither observe nor influence execution outside the behavior exposed by these interfaces.

\subsubsection[Security of B+SecDMP]{Security of $\secdmpbasesem{}$}
\noindent
Under these assumptions, we can show that \(\secdmpbasesem{}\) is at least as secure as (or does not leak more than) the base speculative semantics \(\absbasesem{}\). Formally:

\begin{restatable}[]{theorem}{secdmpbasethm}\label{thm:secdmpbase-security}
    Given a base speculative semantics \(\absbasesem{}\) satisfying
    \if\extendedVersion1\cref{hyp:micro,hyp:ct,hyp:corr}\else\textbf{A1--3}\fi,
    $\secdmpbasesem{} \moresec \absbasesem{}$
\end{restatable}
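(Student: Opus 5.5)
The plan follows the simulation argument sketched for \cref{thm:secdmpsecurity}, now carried out abstractly using only \cref{hyp:micro,hyp:corr,hyp:ct}.

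\textbf{Setup.} Fix a program \(\prog{}\), a partition \(\parsec\), and \(\parsec\)-equivalent configurations \(\aconf, \aconf'\). Suppose an attacker \(\attacker = (\micro_0, \funcs, \train, \prefetch)\) distinguishes them in \(\secdmpbasesem{}\), and let \(n+1\) be the first step at which the observation sequences differ. Without loss of generality, \(\update\) is maximally leaking: it records all of its arguments, so the microarchitectural state is the full history of leaked values. This is justified because any divergence produced by another \(\update\) is also produced by this one.

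\textbf{Constructing the baseline attacker.} I will build \(\base{\attacker}\) for \(\absbasesem{}\), whose microarchitectural state \(\hat\micro\) tracks two things: the \(\absbase{}\)-leakage history, and a simulated copy \(\micro^{s}\) of the \secdmp{} state. Its \(\nextdir\) and \(\predict\) first recompute \(\micro^{s}\) from \(\hat\micro\) and then answer as the original attacker would on \(\micro^{s}\). Its \(\update\) appends the base leakage and advances \(\micro^{s}\) by re-executing the DMP step. Doing so requires, at each step:
\begin{itemize}
\item the program counter \(\loc\) (from \(\getpc\)) and the loaded address \(\val{a}^\bot\) (from \(\loadaddr\)), both exposed by \cref{hyp:ct};
\item the predicate \(\dmpactivate(\micro^s, \loc, \directive)\), which is fixed by the semantics and computable from \(\micro^s\) and the directive;
\item the public projection \(\sproj{\mem}\).
\end{itemize}

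\textbf{Main obstacle: recovering \(\sproj{\mem}\).} Reconstructing \(\sproj{\mem}\) from leaked data alone is the delicate step, since it is not itself a leaked value. The plan is to let the simulated state carry the initial public memory \(\sproj{\mem_0}\) as a constant. This is legitimate because \(\aconf \loweq \aconf'\) means both runs share the same \(\sproj{\mem_0}\), so using one \(\base{\attacker}\) for both runs is sound. I then update this copy using the values declassified by \cref{hyp:ct} on every public store. Since \(\memsec\) is fixed, these stores are the only way public memory changes. One point must still be checked: whether stores are the only architectural memory writes in an arbitrary \(\absbase{}\). I would argue this from the declassification clause of \cref{hyp:ct}.

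\textbf{Simulation invariant.} By induction on the step \(k \le n+1\), I show two things:
\begin{enumerate}
\item Each \(\absbasesem{}\) execution under \(\base{\attacker}\) has the same architectural state and reorder buffer as the corresponding \(\secdmpbasesem{}\) execution, and its \(\micro^s\) equals the \secdmp{} microarchitectural state.
\item For \(k \le n\), the two base executions have equal \(\hat\micro\).
\end{enumerate}
Part 1 uses \cref{hyp:micro,hyp:corr}. The rules of \(\secdmpbasesem{}\) modify only \(\micro\) relative to \(\absbase{}\), and \(\absbase{}\) reads \(\micro\) only through \(\nextdir\) and \(\predict\). These functions receive identical answers because \(\micro^s\) matches, so the architectural and ROB components evolve identically.

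\textbf{Conclusion.} At step \(n+1\) the \secdmp{} states differ. The new \secdmp{} state is a deterministic function of the previous (equal) state, the base leakage, \(\loc\), \(\val{a}^\bot\), and \(\sproj{\mem}\). The previous state is equal by the invariant, and \(\sproj{\mem}\) is equal because it is a function of the common \(\sproj{\mem_0}\) and the leakage history. Hence the base leakage at step \(n+1\) must differ, so \(\hat\micro\) diverges and \(\hwleak{\prog{}}{\parsec{}}{\aconf}{\aconf'}{\absbasesem{}}\) holds.
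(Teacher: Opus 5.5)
Your skeleton matches the paper's. You take \(\update\) injective without loss of generality, and you keep architectural states and reorder buffers in lockstep by matching directives and predictions (\cref{hyp:corr}, via \cref{lemma:corr}). At the first divergence you argue that a difference coming from \(\train\), \(\prefetch\) or the loaded address can only come from \(\getpc\) or \(\loadaddr\), which \cref{hyp:ct} turns into a divergence of the base step. The gap is in your base attacker. Having \(\update\) recompute the \secdmp{} state online requires reading \(\loc\), \(\val{a}^{\bot}\) and the new public memory off the arguments of the base \(\update\). But \cref{hyp:ct} is purely relational: under an injective \(\update\), a difference in \(\getpc\), in \(\loadaddr\), or (starting from equal public projections) in the resulting public projection forces different microarchitectural states. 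It does not hand these values to \(\update\) as readable arguments, and it says nothing about stores, so saying they are ``exposed'' by \cref{hyp:ct} overreads it. The question you leave open (whether stores are the only writers) is therefore beside the point: you would need the written values themselves to be recoverable from the leakage, and \cref{hyp:ct} does not provide that. As a result, two of your claims are unsupported: part~1 of your invariant (the simulated state equals the \secdmp{} state, on which your \(\nextdir\) and \(\predict\) depend), and the claim that \(\sproj{\mem}\) is a function of \(\sproj{\mem_0}\) and the leakage history.

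The reconstruction is also unnecessary; take your own observation that the base attacker may depend on the fixed pair one step further. The base attacker is chosen after the pair and the \secdmp{} attacker are fixed, and the \secdmp{} states coincide in both runs up to step \(n\). So the directives and predictions at steps \(1,\dots,n+1\) form a single common sequence, which the attacker can store in its initial microarchitectural state and replay with a counter, as the paper does. Equality of public memories before the divergence then becomes an extra relational invariant, the paper's \cref{eq:secdmpbase-ih3}. If the \secdmp{} states after step \(i+1\) agree, injectivity gives equal incoming states and equal results \(\micro_{\mathrm B}\) of the embedded base step. The DMP rules do not touch memory. By the third clause of \cref{hyp:ct}, a base step that starts from equal public projections and yields equal microarchitectural states cannot yield different public projections. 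With these two changes, your concluding step becomes exactly the paper's analysis of \rulename{dmp-activate}{fig:dmpbase}. If the \(\micro_{\mathrm B}\) differ, an injective base \(\update\) reproduces the divergence. Otherwise, equal incoming states and public projections leave only \(\getpc\) or \(\loadaddr\) as the source of a difference, and by \cref{hyp:ct} such a difference would already have made the \(\micro_{\mathrm B}\) differ.
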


\begin{proof}
    \if\extendedVersion1
    We leave the full proof to \cref{app:secdmp-proofs} and sketch the main idea here.
    \else
    We leave the full proof to the extended version~\cite{sprengers27secdmpextended} and sketch the main idea here.
    \fi
    The proof follows the same structure as that of \cref{thm:secdmpsecurity}.
    Fix two publicly equivalent initial configurations and assume an attacker \(\secdmpbase{\attacker}\) distinguishes them in \(\secdmpbasesem{}\).
    We then construct a baseline attacker \(\base{\attacker}\) that simulates the $\secdmpbasesem{}$ execution in \(\absbasesem{}\) and distinguishes the same two configurations.
    We use
    \if\extendedVersion1\cref{hyp:corr} \else\textbf{A2} \fi
    to match the architectural states and reorder buffers of the corresponding
    executions in \(\absbasesem{}\) and \(\secdmpbasesem{}\) by matching
    their predictions and directives.
    We use
    \if\extendedVersion1\cref{hyp:micro} \else\textbf{A1} \fi
    to ensure that all baseline microarchitectural changes pass through
    \(\update\), allowing us to reproduce any divergence originating in the underlying baseline step by choosing an update
    function for \(\absbasesem{}\) that produces different results in the two executions.
    Finally, \if\extendedVersion1\cref{hyp:ct} \else\textbf{A3} \fi
    ensures that public memories remain equal before the divergences and that differences in the program counters or loaded addresses supplied to \secdmp{} are already observable in the baseline.
    Thus, we can always construct an attacker for \(\absbasesem{}\) that distinguishes the same pair of initial configurations as \(\secdmpbasesem{}\).
\end{proof}

\subsubsection{Instantiation with sequential CT}
As a corollary of \cref{thm:secdmpbase-security}, we can instantiate \absbase{} with a sequential semantics in the constant-time leakage model, denoted \(\Sem{\basecolor{\textsc{SeqCT}}}\).
This semantics corresponds to an ISA semantics that executes instructions sequentially without speculation while leaking only the program counter, accessed memory addresses, and declassifies values written to public memory.
It is easy to see that \(\Sem{\basecolor{\textsc{SeqCT}}}\) satisfies
\if\extendedVersion1\cref{hyp:micro,hyp:ct,hyp:corr}\else\textbf{A1--3}\fi.
Therefore, we can conclude that \(\Sem{\textsc{\basecolor{SeqCT}+\secdmpname}}\) is at least as secure as \(\Sem{\basecolor{\textsc{SeqCT}}}\).

\subsubsection{Instantiation with \prospect{}}
As a final step, we combine \secdmp{} with \prospect~\cite{daniel23prospect}, a state-of-the-art speculative-constant-time defense, to obtain a system that provides strong security guarantees against both speculative execution attacks and DMP attacks for constant-time programs.
\prospect{} satisfies
\if\extendedVersion1\cref{hyp:micro,hyp:ct,hyp:corr}\else\textbf{A1--3}\fi,
therefore, by \cref{thm:secdmpbase-security}, we can conclude that \(\Sem{\textsc{\basecolor{\prospect{}}+\secdmpname}}\) does not leak more than \(\Sem{\textsc{\basecolor{\prospect{}}}}\).
Furthermore, from~\cite{daniel23prospect}, we know that \(\Sem{\basecolor{\textsc{\prospect{}}}}\) is at least as secure as a baseline sequential semantics in the constant-time leakage model (\ie{} \(\Sem{\basecolor{\textsc{\prospect{}}}} \moresec \Sem{\basecolor{\textsc{SeqCT}}}\)).
Combining these, we obtain that \(\Sem{\textsc{\basecolor{\prospect{}}+\secdmpname}} \moresec \Sem{\basecolor{\textsc{SeqCT}}}\).

In conclusion, \(\Sem{\textsc{\basecolor{\prospect{}}+\secdmpname}}\) provides strong end-to-end security guarantees against both speculative execution and DMP attacks for constant-time programs.

%% file: sections/60-implementation-evaluation.tex
\section{Implementation and Evaluation} \label{sec:implementation-evaluation}
In this section, we empirically evaluate the security, performance, and hardware cost of \secdmp{}, permanent disabling, and selective disabling.
We also consider the combination of \secdmp{} and \prospect{}~\cite{daniel23prospect} to evaluate the cost of integrating \secdmp{} into existing Spectre defenses that rely on secret partitioning, protecting constant-time code in the presence of both speculative execution and DMPs.

\subsection{Implementation} \label{sec:implementation}
Our implementation extends the Proteus \riscv processor~\cite{bognar23proteus}, on which \prospect{}~\cite{daniel23prospect} was also implemented.
Proteus is a configurable research processor that includes an out-of-order pipeline and advanced microarchitectural features such as caches, speculative execution, branch prediction, and a prefetching module.
We also build on the Proteus ecosystem~\cite{ecosystemgithub}, containing benchmarks and tools for evaluation.

\subsubsection{Proteus+DMP}
First, we implement the four DMPs described in \cref{sec:dmp-models}, creating, to the best of our knowledge, the first open-source DMP implementations.
These will act as the four baselines for our evaluation in \cref{sec:evaluation}; we will repeat the evaluation for each baseline and refer to them as Proteus+DMP.

We model the DMPs as faithfully as possible after the designs described in the literature, resulting in the leakage classes from \cref{tab:leakage-classification} (including all \emph{Trans} leakage).
The only exception is the Apple DMP: since the mechanism responsible for its \emph{Pref} leakage is unknown, we do not model it.
As a consequence, our models of the Apple DMP and CDP are very similar, the main difference being that CDP is recursive and the Apple DMP is not.
This enables sharing most of their implementation code.
For IMP, we implement the default 2-level variant \val{A[B[i]]}.
Since CDP and IMP feature several configurable parameters, we initialize them with the recommended values from the original papers~\cite{cooksey02dmp,yu15imp}, \eg a maximum recursion depth of 3 for CDP and a prefetch lookup table with 16 entries for IMP.

\subsubsection{Permanent and selective disabling}
Next, we implement prototypes of the defenses on Proteus+DMP.
To model permanent disabling, we use the original version of Proteus with no DMP.
For selective disabling, we add a control and status register (CSR) to toggle the DMP from software.
To prevent instructions from executing speculatively with incorrect CSR values, each CSR write in software must be followed by a fence instruction.

\subsubsection{SecDMP}
Our \secdmp{} implementation partitions the memory into public and secret regions, similar to \prospect{}~\cite{daniel23prospect}, and prevents the DMP from scanning data marked as secret.
The secret regions are determined by CSRs (four CSRs defining two regions in our implementation).
To match the static partitioning of \secdmp{}'s security proof, the CSRs must be configured at program start and remain unchanged throughout the program.
Notably, \secdmp{} requires only the DMP to distinguish public and secret data, requiring no secret tracking in the processor pipeline.

\subsubsection{\prospect{}+SecDMP}
For the combination of \prospect{} and \secdmp{}, we add the \secdmp{} extension to the upstream \prospect{} implementation.

\subsection{Evaluation}\label{sec:evaluation}
We run the evaluation on each prototype for all four DMPs,
configuring the Proteus \riscv processor with a 4-way 16-set cache, a 64-cycle miss penalty, a 32-entry reorder buffer, 8 arithmetic units, 3 load units, and 2 multiplication/division units.
The experiments are run in Verilator~\cite{verilator} simulations.
Numerical results are available in \extendedElse{\cref{app:evaluation-results}}{the extended paper~\cite{sprengers27secdmpextended}} and in \artifactrepo{}.

\subsubsection{Security}\label{sec:evaluation:sec}
\begin{table}[t]
    \centering
    \caption{Security evaluation results: (1)~IMP Leak, (2)~Augury PoC, (3)~GoFetch PoC, (4)~Peek-a-Walk, (5)~DMP-Spectre, (6)~Recursion, (7)~Load Secret, (8)~Same Cache Block, (9)~Transient Load, (A)~PHT-Load, (B)~PHT-Store, (C)~PHT-Branch, (D)~PHT-Jump, (E)~PHT-Division. Leakage on Apple DMP~(\protect\dmpleak{1}{0}{0}{0}), Intel DMP~(\protect\dmpleak{0}{1}{0}{0}), CDP~(\protect\dmpleak{0}{0}{1}{0}), IMP~(\protect\dmpleak{0}{0}{0}{1}), or none~(\cmarkg).}
    \label{tab:security-evaluation}
    \setlength{\tabcolsep}{3.25pt}
    \begin{tabular}{l c c c c c c c c c c c c c c}\toprule
        \multirow{2}{*}{Defense} & \multicolumn{9}{c}{DMP} & \multicolumn{5}{c}{Spectre}                                                                                                                                                                                                                                    \\ \cmidrule(lr){2-10} \cmidrule(lr){11-15}
                                 & 1                       & 2                           & 3                    & 4                    & 5                    & 6                    & 7             & 8                    & 9             & A             & B             & C             & D             & E             \\ \midrule
        Proteus+DMP              & \dmpleak{0}{0}{0}{1}    & \dmpleak{0}{1}{0}{0}        & \dmpleak{1}{0}{1}{0} & \dmpleak{0}{1}{0}{0} & \dmpleak{1}{0}{1}{0} & \dmpleak{0}{0}{1}{0} & \dmpleakall{} & \dmpleak{1}{0}{1}{0} & \dmpleakall{} & \dmpleakall{} & \dmpleakall{} & \dmpleakall{} & \dmpleakall{} & \dmpleakall{} \\
        Permanent Disabling      & \cmarkg                 & \cmarkg                     & \cmarkg              & \cmarkg              & \cmarkg              & \cmarkg              & \cmarkg       & \cmarkg              & \cmarkg       & \dmpleakall{} & \dmpleakall{} & \dmpleakall{} & \dmpleakall{} & \dmpleakall{} \\
        Selective Disabling      & \dmpleak{0}{0}{0}{1}    & \dmpleak{0}{1}{0}{0}        & \cmarkg              & \dmpleak{0}{1}{0}{0} & \dmpleak{1}{0}{1}{0} & \dmpleak{0}{0}{1}{0} & \cmarkg       & \dmpleak{1}{0}{1}{0} & \dmpleakall{} & \dmpleakall{} & \dmpleakall{} & \dmpleakall{} & \dmpleakall{} & \dmpleakall{} \\
        \secdmp{}                & \cmarkg                 & \cmarkg                     & \cmarkg              & \cmarkg              & \cmarkg              & \cmarkg              & \cmarkg       & \cmarkg              & \cmarkg       & \dmpleakall{} & \dmpleakall{} & \dmpleakall{} & \dmpleakall{} & \dmpleakall{} \\
        \prospect                & \dmpleak{0}{0}{0}{1}    & \dmpleak{0}{1}{0}{0}        & \dmpleak{1}{0}{1}{0} & \dmpleak{0}{1}{0}{0} & \dmpleak{1}{0}{1}{0} & \dmpleak{0}{0}{1}{0} & \dmpleakall{} & \dmpleak{1}{0}{1}{0} & \dmpleakall{} & \dmpleakall{} & \dmpleakall{} & \dmpleakall{} & \dmpleakall{} & \dmpleakall{} \\
        \prospect{}+\secdmp{}    & \cmarkg                 & \cmarkg                     & \cmarkg              & \cmarkg              & \cmarkg              & \cmarkg              & \cmarkg       & \cmarkg              & \cmarkg       & \cmarkg       & \cmarkg       & \cmarkg       & \cmarkg       & \cmarkg       \\ \bottomrule
    \end{tabular}
\end{table}
We evaluate DMP security by running the leakage tests and attacks from \cref{tab:programs-classification}.
Since \prospect{}'s original implementation has been evaluated against Spectre-PHT~\cite{daniel23prospect}, we additionally run five Spectre-PHT tests, each with a different leakage source (\ie load, store, branch, jump, and division).
For selective disabling, we modify the programs to disable DMPs before each security-critical code region, and re-enable them at the end of each region.
For \secdmp{} and \prospect{}, we annotate all secrets, which are then moved to a separate memory region by the compiler and linker.
We write the region's boundaries to the CSRs at program start.

Our security evaluation methodology is based on non-interference testing, similar to prior work~\cite{daniel23prospect,winderix24libra,winderix24ami}.
We craft two variants of each program, \(\prog{}_1, \prog{}_2\), that only differ in their secret values, and run them on each prototype.
If \(\prog{}_1, \prog{}_2\) exhibit a difference in any observable signal, we consider the secret as leaked.
For the DMP attacks (1--6), we monitor all signals observable to an attacker, \eg{} memory bus addresses, program counter, tag bits of instruction and data caches, and branch predictor state.
For the three general DMP leakage tests and the Spectre tests, we additionally monitor internal DMP signals, considering data as leaked once the DMP has analyzed it.
This keeps the leakage tests independent of any specific DMP's activation criteria.
Our results in \cref{tab:security-evaluation} confirm that:
\begin{itemize}
    \item Our DMP tests leak secrets exclusively through the DMPs: each program leaks on at least one Proteus+DMP baseline,\footnote{All DMP leakage tests and attacks leak through their targeted DMPs. While the Augury PoC was originally demonstrated on the Apple DMP~\cite{vicarte22augury}, it only leaks on the Intel DMP in our evaluation because we do not model the Apple DMP's unknown \emph{Pref} leakage mechanism.} while permanent disabling prevents all leakage.
    \item Selective disabling only prevents architectural leakage: it is secure for the \underline{GoFetch PoC} and \underline{Load Secret} programs, which only leak through the \emph{Arch} class, and is insecure for programs that leak through other classes.
    \item \secdmp{} correctly mitigates all DMP secret leakage.
    \item \prospect{} alone leaks through all our programs: while it prevents Spectre-PHT leakage, each Spectre test still architecturally loads a secret before the speculative access, which leaks the secret through the DMPs.
    \item \prospect{}+\secdmp{} prevents all secret leakage from both Spectre-PHT and DMPs.
\end{itemize}

\subsubsection{Performance}\label{sec:evaluation:performance}
A meaningful benchmark for evaluating the performance overhead of the defenses should
\begin{enumerate*}[label=(\arabic*)]
    \item compute on secret data and
    \item see a significant performance increase from DMPs.
\end{enumerate*}
The results then show for each defense whether deploying it to strengthen security preserves the performance gained from data-dependent prefetching.
For this purpose, we create \emph{DMP-Bench}, inspired by SpectreGuard~\cite{fustos19spectreguard}, which combines cryptographic and general-purpose benchmarks.

\paragraph{Cryptographic benchmarks}
\begin{figure}[!t]
    \centering
    \includegraphics[width=0.9\linewidth]{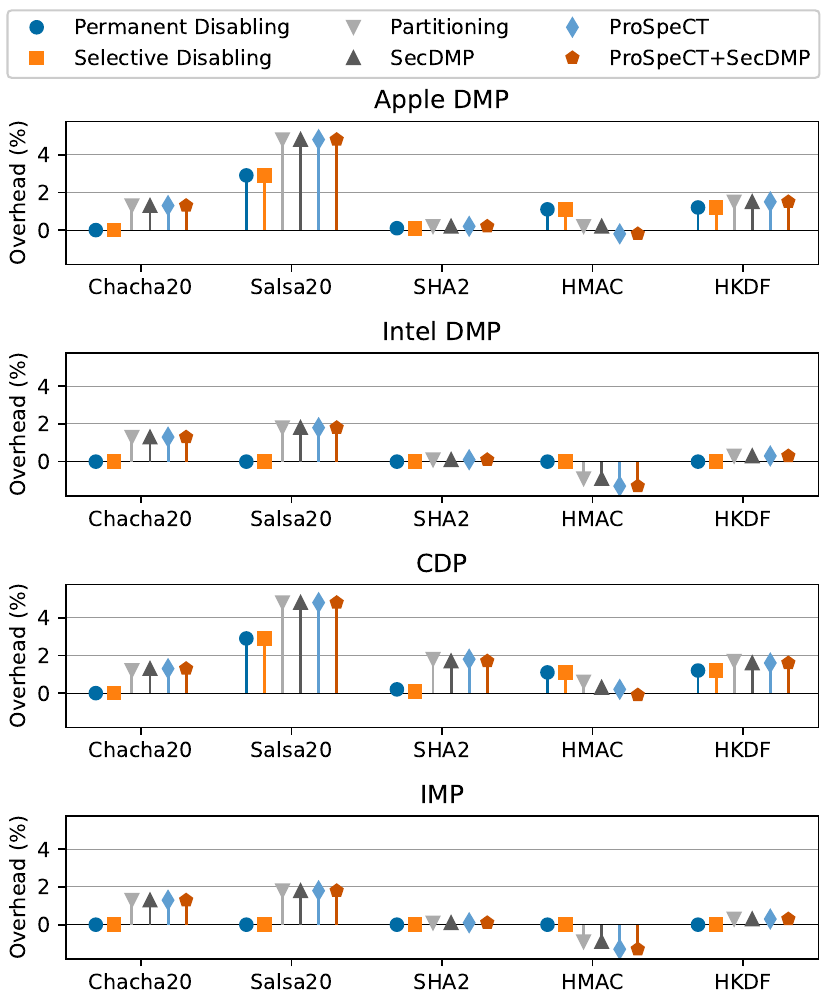}
    \caption{HACL* performance overhead per DMP, relative to Proteus+DMP.}
    \label{fig:hacl_results}
\end{figure}
First, we evaluate the performance of several primitives from the HACL* cryptographic library~\cite{zinzindohoue17hacl}.
As \secdmp{} and \prospect{} require secret partitioning, we manually annotate secrets as in \cref{sec:evaluation:sec}.
We partition five cryptographic primitives: Chacha20, Salsa20, SHA2 (256-bit variant), HMAC, and HKDF.\footnote{We verify the partition correctness through a Proteus plugin. For Salsa20, we observe unnecessary secret spills introduced by the compiler, which we remove by patching the assembly.}
In \cref{sec:practicality-partitioning}, we discuss how future work can overcome the partitioning challenges that arise for more complex programs.
To isolate the performance overhead of the partitioning itself, we evaluate the partitioned programs on the baseline hardware configurations without any defense enabled.
For selective disabling, we disable DMPs before calling each cryptographic function and re-enable them upon return.

The results in \Cref{fig:hacl_results} show that the defenses incur only a small overhead on the cryptographic benchmarks.
The highest overhead of permanent disabling is only \SI{2.9}{\percent} (on Salsa20 for both the Apple DMP and CDP), indicating that these benchmarks barely benefit from DMPs in the first place.
Selective disabling has the same overhead as permanent disabling, since it cannot re-enable DMPs during cryptographic functions.
Additionally, the cost of secret partitioning ranges from \SI{-0.9}{\percent} to \SI{+4.8}{\percent}, with an average of \SI{+1.15}{\percent} over all DMPs and benchmarks, and varies by DMP: the Apple DMP and CDP perform worse on the partitioned programs.
Finally, the overheads of \secdmp{} and \prospect{} stem solely from the partitioning method used; they add no additional cost, as \secdmp{} only blocks secret-dependent prefetches and \prospect{} only stalls speculative execution on secret data.

\paragraph{Work functions}
\begin{figure}[!t]
    \centering
    \includegraphics[width=0.95\linewidth]{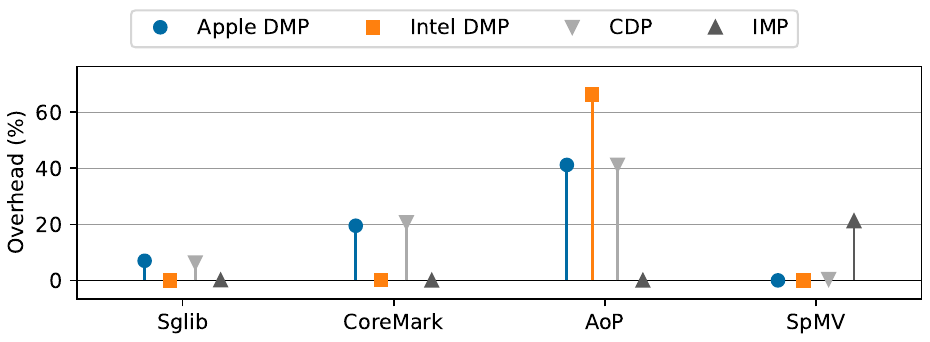}
    \caption{Permanent disabling performance overhead on the different work benchmarks per DMP, relative to Proteus+DMP.}
    \label{fig:work_results_disabling}
\end{figure}
The second component of DMP-Bench consists of public workloads, where no secrets are annotated for \secdmp{} and \prospect{}, and selective disabling never disables the DMP.
In the following, we focus only on the results for permanent disabling, summarized in \Cref{fig:work_results_disabling}, because the other defenses do not affect performance for public workloads (see the full results \extendedElse{in \cref{app:work-results}}{in the extended version~\cite{sprengers27secdmpextended}}).

First, we evaluate the performance of the Embench benchmark suite~\cite{embenchiot} and CoreMark~\cite{coremark}.
From Embench, only \emph{sglib-combined} (Sglib in \Cref{fig:work_results_disabling}) benefits from our DMPs: permanent disabling incurs a performance overhead of \SI{7.0}{\percent} for the Apple DMP and \SI{6.3}{\percent} for CDP.
The other Embench benchmarks gain no performance from the DMPs (\extendedElse{\cf{} \cref{app:work-results}}{\cf{}~\cite{sprengers27secdmpextended}}).
For CoreMark, we observe a more significant performance impact: permanent disabling has an overhead of \SI{20.5}{\percent} for the Apple DMP and \SI{20.7}{\percent} for CDP.
Neither \emph{sglib-combined} nor CoreMark triggers the Intel DMP or IMP, as neither contains their required activation patterns.
We therefore evaluate two additional workloads.
For IMP, we take the SpMV benchmark from HPCG~\cite{dongarra16hpcg}, which was evaluated in the IMP paper~\cite{yu15imp}, and port it to Proteus.
For the Intel DMP, we create a custom benchmark, AoP, that contains the array-of-pointers activation pattern.
\Cref{fig:work_results_disabling} shows that both gain a significant performance increase from their respective DMPs: permanent disabling incurs an overhead of \SI{21.2}{\percent} on SpMV for IMP, and an overhead of \SI{66.4}{\percent} on AoP for the Intel DMP and \SI{41.2}{\percent} for both the Apple DMP and CDP.

\paragraph{DMP-Bench}
\begin{figure*}[!t]
    \centering
    \includegraphics[width=\linewidth]{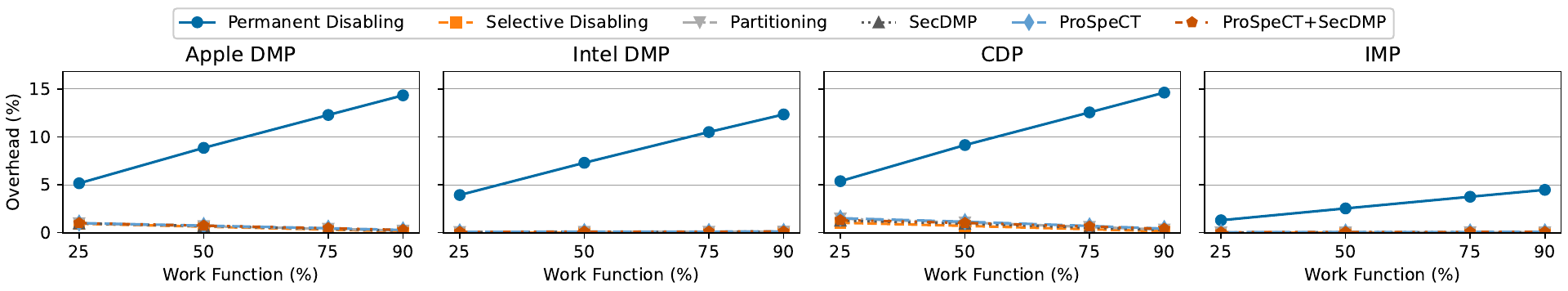}
    \caption{Performance overhead on DMP-Bench, geometric mean over all work/crypto combinations per DMP, relative to Proteus+DMP.}
    \label{fig:dmp_bench_results}
\end{figure*}
Finally, we combine each work benchmark with each cryptographic primitive to obtain \emph{DMP-Bench}, resulting in 20 work/crypto combinations.
We run each combination in four configurations, varying the fraction of time spent in the work function at \SI{25}{\percent}, \SI{50}{\percent}, \SI{75}{\percent}, and \SI{90}{\percent}, for a total of 80 programs evaluated on each defense prototype for each DMP.
The results are shown in \Cref{fig:dmp_bench_results}, reporting the geometric mean over all work/crypto combinations.
The full results are available in \extendedElse{\cref{tab:dmp-bench-AppleDmp,tab:dmp-bench-IntelDmp,tab:dmp-bench-Cdp,tab:dmp-bench-Imp}, in \cref{app:dmp-bench-results}}{the extended version~\cite{sprengers27secdmpextended}}.

Permanent disabling incurs a smaller overhead when most of the time is spent encrypting; this overhead increases as the work function's share grows, with a maximum of \SI{59.9}{\percent} overhead at \SI{90}{\percent} work on AoP-Salsa20 for the Intel DMP.
Neither selective disabling nor \secdmp{} impacts DMP performance for the public work functions, solving the performance drawbacks of permanent disabling.
Selective disabling has an average overhead of \SI{0.26}{\percent} over all DMPs and benchmarks.
Its overhead is small at \SI{25}{\percent} work (at most \SI{2.0}{\percent} on AoP-Salsa20 for CDP) and decreases to near zero at higher work percentages, as it only disables DMPs during the crypto functions.
\secdmp{}'s average overhead is \SI{0.36}{\percent}; the overhead is highest at \SI{25}{\percent} work (at most \SI{3.3}{\percent} on CoreMark-Salsa20 for CDP) and also approaches zero as the work percentage increases.
Again, the overhead stems solely from the secret partitioning method used; \secdmp{} itself adds no overhead as it does not hinder prefetching on public data.
Finally, \prospect{} adds at most \SI{0.5}{\percent} over the partitioning cost.

\subsubsection{Hardware cost}\label{sec:evaluation:hardware}
We use EVAL\nobreakdashes-HD~\cite{demeulemeester25hardwarecost} to measure the critical path and area overhead of the original version of Proteus, Proteus extended with the DMPs, and the hardware defenses.
The critical path remains constant at \SI{12.43}{\nano\second} across all prototypes.
The results in \cref{tab:hardware-evaluation} show that:
\begin{itemize}
    \item IMP introduces a substantial area increase of \SI{29.7}{\percent} caused by its large prefetch lookup table of 16 entries, while the Apple DMP, Intel DMP, and CDP introduce a \SI{2.0}{\percent}, \SI{5.1}{\percent}, and \SI{2.7}{\percent} area increase, respectively.
    \item Selective disabling and \secdmp{} introduce a negligible area overhead of at most \SI{0.2}{\percent} and \SI{0.5}{\percent}, respectively.
    \item \prospect{} alone has a small area overhead (highest is \SI{2.4}{\percent} for the Apple DMP). Extending \prospect{} with \secdmp{} increases area by at most \SI{0.3}{\percent}, since \prospect{} itself already relies on secret partitioning.
\end{itemize}

\input{tables/hardware_table.tex}

%% file: tables/hardware_table.tex
\begin{table}[t]
  \centering
  \caption{Hardware area cost evaluation results.}
  \label{tab:hardware-evaluation}
  \begin{tabular}{l c c c c}\toprule
    \multirow{2}{*}{Defense} & \multicolumn{4}{c}{Area (\si{\milli\meter\squared})} \\
  \cmidrule(lr){2-5}
     & Apple DMP & Intel DMP & CDP & IMP \\ \midrule
    Proteus & 0.6723 & 0.6723 & 0.6723 & 0.6723 \\
    Proteus+DMP & 0.6858 & 0.7064 & 0.6906 & 0.8717 \\
    Selective Disabling & 0.6861 & 0.7079 & 0.6910 & 0.8735 \\
    \secdmp{} & 0.6880 & 0.7102 & 0.6929 & 0.8758 \\
    \prospect{} & 0.7022 & 0.7228 & 0.7070 & 0.8881 \\
    \prospect{}+\secdmp{} & 0.7031 & 0.7253 & 0.7080 & 0.8908 \\
    \bottomrule
  \end{tabular}
\end{table}

%% file: sections/70-discussion-related-future.tex
\section{Discussion and Future Work} \label{sec:discussion}

\subsection{Practicality of Secret Partitioning}\label{sec:practicality-partitioning}
Our \secdmp{} implementation uses a coarse-grained static memory partition, which comes with two main challenges.
Importantly, these challenges stem from the concrete partitioning method used, not from \secdmp{}'s design itself.

The first challenge is handling dynamic allocation of secret data on the heap, which can be solved by splitting the heap into a public and a secret region and differentiating between them at allocation time (either inferred from annotations or by providing a secure heap allocation function)~\cite{schwarz20context,song25securing}.

The second challenge is partitioning the stack.
Coarse-grained partitioning cannot precisely separate public and secret data on the stack, often requiring the entire stack to be marked as secret~\cite{song25securing}.
In \cref{sec:evaluation}, we avoid stack-resident secrets entirely by relocating them to a global memory region.
While this approach works for simple programs, more complex programs risk the compiler spilling secret registers on the stack~\cite{daniel23prospect,schwarz20context,song25securing}.
In addition, it may not be trivial to modify programs to change stack variables into global variables while maintaining functional correctness~\cite{song25securing}.
To address these issues, Song \etal recently proposed SecSep~\cite{song25securing}, an assembly type system that tracks secrets and automatically rewrites programs to partition secret and public data on the stack.
While initially designed to support secure speculation defenses such as \prospect{}~\cite{daniel23prospect}, this technique directly applies to \secdmp{} as well.
Future work could integrate SecSep into compilers, potentially building on existing compiler-based secret-tracking mechanisms~\cite{philippe25gnuzero,zhang25ctllvm,soares23partialcfl}.

\subsection{Marking Secrets}
Similar to other microarchitectural defenses~\cite{yu19oisa,daniel23prospect,fustos19spectreguard,schwarz20context}, we achieve a good security/performance trade-off by requiring software to inform hardware about where secrets are located in memory.
Hence, our work can be seen as additional evidence that extending the hardware-software interface with mechanisms to mark secrets in memory is a powerful enabler for secure and efficient defenses.
Furthermore, we believe that our design could be extended to secure other optimizations in the memory subsystem.
For instance, silent store suppression~\cite{lepak00silentstores}, which leaks whether a written value matches the one already present in memory~\cite{vicarte21pandora}, could be secured by never suppressing writes to addresses holding secret data.

\subsection{Relation to \prospect}

\prospect is a Spectre countermeasure with a formal proof of security~\cite{daniel23prospect}.
Our design and formalization are inspired by \prospect: both mitigations rely on secrecy tracking.
However, while \prospect guarantees security for constant-time programs in the presence of speculation, it fails to protect constant-time programs against leakage from DMPs.

\secdmp{} is motivated by this gap: we show that leakage induced by DMPs can be generally mitigated by reusing the same secrecy-tracking mechanism.
Our formal proof goes beyond that of \prospect{}: we not only show that the combination of \secdmp{} and \prospect{} mitigates both leakage channels, but also that \secdmp{} eliminates all secret leakage through DMPs for an arbitrary baseline.

Our implementation shares the secret-tracking mechanism used by \prospect{} and reuses the existing implementation for the combined evaluation of \prospect{}+\secdmp{}.
An important difference is that, contrary to \prospect{}, which needs precise taint tracking through the pipeline, \secdmp{} only requires taint tracking in memory and the DMP, making it more realistic for complex implementations.

\subsection{Reverse-Engineering of Commercial DMPs}
The concrete DMP models in \cref{sec:formalization-existing-dmps} are based on prior reverse-engineering efforts and are only as precise as these efforts.
For instance, the specific mechanisms responsible for the Apple DMP's \emph{Pref} leakage reported by GoFetch~\cite{chen24gofetch} are unknown, so our model does not capture this leakage class.
Similarly, it is unclear whether the Intel DMP leaks data during speculation (\emph{Trans} class) or only architecturally.
Importantly, our security proofs hold even if the current models are imprecise, as \secdmp{}'s guarantees are established over an abstract DMP model that we believe can accurately capture any commercial design.
Leveraging model-based testing~\cite{oleksenko22revizor} could be a promising approach to refining our models.

Another direction for future work is to reverse-engineer the commercial Arm DMP and analyze its security implications in terms of our leakage classes.
This is of particular interest because Arm's security report~\cite{arm25dmp} indicates that the DMP is recursive, potentially enabling the recursive attack from \cref{sec:new-dmp-attacks} on real hardware.

\subsection{Secure Selective Disabling}
As discussed in \cref{sec:selective-disabling-leakage-classification}, it is fundamentally difficult to predict when public code regions may trigger DMPs into scanning secrets, especially for the \emph{Pref} and \emph{Trans} classes.
Consequently, secure selective disabling is only possible if all secrets are removed from memory before re-enabling DMPs.
Concurrent work by Häuser \etal~\cite{häuser25secrettypes} makes a similar observation and, as a solution, suggests unmapping all secret data when exiting security-critical code regions.
While this approach secures selective disabling across all leakage classes by making secrets inaccessible to DMPs, the authors acknowledge that it can introduce significant performance overhead.

%% file: sections/80-conclusion.tex
\section{Conclusion}
\secdmp{} is a powerful and provably secure defense against side-channel leakage of DMPs.
By requiring secrets to be tagged in memory, \secdmp{} ensures that no secrets leak through the DMP---without compromising the performance gained from prefetching based on public data.
We propose a leakage classification to categorize DMP leakage and show that \secdmp{} is the only defense that mitigates all leakage classes across all DMP designs while introducing minimal performance overhead.
Additionally, our open-source DMP implementations enable empirical validation of our results and can be used for future research on DMP security.

%% file: sections/100-appendix.tex
\if\extendedVersion0
    \input{sections/104-short-semantics.tex}
\fi

\section{Additional DMP Formalizations} \label{app:extraDMP}

\subsection{Formal Model of the Intel DMP}
\Cref{fig:sec-intel-dmp} contains our Intel DMP model, which is informed by the prior reverse-engineering work~\cite{wang25peekawalk} described in \cref{sec:dmp-models-intel}.
As explained there, we assume the Intel DMP can be trained and activated on speculative loads (\ie the \emph{Trans} class), which our speculative semantics captures by default.
The gray boxes in \Cref{fig:sec-intel-dmp} depict the \secdmp{} defense, and are ignored for the insecure model.
\begin{figure}[t]
    \centering
    \secureIntelDmp
    \caption{Model of the Intel DMP. \(\getLoadHistory{}\) extracts the history of load addresses from \(\micro\) that was appended with \(\appendToLoadHistory{}\). \(\predictStride{}\) returns the stride learned during the training phase, or zero if no activation pattern is detected.}
    \label{fig:sec-intel-dmp}
\end{figure}

\subsubsection{Insecure model}
The \(\train\) function in \Cref{fig:sec-intel-dmp} models the Intel DMP's training phase, which looks for an array-of-pointers dereferencing pattern.
For each load, it remembers the address by appending it to \(\micro\) (line~6) and later checks if subsequent loads dereference this address's value (lines~2--5).
The training phase thus checks whether an architecturally loaded value equals a later load address, leaking through the \emph{Arch} class.
Since the pattern the DMP looks for occurs in a loop, it additionally remembers the location of the instruction that triggered the initial load, and internally matches this with the corresponding addresses to derive the stride value.

The \(\prefetch\) function represents the activation of the DMP after a stride has been learned (lines~9--10).
Using the stride, it calculates an address ahead in the pattern (line~11), performs the check that limits the maximum stride value (line~12), loads the candidate pointer at this address (line~14), applies the pointer heuristic (line~15), and finally dereferences the candidate pointer if the heuristic passes (line~16).
It thus loads and subsequently leaks data located ahead in the pattern, \ie through the \emph{Pref} class.

Intel's pointer heuristic (lines~18--22) succeeds only if the upper 17 bits of the candidate pointer are all 0, restricting prefetching to user memory~\cite{wang25peekawalk}.
Additionally, if linear address masking (LAM)\footnote{LAM is a recently added feature allowing software to embed metadata in the highest-order bits of virtual addresses.} is enabled, the heuristic only checks if the bits at indices 63 and 47 are 0.

\subsubsection{Applying SecDMP}
To secure the Intel DMP with \secdmp{}, we modify it as depicted by the gray boxes in \Cref{fig:sec-intel-dmp}.
We modify both the \(\train\) and \(\prefetch\) functions, ensuring the DMP neither trains on nor dereferences secrets.

\subsection{Formal Model of CDP}
\Cref{fig:sec-cdp} contains the model of CDP~\cite{cooksey02dmp} described in \cref{sec:dmp-models-cdp}.
The gray box depicts the \secdmp{} defense, which is ignored for the insecure model.

\subsubsection{Insecure model}
Since CDP has no training phase, we only require the \(\prefetch\) function.
To model CDP's recursive behavior, we define \(\prefetchrec\), which takes the current recursion depth \(depth\) as an argument.
If \(depth\) exceeds CDP's configurable maximum \(maxRecursionDepth\), the function returns (lines~4--5).
Otherwise, it applies CDP's pointer heuristic to every word in the loaded cache block (lines~6--11).
If the heuristic passes, it prefetches the target and the next \(nextLinePrefetches\) adjacent cache blocks, thereby leaking the target (lines~12--13).
Because every word in the loaded cache block is scanned, the model captures CDP's \emph{Arch}, \emph{Sub}, and \emph{Trans} leakage (the latter because, in our semantics, speculative loads trigger prefetchers by default).
Finally, the function calls \(\prefetchrec\) recursively on each prefetched target (line~14), capturing the \emph{Pref} leakage class.

\begin{figure}[t]
    \centering
    \secureCdp
    \caption{Model of CDP where \(n\) represents the number of words in a cache block. CDP has a configurable \(maxRecursionDepth\), number of \(nextLinePrefetches\) and number of \(compareBits\) for the pointer heuristic.}
    \label{fig:sec-cdp}
\end{figure}

\subsubsection{Applying SecDMP}
To secure CDP with \secdmp{}, we modify the model to never scan secret data (gray box in \Cref{fig:sec-cdp}): at line~9, it checks whether the address holds public data and only proceeds with the prefetch if it does.

\subsection{Formal Model of IMP}
\Cref{fig:sec-imp} depicts the model of IMP (\cf \cref{sec:dmp-models-imp}) as described in \cite{yu15imp}.
The gray boxes depict the \secdmp{} defense, and are ignored for the insecure model.

\subsubsection{Insecure model}
The \(\train\) function models IMP's training phase, which looks for the indirect array access pattern: indexing one array (\ie the indirect array) using the values of another array (\ie the index array).
First, IMP looks for a regular stride access pattern that loads the index values of the index array.
For this purpose, line~2 stores the current program counter and address.

Next, lines~3--9 detect the indirect pattern itself.
Since indirect accesses are not part of the index array's stride pattern, IMP first checks that the current load matches no regular stride pattern (line~3).
If so, it gets two earlier addresses \(a_1\) and \(a_2\) that are part of a common stride pattern (line~4), \ie potentially two different accesses into the index array.
IMP then checks whether the current load of address \(a\) is an access into the indirect array that uses the value stored at \(a_2\) as index.
To do so, it computes one candidate indirect base address by subtracting this index from \(a\), and another by subtracting the value stored at \(a_1\) (presumably an earlier index access) from a previously loaded address \(a_c\) that also did not match a regular stride pattern (lines~7--8).
If both candidates match, \(a\) and \(a_c\) are likely accesses belonging to the same indirect pattern, and IMP registers the pattern at line~9.
Since this state update depends on the previously loaded values at \(a_1\) and \(a_2\), it causes leakage through the \emph{Arch} class.
As the size of the indirect array's elements is unknown, IMP considers several shift values \(s\) when computing the candidates (line~6).

The following lines, \ie lines~10--13, verify detected indirect patterns: for a load matching no regular stride pattern (line~3), IMP checks, for each learned pattern (with indirect base address \(b\) and shift \(s\)) and for a recent index access \(a_i\) (with program counter \(\loc'\)) into that pattern's index array, whether the current address \(a\) equals the indirect base address \(b\) plus the index \(\mem(a_i)\) shifted by \(s\) (lines~10--12).
If so, \(a\) is likely an access into the corresponding indirect pattern, and IMP increases the pattern's confidence (line~13), again leaking through \emph{Arch}.

Finally, line~14 records the current address \(a\) as a candidate indirect address, since it matched no regular stride.

The \(\prefetch\) function activates IMP upon loading an entry of the index array of a learned, sufficiently confident indirect pattern, identified by the program counter of the index loads (line~17).
It then loads an entry further ahead in the index array, shifts it, adds it to the indirect base address, and prefetches the result (lines~18--22), hence capturing IMP's \emph{Pref} leakage.
Since our semantics allows speculative loads by default, the model also captures the \emph{Trans} leakage class.

\subsubsection{Applying SecDMP}
The gray boxes in \Cref{fig:sec-imp} depict the modifications required to secure IMP with \secdmp{}: they ensure that \(\train\) never analyzes secret data and that \(\prefetch\) never derives prefetch targets from secrets.

\begin{figure}[t]
    \centering
    \secureImp
    \caption{Model of IMP.
        \(\isStrideAccess{}\) reports whether a load belongs to a regular stride pattern, and \(\getIndexAddresses{}\) returns the earlier address pairs of such a pattern along with the associated program counter.
        \(\appendCandidate{}\) records loads matching no stride pattern, retrieved by \(\getCandidates{}\).
        \(\setIndirectPattern{}\) registers a detected indirect pattern, accessed via \(\getIndirectPatterns{}\) and \(\getIndirectPattern{}\).
        \(\increaseConfidence{}\) and \(\isConfident{}\) track whether a pattern was observed often enough to activate, and \(\getDistance{}\) returns how far ahead in the index array to prefetch.
        IMP is parameterized by its shift values (\(\shiftvalues{}\)); its activation confidence and maximum prefetch distance are abstracted by \(\isConfident{}\) and \(\getDistance{}\).
    }
    \label{fig:sec-imp}
\end{figure}

\if\extendedVersion1
    \section{Full Semantics}
    \subsection{Baseline Speculative Semantics}\label{app:baseline-semantics}
    \input{semantics/baseline.tex}

    \subsection{Semantics of SecDMP}\label{app:secdmp-semantics}
    \input{semantics/secdmp.tex}

    \section{Proofs}\label{app:secdmp-proofs}
    \input{sections/101-proof.tex}

    \section{Evaluation Results} \label{app:evaluation-results}
    \subsection{HACL* Cryptographic Benchmarks} \label{app:hacl-results}
    \cref{tab:hacl-results} contains the numerical results of the performance evaluation for the HACL*~\cite{zinzindohoue17hacl} cryptographic benchmarks.
    The results are given for each defense prototype and for each DMP, relative to the corresponding Proteus+DMP baseline.

    \subsection{Work Benchmarks} \label{app:work-results}
    \Cref{fig:work_results_full} shows the performance overhead on the work benchmarks for each defense prototype and for each DMP, relative to the corresponding Proteus+DMP baseline.
    The work benchmarks consist of Sglib (\ie the \emph{sglib-combined} benchmark from the Embench benchmark suite~\cite{embenchiot}), CoreMark~\cite{coremark}, the custom Array-of-Pointers (AoP) benchmark, and the SpMV benchmark from HPCG~\cite{dongarra16hpcg}.
    \Cref{fig:embench_results} contains the full results of the Embench~\cite{embenchiot} performance evaluation for all 19 benchmarks (including \emph{sglib-combined}) and for the geometric mean over all benchmarks.

    \subsection{DMP-Bench} \label{app:dmp-bench-results}
    \cref{tab:dmp-bench-AppleDmp,tab:dmp-bench-IntelDmp,tab:dmp-bench-Cdp,tab:dmp-bench-Imp} contain the full numerical results of the DMP-Bench performance evaluation for the Apple DMP, Intel DMP, CDP, and IMP, respectively.
    Each table's results are relative to the corresponding Proteus+DMP baseline.
\fi

\input{sections/102-artifact-appendix.tex}

\if\extendedVersion1
    \input{tables/hacl_table.tex}

    \begin{figure*}[!t]
        \centering
        \includegraphics[width=\linewidth]{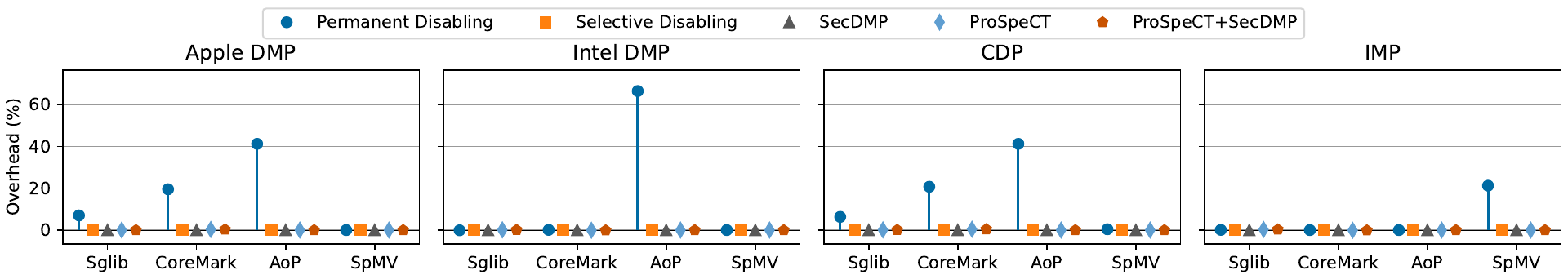}
        \caption{Performance overhead on the different work benchmarks per DMP, relative to Proteus+DMP.}
        \label{fig:work_results_full}
    \end{figure*}

    \begin{figure*}[!t]
        \centering
        \includegraphics[width=\linewidth]{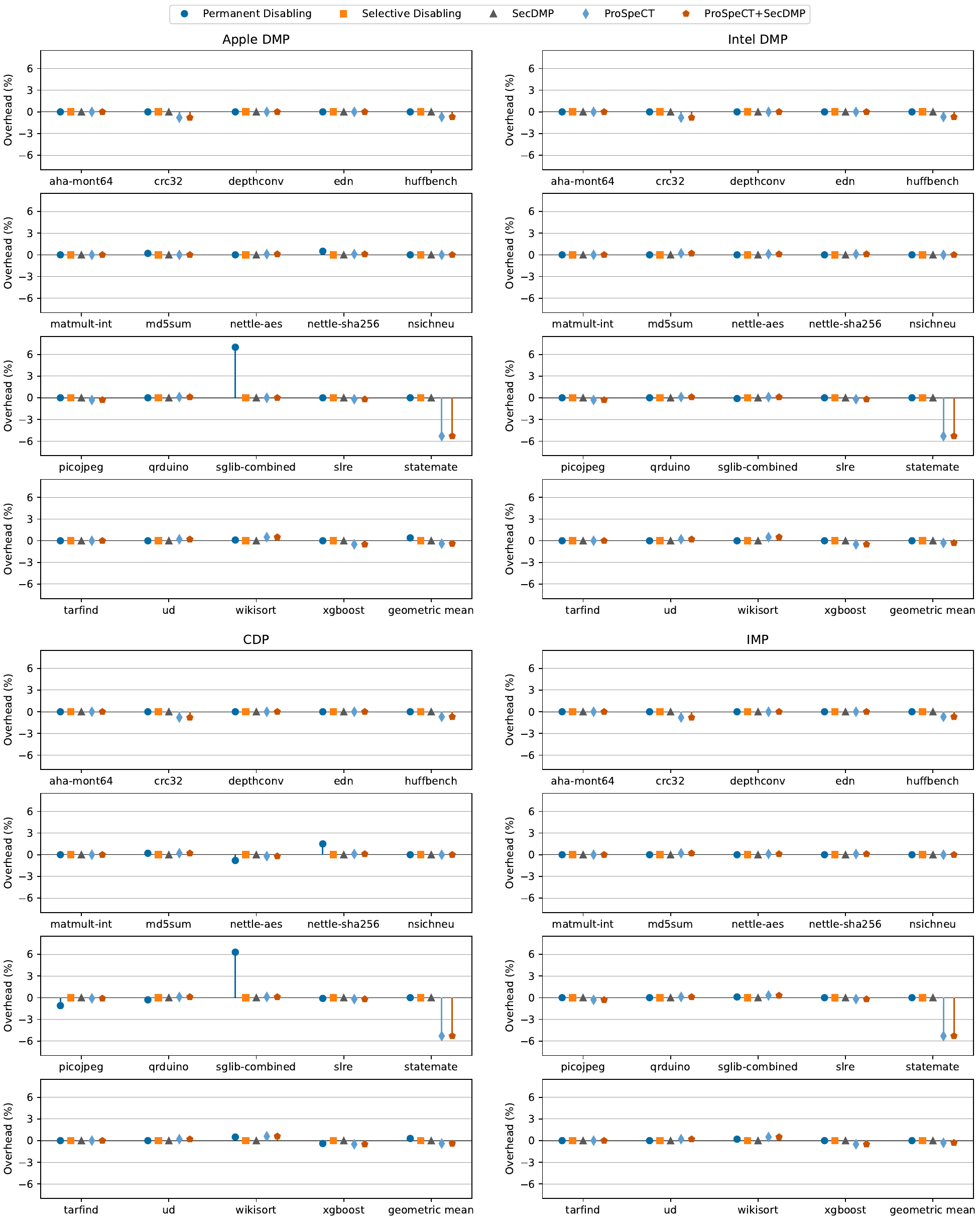}
        \caption{Performance overhead on the Embench benchmark suite per DMP, relative to Proteus+DMP.}
        \label{fig:embench_results}
    \end{figure*}

    \input{tables/dmp_bench_table.tex}

\fi

%% file: sections/104-short-semantics.tex
\section{Semantics of \texorpdfstring{\(\basesem{}\)}{BASE} (excerpt)}\label{app:short-semantics}
We define a representative subset of rules in \Cref{fig:relevant-rules} and leave the full semantics to the extended version~\cite{sprengers27secdmpextended}.

Following the behavior of out-of-order processors, our semantics fetches program instructions in order with the $\fetch{}$ directive and places them in the ROB.
Instructions in the ROB are marked with a tag (denoted \(inst @ T\)) that tracks speculations:
when an instruction $inst$ contains a predicted value, the tag records the location \(\val{l}\) of the instruction that produced the prediction; otherwise, the tag is \(\varepsilon\).
For instance, after fetching a branch or jump instruction, the rule \rulename{fetch-predict-branch-jmp}{fig:relevant-rules} predicts the next location \(\loc'\) and updates \pc{} in the ROB accordingly.

The directive \execute{i} executes the \(i\)-th instruction in the ROB, effectively capturing out-of-order execution.
For instance, the rule \rulename{jmp-commit}{fig:relevant-rules} models the execution of an indirect jump when the prediction is correct.
It evaluates the actual jump target \(\loc{}'\), ensures that it matches the predicted target \(\loc\), and replaces the tag \(\loc_{0}\) with \(\varepsilon\) to mark the instruction as resolved.
The rule \rulename{execute-load}{fig:relevant-rules} executes a \load{}{} instruction: it computes the address \(\val{a}\), retrieves the corresponding value \(\mem(\val{a})\) from the memory, and updates the ROB.
Note that our semantics allows loads to be executed speculatively (\ie{} after a control-flow prediction), however, it does not allow for speculation about store-to-load dependencies nor store-to-load forwarding (\ie{} the rule ensures that there is no preceding store instruction in the ROB).

Instructions are retired in order using the $\retire{}$ directive. Here, the effects of the instruction are committed to the architectural state (\(\reg\) and \(\mem\)).
For instance, the rule \rulename{retire-store-low}{fig:relevant-rules} retires a store instruction on top of the ROB and updates the memory.

\begin{figure}[t]
    \centering
    \begin{mathpar}
        \fetchPredictBranchJump

        \jmpCommit


        \executeLoad

        \retireStoreLow
    \end{mathpar}
    \caption{Evaluation rules for \(\protect\basesem{}\) (excerpt).}
    \label{fig:relevant-rules}
\end{figure}

%% file: semantics/baseline.tex
This section defines the baseline speculative semantics \(\basesem{}\) used in this paper.
It is an out-of-order speculative semantics with direct and indirect control flow speculation (capturing Spectre-v1 and Spectre-v2). It builds on similar existing microarchitectural semantics~\cite{daniel23prospect, guarnieri21hardwaresoftware}.

\paragraph{\(\apl\) function}

The \(\apl\) function applies the effects of the instructions in the reorder buffer (ROB) to the register map.
When the value of an assignment in the reorder buffer is not resolved yet, the corresponding register is mapped to a special symbol \(\botval\), meaning that it is undefined.

\begin{definition}[Apply function $\apl$]\label{def:apl}
    For all register maps \(\reg\) and reorder buffers \(\buf\):
    \begin{equation*}
        \begin{array}{r@{\ }l@{\ }l@{}l@{}l}
            \apl(\varepsilon, \reg)                        & = & \reg                                                                      \\
            \apl(\mov{r}{v} @T\cdot \buf, \reg)            & = & \apl(\buf, \reg\change{\register{r}}{v}) {\sf ~if~} v \in \Vals           \\
            \apl(\mov{r}{e} @T\cdot \buf, \reg)            & = & \apl(\buf, \reg\change{\register{r}}{\botval}) {\sf ~if~} e \not\in \Vals \\
            \apl(\load{x}{e} @T \cdot \buf, \reg)          & = &
            \apl(\buf, \reg\change{\register{x}}{\botval})                                                                                 \\
            \apl(\store{e_{a}}{e_{v}} @T \cdot \buf, \reg) & = & \apl(\buf, \reg)
        \end{array}
    \end{equation*}
\end{definition}

\paragraph{Expression evaluation}
The evaluation of an expression \(e\) with a register map \(\reg\), denoted \(\semexpr{e}{\reg}\), is given in \Cref{fig:eval_expr}.
It is a partial function from expressions to values in \(\Vals\): it is undefined if one of the sub-expressions is undefined (\(\reg(\register{r}) = \botval\)).

\begin{figure}[h]
    \centering
    \begin{equation*}
        \semexpr{e}{\reg} = \begin{cases}
            \val{v}                     & \text{if } e = \val{v} \in \Vals{}                                                                            \\
            \reg(\register{r})          & \text{if } e = \register{r} \in \Regs{} \text{ and } \reg(\register{r}) \in \Vals                             \\
            \val{v_1} \otimes \val{v_2} & \text{if } e = e_1 \otimes e_2, \semexpr{e_1}{\reg} = \val{v_1}, \text{ and } \semexpr{e_2}{\reg} = \val{v_2}
        \end{cases}
    \end{equation*}
    \caption{Evaluation of expressions.}\label{fig:eval_expr}
\end{figure}

\paragraph{Step rule}
The \rulename{step}{fig:step-rule} rule given in \Cref{fig:step-rule} models steps in the hardware semantics \(\basesem{}\).
It determines the next directive to execute using the \(\nextdir\) function and makes a step accordingly using the small-step semantics for directives \(\smicro{\directive}{}{}{}{}\).

\begin{figure}[ht]
    \begin{mathpar}
        \centering
        \stepRule{}
    \end{mathpar}
    \caption{Evaluation rule for \(\protect\basesem{}\).}\label{fig:step-rule}
\end{figure}

\paragraph{Fetch directive}
\begin{itemize}
    \item \rulename{fetch-predict-branch-jmp}{fig:fetch-rules} %
          is the evaluation of a $\fetch{}$ directive when the instruction to fetch is a branch.
          The rule predicts the next location \(\loc'\) and updates $\pc{}$ in the ROB accordingly.
    \item \rulename{fetch-others}{fig:fetch-rules} %
          is the evaluation of a $\fetch{}$ directive when the instruction to fetch is not a branch.
          The rule adds the current instruction to the ROB, increments $\pc{}$,\footnote{For simplicity, we use \(\pc + 1\) to denote the increment of the program counter, but the actual expression depends on the architecture.} and updates the
          ROB accordingly.
    \item In both cases, the current location \(\loc\) is leaked to the microarchitectural context to model updates to model leakage of program control flow.
\end{itemize}

\begin{figure}[ht]
    \begin{mathpar}
        \centering
        \fetchPredictBranchJump{}
        \and
        \fetchOthers{}
    \end{mathpar}
    \caption{Evaluation rules for the \textsc{fetch} directive.}\label{fig:fetch-rules}
\end{figure}

\paragraph{Execute directive for control-flow instructions}
\begin{itemize}
    \item \rulename{branch-commit}{fig:execute-jmp-rules} (resp.\ \rulename{jmp-commit}{fig:execute-jmp-rules}) is the execution of a conditional branch (resp.\ indirect jump) in the ROB. It evaluates the condition (resp.\ jump target), ensures that the predicted target location \(\loc\) corresponds to the actual target, and replaces the tag \(\loc_{0}\) with \(\varepsilon\) to mark the instruction as resolved.
    \item \rulename{branch-rollback}{fig:execute-jmp-rules} (resp.\ \rulename{jmp-rollback}{fig:execute-jmp-rules}) is the execution of a conditional branch (resp.\ indirect jump) in the ROB when the target is incorrectly predicted. The rule discards the instruction that has been mispredicted from the ROB, together with more recent ROB instructions.
    \item In all cases, the actual target location and the location of the branch are leaked to the microarchitectural context to model updates to the branch predictor.
\end{itemize}

\begin{figure}[ht]
    \centering
    \begin{mathpar}
        \branchCommit{}
        \and
        \branchRollback{}
        \and
        \jmpCommit{}
        \and
        \jmpRollback{}
    \end{mathpar}
    \caption{Evaluation rules for the \textsc{execute} directive (for control-flow
        instructions).}\label{fig:execute-jmp-rules}
\end{figure}

\paragraph{Execute directive for memory and assignments}
\begin{itemize}
    \item \rulename{execute-assign}{fig:execute-mem-rules} executes an assignment of an expression \(e\) to a register \register{r}. The rule evaluates \(e\) and updates the value of \(\register{r}\) in the ROB accordingly.
    \item \rulename{execute-load}{fig:execute-mem-rules} executes a \load{}{} instruction in the ROB. The rule computes the address \(\val{a}\) of the load, retrieves the corresponding value \(\mem(\val{a})\) from the memory and updates the ROB.
          Notice that the rule also checks that there is no preceding store instruction in the ROB; it does not allow for speculation about store-to-load dependencies nor store-to-load forwarding.
          The rule also leaks the address to the microarchitectural context to model cache behavior.
    \item \rulename{execute-store}{fig:execute-mem-rules} evaluates both operands of a store instruction and adds the result to the reorder buffer. Notice that the value of the store is not yet committed to the memory; this will be done when the store is retired. The rule also leaks the address of the store to the microarchitectural context to model cache behavior.
\end{itemize}

\begin{figure}[ht]
    \centering
    \begin{mathpar}
        \executeAssign{}
        \and
        \executeLoad{}
        \and
        \executeStore{}
    \end{mathpar}
    \caption{Evaluation rules for the \textsc{execute} directive (assignment and memory instructions).}\label{fig:execute-mem-rules}
\end{figure}

\paragraph{Retire directive}
\begin{itemize}
    \item \rulename{retire-assign}{fig:retire-rules} retires an assignment on top of the ROB and updates the register map accordingly;
    \item \rulename{retire-store-low}{fig:retire-rules} and \rulename{retire-store-high}{fig:retire-rules} retire a store instruction on top of the ROB and updates the memory accordingly. These rules also update the microarchitectural context with the index of the store. The rule \rulename{retire-store-low}{fig:retire-rules} is applied when a secret is written to public memory (\ie{} \(\memsec(\val{a}) = \botsec\)). In that case, the value is also made observable to the attacker by updating the microarchitectural context (\ie{} it is declassified).
\end{itemize}

\begin{figure}[ht]
    \begin{mathpar}
        \centering
        \retireAssign{}
        \and
        \retireStoreLow{}
        \and
        \retireStoreHigh{}
    \end{mathpar}
    \caption{Evaluation rules for the \textsc{retire} directive.}\label{fig:retire-rules}
\end{figure}

%% file: semantics/secdmp.tex
The semantics for \(\secdmpsem{}\) is identical to the baseline speculative semantics \(\basesem{}\) in \cref{app:baseline-semantics}, except for the modifications to the \rulename{execute-load}{fig:execute-mem-rules} rule shown in \Cref{fig:modified-load}.

\begin{figure}[ht]
    \begin{mathpar}
        \updatedSecLoad{}
    \end{mathpar}
    \caption{Evaluation rules for \rulename{execute-load}{fig:execute-mem-rules} in \(\protect\secdmpsem{}\).}\label{fig:modified-load}
\end{figure}

%% file: sections/101-proof.tex
\subsection{Definitions and Notations}
For brevity, in the proof arguments below we write \(\sdmpname{}\) for the \secdmp{} semantics shorthand, \ie \(\sdmpname{} \equiv \secdmpname{}\).

Given a reorder buffer \(\buf\), we let \(\buf[0..i]\) be the sub-buffer containing the first \(i+1\) instructions in \(\buf\).

\begin{definition}[Public equivalence $\loweq$]\label{def:loweq}
    Two configurations \(\conf{\reg,\mem}\) and \(\conf{\reg',\mem'}\) are \(\parsec\)-equivalent, if:
    \begin{enumerate}
        \item \(\reg(\pc) = \reg'(\pc)\),
        \item for all \(\register{x} \in \Regs\), if \(\regsec(\register{x}) = \botsec\) then \(\reg(\register{x}) = \reg'(\register{x})\), and
        \item \(\sproj{\mem} = \sproj{\mem'}\).
    \end{enumerate}
\end{definition}

\begin{definition}[Strong public equivalence]\label{def:strongloweq}
  Two configurations \(\conf{\aconf, \buf, \micro}\) and \(\conf{\aconf', \buf', {\micro'}}\) are \emph{strongly publicly equivalent} \wrt{} a security memory partition \(\memsec\), written \(\conf{\aconf, \buf, \micro} \strongloweq{} \conf{\aconf', \buf', {\micro'}}\), if:
  \begin{itemize}
    \item The program counters are the same in the register maps: \(\reg(\pc) = \reg'(\pc)\);
    \item The program counters are the same for all subsets of reorder buffers: \(\forall i, \semexpr{\pc}{\apl(\buf[0..i], \reg)} = \semexpr{\pc}{\apl(\buf'[0..i], \reg')}\);
    \item For all address \(\val{} \in \Vals\) if \(\memsec(\val{}) = \botsec\) then \(\mem(\val{}) = \mem'(\val{})\);
    \item The microarchitectural states are the same: \(\micro = {\micro'}\).
  \end{itemize}
\end{definition}

\subsection[Proof of Theorem 1]{Proof of \cref{thm:secdmpsecurity}}

\secdmpthm*

\begin{proof}
  By \cref{def:security}, we need to show that for all programs \(\prog{}\), security partitions \(\parsec{}\), and pairs of architectural configurations \(\aconf,\aconf'\):
  \[
  \hwleak{\prog{}}{\parsec{}}{\aconf}{\aconf'}{\sdmpsem{}}
  \implies
  \hwleak{\prog{}}{\parsec{}}{\aconf}{\aconf'}{\basesem{}}
  \]

  \noindent
  Unrolling \cref{def:hwleak}, this is equivalent to showing:
  \begin{gather*}
      \aconf{} \loweq \aconf{'} \wedge \\
      \exists \sdmp{\attacker}.\
      \bigmicro{\sdmpname{}}{\prog{}}{\aconf}{\sdmp{\attacker}} \neq \bigmicro{\sdmpname{}}{\prog{}}{\aconf'}{\sdmp{\attacker}} \\
    \implies \\
    \exists \base{\attacker}.\
    \bigmicro{\base{}}{\prog{}}{\aconf}{\base{\attacker}} \neq \bigmicro{\base{}}{\prog{}}{\aconf'}{\base{\attacker}}.
  \end{gather*}

  Fix two publicly-equivalent architectural configurations \(\aconf{}_{0}\) and \(\aconf'_{0}\), and assume there exists an attacker \(\sdmp{\attacker} = (\sdmp{\micro_0}, \sdmp{\funcs})\) such that running program \(\prog{}\) on \(\sdmpsem{}\) from initial configurations \(\conf{\aconf_{0}, \varepsilon, \micro_{0}}\) and \(\conf{\aconf'_{0}, \varepsilon, \micro_{0}}\) produces different observation traces.
  This means that there exists a step \(n+1\) such that executions start diverging (\ie{} produce different microarchitectural states):
    \begin{gather*}
      \conf{\aconf_{0}, \varepsilon, \micro_{0}^{\sdmpname{}}} \relmicro{}{}^{\sdmpname{}} \dots \relmicro{}{}^{\sdmpname{}} \conf{\aconf_{n}, \beta_{n}, \micro_{n}^{\sdmpname{}}} \\
      \relmicro{}{}^{\sdmpname{}} \conf{\aconf_{n+1}, \beta_{n+1}, \micro_{n+1}^{\sdmpname{}}} \\
      \conf{\aconf'_{0}, \varepsilon, \micro_{0}^{\sdmpname{}}} \relmicro{}{}^{\sdmpname{}} \dots \relmicro{}{}^{\sdmpname{}} \conf{\aconf_{n}', \beta'_{n}, \micro_{n}'^{\sdmpname{}}} \\
      \relmicro{}{}^{\sdmpname{}} \conf{\aconf_{n+1}', \beta'_{n+1}, \micro_{n+1}'^{\sdmpname{}}} \\
      \text{and } \forall 0 \leq i \leq n, \sdmp\micro_{i} = \sdmp{{\micro'}_{i}} \text{ and } \sdmp\micro_{n+1} \neq \sdmp{{\micro'}_{n+1}}
    \end{gather*}

  Without loss of generality, we assume that the function \(\sdmp{\update} \in \sdmp{\funcs}\) leaks \emph{all} its arguments to the microarchitectural state---\ie{} it represents the strongest possible leakage allowed by the semantics.
  Formally, for two sequences of arguments \(a_0, \dots, a_n\), \(a_0', \dots, a_n'\), if there is an $i$ such that $a_i \neq a_i'$, then:
  \begin{equation*}
    \begin{split}
      \sdmp{\update}({a_0, \dots, a_n}) &\neq \\
      \sdmp{\update}({a_0', \dots, a_n'}) &
    \end{split}
    \tag{MaxUpdate}\label{eq:update}
  \end{equation*}
  It is easy to see that for any \(\update\) function that produces a divergence, this divergence is also captured by such a maximally-leaking \(\update\) function. Therefore, this assumption does not restrict the generality of the proof.

  \paragraph{(Goal)} Our goal is to show that there exists an attacker \(\base{\attacker} = (\base{\micro_0}, \base{\funcs})\) such that executions from the same fixed pair \(\aconf_{0}\) and \(\aconf'_{0}\) also diverge in \(\basesem{}\) at step \(n+1\).
  We need to construct the attacker \(\base{\attacker}\) such that:
  \begin{gather*}
    \conf{\aconf_{0}, \varepsilon, \micro_{0}^{\base{}}} \relmicro{}{}^{\base{}} \dots \relmicro{}{}^{\base{}} \conf{\aconf_{n+1}, \beta_{n+1}, \micro_{n+1}^{\base{}}} \\
    \conf{\aconf'_{0}, \varepsilon, \micro_{0}^{\base{}}} \relmicro{}{}^{\base{}} \dots \relmicro{}{}^{\base{}} \conf{\aconf_{n+1}', \beta'_{n+1}, \micro_{n+1}'^{\base{}}} \\
    \text{and } \micro_{n+1}^{\base{}} \neq \micro_{n+1}'^{\base{}}
  \end{gather*}

  \paragraph{(Induction setup)} The proof proceeds by induction on the small step semantics of \(\sdmpsem{}\).
  At each step \(i\) of the execution in \(\sdmpsem{}\),
  we will show that we can construct an attacker \(\base{\attacker} = (\base{\micro_0}, \base{\funcs})\) such that:
  \begin{itemize}
    \item Executions in \(\basesem{}\) produce the same architectural states and reorder buffers as those in \(\secdmpsem{}\) up to step \(i\):
    \begin{equation}
      \begin{gathered}
      \conf{\aconf_{0}, \varepsilon, \micro_{0}^{\base{}}} \relmicro{}{}^{\base{}} \dots \\
      \relmicro{}{}^{\base{}} \conf{\aconf_{i}^{\base{}}, \beta_{i}^{\base{}}, \micro_{i}^{\base{}}} \\
      \text{ and }
      \conf{\aconf'_{0}, \varepsilon, \micro_{0}^{\base{}}} \relmicro{}{}^{\base{}} \dots \\
      \relmicro{}{}^{\base{}} \conf{\aconf_{i}'^{\base{}}, \beta_{i}'^{\base{}}, \micro_{i}'^{\base{}}}\\
      \text { with }
      \aconf_{j}^{\sdmpname{}} = \aconf_{j}^{\base{}} \text{ and }
      \beta_{j}^{\sdmpname{}} = \beta_{j}^{\base{}} \\ \text{ and }
      \aconf_{j}'^{\sdmpname{}} = \aconf_{j}'^{\base{}} \text{ and } \beta_{j}'^{\sdmpname{}} = \beta_{j}'^{\base{}} \\
      \text{ for all } j \leq i.
      \end{gathered}
      \tag{\textbf{IH1}}\label{eq:secdmp-ih1}
    \end{equation}
    \item If the microarchitectural states diverge at step \(i\) in \(\sdmpsem{}\), then they also diverge at step \(i\) in \(\basesem{}\):
    \begin{equation}
      (\micro_{i}^{\sdmpname{}} \neq {\micro'}_{i}^{\sdmpname{}}) \implies (\micro_{i}^{\base{}} \neq {\micro'}_{i}^{\base{}})
      \tag{\textbf{IH2}}\label{eq:secdmp-ih2}
    \end{equation}
  \end{itemize}
  Additionally, we will also show that before the divergence, pairs of configurations in \(\sdmpsem{}\) are related by strong public equivalence:
    \begin{equation}
      \begin{gathered}
        \micro_{i}^{\sdmpname{}} = {\micro'}_{i}^{\sdmpname{}} \implies \\
        \conf{\sdmp{\aconf_{i}}, \sdmp{\buf_i}, \sdmp{\micro_{i}}} \strongloweq
        \conf{\sdmp{\aconf_{i}'}, \sdmp{{\buf'_i}}, \sdmp{{\micro'_{i}}}}
        \tag{\textbf{IH3}}\label{eq:secdmp-ih3}
      \end{gathered}
    \end{equation}

  \paragraph{(Base case)} The base case \(i = 0\) is trivial.
  \cref{eq:secdmp-ih1} follows from the fact that pairs of executions in \(\sdmpsem{}\) and \(\basesem{}\) start from the same pair of initial architectural configurations \((\aconf_{0}, \aconf_{0}')\) and empty reorder buffers;
  \cref{eq:secdmp-ih2} holds vacuously; and
  \cref{eq:secdmp-ih3} holds from the initial assumption that \(\aconf_{0} \loweq \aconf_{0}'\).

  \paragraph{(Inductive case)}
  For the inductive case, we assume that \cref{eq:secdmp-ih1,eq:secdmp-ih2,eq:secdmp-ih3} hold for step \(i\) and we show that they also hold for step \(i+1\).

\begin{itemize}[leftmargin=*, labelwidth=\widthof{\labelitemi~\cref{eq:secdmp-ih1}}, align=left]
\medskip
\item[\labelitemi~\cref{eq:secdmp-ih1}]
We show that we can construct an attacker \(\base{\attacker} = (\base{\micro_0}, \base{\funcs})\) such that executions in \(\basesem{}\) produce the same architectural states and reorder buffers as those in \(\sdmpsem{}\) up to step \(i+1\).

First, we analyze the directives taken at step \(i+1\) in \(\sdmpsem{}\) and show that we can find an attacker strategy \(\base{\nextdir}\) that mimics these directives in \(\basesem{}\) (in both configurations):
\begin{itemize}
    \item
    If \(\micro_{i}^{\sdmpname{}} = {\micro'}_{i}^{\sdmpname{}}\), we know from the \rulename{step}{fig:step-rule} rule that the directive is the same in both configurations in \(\sdmpsem{}\).
    We can therefore choose an attacker strategy \(\base{\nextdir}\) that mimics this directive for step \(i+1\), regardless of the microarchitectural state.\footnote{It is also important when we construct these attacker strategies that their behavior at step \(i+1\) must ``extend'' the behavior at step \(i\), this can easily be done by embedding a counter in the microarchitectural state, or by embedding a list of directives/predictions in the initial microarchitectural context, similar to related work~\cite{olmos25ctcomp}.}
    \item
    If \(\micro_{i}^{\sdmpname{}} \neq {\micro'}_{i}^{\sdmpname{}}\), then directives taken in \(\sdmpsem{}\) may differ between both configurations.
    From \cref{eq:secdmp-ih2} we know that \(\micro_{i}^{\base{}} \neq {\micro'}_{i}^{\base{}}\).
    Therefore, we can construct an attacker strategy \(\base{\nextdir}\) that predicts different directives given \(\micro_{i}^{\base{}}\) and \({\micro'}_{i}^{\base{}}\) to match the directives in \(\secdmpsem{}\).
\end{itemize}

We have shown that we can match directives in \(\basesem{}\) with those in \(\sdmpsem{}\) at step \(i+1\).
We now need to show that executing these directives in \(\basesem{}\) preserves \cref{eq:secdmp-ih1}. We proceed by case analysis on the directive.

Architectural state and reorder buffer are only updated differently between \(\sdmpsem{}\) and \(\basesem{}\) in two rules: \rulename{fetch-predict-branch-jmp}{fig:fetch-rules} (from different predictions from \(\predict\)) or \rulename{execute-load}{fig:execute-mem-rules} rule (because the rule differs in both semantics). (Note that even if \(\update\) differs in both semantics this does not influence the architectural state and reorder buffer).

\begin{itemize}
  \item For \rulename{fetch-predict-branch-jmp}{fig:fetch-rules}, we can apply a similar reasoning as for the \rulename{step}{fig:step-rule} rule to craft a \(\base{\predict}\) function that returns the same prediction as \(\sdmp{\predict}\).
  \item For \rulename{execute-load}{fig:execute-mem-rules}, the only difference between \(\sdmpsem{}\) and \(\basesem{}\) is in the \(\sdmp{\train}\) and \(\sdmp{\prefetch}\) functions. These functions only affect the microarchitectural context \(\micro\) and do not modify the architectural state \(\aconf\) or the reorder buffer \(\buf\).
\end{itemize}

Since architectural states and reorder buffers are equal at step \(i\) (\cref{eq:secdmp-ih1}), and since we have just shown that for all rules we can construct \(\base{\funcs}\) such that the architectural states and reorder buffers are updated identically in \(\basesem{}\) and \(\secdmpsem{}\), then \cref{eq:secdmp-ih1} holds for step \(i+1\).

\bigskip
\item[\labelitemi~\cref{eq:secdmp-ih2}]
Assume that \(\micro_{i+1}^{\sdmpname{}} \neq {\micro'}_{i+1}^{\sdmpname{}}\).
We must show that \(\micro_{i+1}^{\base{}} \neq {\micro'}_{i+1}^{\base{}}\).

If \(\micro_i^{\base{}} \neq {\micro'}_i^{\base{}}\),
then a maximally-leaking \(\base{\update}\), together with \rulename{step}{fig:step-rule}, preserves this divergence at step \(i+1\).

Assume instead that
\(
  \micro_i^{\base{}} = {\micro'}_i^{\base{}}
\).
By the contrapositive of \cref{eq:secdmp-ih2} at step \(i\), we also
have
\(
  \micro_i^{\sdmpname{}} = {\micro'}_i^{\sdmpname{}}.
\)
Consequently, \(\nextdir\) selects the same directive in both
executions. Moreover, by \cref{eq:secdmp-ih3}, the two configurations
in \(\sdmpsem{}\) are strongly publicly equivalent at step \(i\).

We first consider rules other than
\rulename{execute-load}{fig:execute-mem-rules}, and then consider the
load rule separately.

\begin{itemize}
  \smallskip
  \item[]\textbf{(\(\neg\)\rulename{execute-load}
    {fig:execute-mem-rules})}
  These rules are identical in \(\sdmpsem{}\) and \(\basesem{}\).
  Since the incoming microarchitectural states in \(\sdmpsem{}\) are
  equal, a divergence after the rule implies that some other argument
  passed to \(\sdmp{\update}\) differs between the two executions.
  By \cref{eq:secdmp-ih1}, the architectural states and reorder
  buffers in \(\basesem{}\) match those in \(\sdmpsem{}\).
  Hence, the corresponding argument passed to \(\base{\update}\)
  also differs. We can therefore choose a maximally-leaking
  \(\base{\update}\) such that
  \[
    \micro_{i+1}^{\base{}}
    \neq
    {\micro'}_{i+1}^{\base{}}.
  \]

  \medskip
  \item[]\textbf{(\rulename{execute-load}
    {fig:execute-mem-rules})}
  In \(\sdmpsem{}\), the rule computes
  \[
  \begin{aligned}
    \micro_{\mathrm{train}}
      &=
      \sdmp{\train}(
        \micro_i^{\sdmpname{}},
        \loc,
        \sproj{\mem_i},
        \val{a}),\\
    \micro_{\mathrm{pref}}
      &=
      \sdmp{\prefetch}(
        \micro_i^{\sdmpname{}},
        \loc,
        \sproj{\mem_i},
        \val{a}),
  \end{aligned}
  \]
  and passes both candidate states, together with the incoming
  microarchitectural state and the loaded address, to
  \(\sdmp{\update}\).

  Since the incoming microarchitectural states are equal, a
  divergence produced by this final update implies that either the
  training results, the prefetching results, or the loaded addresses
  differ between the two executions.

  By strong public equivalence,
  \(
    \sproj{\mem_i}=\sproj{\mem'_i},
  \)
  and the current locations
  \[
    \loc=\semexpr{\pc}{\apl(\buf_i,\reg_i)}
    \quad\text{and}\quad
    \loc'=\semexpr{\pc}{\apl(\buf'_i,\reg'_i)}
  \]
  are equal. Thus, if the loaded addresses were also equal, all
  arguments passed to \(\train\) and \(\prefetch\) would be equal.
  Their resulting candidate states would therefore be equal as well.

  It follows that a divergence in the \secdmp{} load rule can only arise
  if
  \(
    \val{a}\neq\val{a}'
  \).
  The baseline \rulename{execute-load}{fig:execute-mem-rules} rule
  passes this address to \(\base{\update}\). Therefore, a
  maximally-leaking \(\base{\update}\) also produces
  \[
    \micro_{i+1}^{\base{}}
    \neq
    {\micro'}_{i+1}^{\base{}}.
  \]
\end{itemize}

In every case, a divergence in \(\sdmpsem{}\) at step \(i+1\) can be
reproduced in \(\basesem{}\). Hence,
\cref{eq:secdmp-ih2} holds at step \(i+1\).

\bigskip
\item[\labelitemi~\cref{eq:secdmp-ih3}]
Assuming \(\sdmp{\micro_{i+1}} = \sdmp{{\micro'}_{i+1}}\), we need to show that:
\begin{gather*}
\conf{\sdmp{\aconf_{i+1}}, \sdmp{\buf_{i+1}}, \sdmp{\micro_{i+1}}}  \strongloweq
\conf{\sdmp{\aconf_{i+1}'}, \sdmp{{\buf'}_{i+1}}, \sdmp{{\micro'}_{i+1}}}
\end{gather*}

Note that from \(\sdmp{\micro_{i+1}} = \sdmp{{\micro'}_{i+1}}\) and \cref{eq:update}, we have \(\sdmp{\micro_{i}} = \sdmp{{\micro'}_{i}}\).

From the definition of strong public equivalence (\cf{} \cref{def:strongloweq}), we need to show that updates to \(\pc\) and low-memory locations are the same in both configurations in \(\sdmpsem{}\).

\begin{itemize}
  \item
  Rules that modify the program counter are \rulename{fetch-predict-branch-jmp}{fig:fetch-rules}, \rulename{fetch-others}{fig:fetch-rules},
  \rulename{branch-commit}{fig:execute-jmp-rules}, \rulename{branch-rollback}{fig:execute-jmp-rules}, \rulename{jmp-commit}{fig:execute-jmp-rules}, \rulename{jmp-rollback}{fig:execute-jmp-rules} and \rulename{retire-assign}{fig:retire-rules}. %

  In \rulename{fetch-predict-branch-jmp}{fig:fetch-rules} the update is respectively determined by \(\sdmp{\predict}(\sdmp{\micro_{i}})\) and \(\sdmp{\predict}(\sdmp{{\micro'}_{i}})\) in both executions. Since \(\sdmp{\micro_{i}} = \sdmp{{\micro'}_{i}}\) both predictions are the same and hence both updates to \(\pc\) are the same as well. In \rulename{fetch-others}{fig:fetch-rules}, $\pc$ is simply incremented so equality of next $\pc$ follows from equality of current $\pc$ \cref{eq:secdmp-ih3}.

  In \rulename{branch-commit}{fig:execute-jmp-rules}, \rulename{branch-rollback}{fig:execute-jmp-rules}, \rulename{jmp-commit}{fig:execute-jmp-rules}, \rulename{jmp-rollback}{fig:execute-jmp-rules}, the locations used to update \(\pc\) are passed to the \(\sdmp{\update}{}\) function.
  Given \(\sdmp{\micro_{i+1}} = \sdmp{{\micro'}_{i+1}}\), \cref{eq:update} implies that the arguments passed to \(\sdmp{\update}\) must be identical in both configurations.
  Thus, the new assignments to the \(\pc\) are the same in both configurations.

  Finally, in \rulename{retire-assign}{fig:retire-rules}, the update to \(\pc\) in the register map is the same as the update to \(\pc\) at the top of the reorder buffer, which are the same in both configurations by \cref{eq:secdmp-ih3}.

  \item
  Rules that modify memory are \rulename{retire-store-high}{fig:retire-rules} and \rulename{retire-store-low}{fig:retire-rules}.
  In both rules, the address being written to is passed to the \(\sdmp{\update}\) function.
  Given that \(\sdmp{\micro_{i+1}} = \sdmp{{\micro'}_{i+1}}\), \cref{eq:update} implies that the addresses are the same in both configurations.
  The rule \rulename{retire-store-high}{fig:retire-rules} writes to the secret part of memory, therefore \cref{eq:secdmp-ih3} at step \(i+1\) follows from \cref{eq:secdmp-ih3} at step \(i\).
  In \rulename{retire-store-low}{fig:retire-rules}, the value being written to memory is leaked to the microarchitectural state via \(\sdmp{\update}\). Again, from \(\sdmp{\micro_{i+1}} = \sdmp{{\micro'}_{i+1}}\) and \cref{eq:update}, we get that these values are the same in both configurations.
\end{itemize}

For each rule that modifies \(\pc\) or low-memory locations, we have shown that the updates are the same in both configurations in \(\sdmpsem{}\). Therefore, \cref{eq:secdmp-ih3} holds for step \(i+1\) as well.
\end{itemize}

We have shown that for all steps \(i\) our induction hypotheses \cref{eq:secdmp-ih1,eq:secdmp-ih2,eq:secdmp-ih3} hold.
Our goal was to show that if microarchitectural states start diverging at step \(n+1\) in \(\sdmpsem{}\), then they also start diverging at step \(n+1\) in \(\basesem{}\). This directly follows from \cref{eq:secdmp-ih2}, which concludes our proof.
\end{proof}

\subsection[Proof of Theorem 2]{Proof of \cref{thm:secdmpbase-security}}\label{app:secdmpbase-security}

First, we formalize the assumptions on the base semantics needed to prove \cref{thm:secdmpbase-security}.

\begin{restatable}{assumption}{hypmicro}\label{hyp:micro}
    In \(\absbasesem{}\), the microarchitectural state \(\micro\) is only updated through the functions \(\update{}\).
\end{restatable}

\cref{hyp:corr} assumes that the semantics cannot encode information from \(\micro\) into the architectural state or reorder buffer without using \(\predict\) or \(\nextdir\). Formally, if two configurations agree on their architectural state, reorder buffer, predicted values, and next directives, then their subsequent architectural states and reorder buffers must also agree:

\begin{restatable}{assumption}{hypcorr}\label{hyp:corr}
    For any architectural configuration \(\aconf{}\), reorder buffer \(\buf\) and microarchitectural states \(\micro\), \(\micro'\):
    \begin{gather*}
        \text{if }
        \smicro{}{\absbase{}}{\conf{\aconf, \buf, \micro}\ }{\conf{\aconf_{1}, \buf_{1}, \micro_{1}}}{} \text{, }
        \smicro{}{\absbase{}}{\conf{\aconf, \buf, \micro'}\ }{\conf{\aconf'_{1}, \buf'_{1}, \micro'_{1}}}{} \text{, } \\
        \predict(\micro) = \predict(\micro') \text{, and }
        \nextdir(\micro) = \nextdir(\micro') \\
        \text{then }
        \aconf_{1} = \aconf'_{1} \text{ and } \buf_{1} = \buf'_{1}
    \end{gather*}
\end{restatable}

Finally, \cref{hyp:ct} assumes that the base semantics satisfies a constant-time leakage model with declassification to public memory. Formally, it means that if we assume an injective \(\update\) function, for any two executions in \(\absbasesem{}\), if the outputs of any of the functions \(\getpc\), \(\loadaddr\), or the public projection of memory differ between two configurations at step \(n\), then their microarchitectural states must also differ at step \(n+1\).

\begin{restatable}{assumption}{hypct}\label{hyp:ct}
    Assume an injective \(\update\).
    For any configurations \(c, c', c_1, c_1'\) with memories \(\mem, \mem', \mem_1, \mem_1'\),
    and for any directive \(\directive{}\), if
    \(c \relmicro{\directive{}}{}^{\absbase{}} c_1\) and
    \(c' \relmicro{\directive{}}{}^{\absbase{}} c_1'\), then:
    \begin{gather*}
        \text{if }
        \getpc(\mconf, \directive{}) \neq \getpc(\mconf', \directive{}) \\
        \text{ or if } \loadaddr(\mconf, \directive{}) \neq \loadaddr(\mconf', \directive{})\
        \\
        \text{or if } (\sproj{\mem} = \sproj{\mem'} \text{ and } \sproj{\mem_1} \neq \sproj{\mem_1'}) \\
        \text{then } \micro_1 \neq \micro_1', \\
        \text{with } \micro_1, \micro_1' \text{ the microarchitectural contexts of } c_1, c_1'.
    \end{gather*}
\end{restatable}

\noindent
Under these assumptions, we can derive the following lemma:

\begin{restatable}{lemma}{}\label{lemma:corr}
For any architectural configuration \(\aconf{}\), reorder buffer \(\buf\) and microarchitectural states \(\sdmpbase{\micro}\), \(\absbase{\micro}\):
\begin{gather*}
\text{if }
\smicro{}{\absbase{}}{\conf{\aconf, \buf, \absbase{\micro}}\ }{\conf{\absbase{{\aconf'}}, \absbase{{\buf'}}, \absbase{{\micro'}}}}{} \text{, } \\
\smicro{}{\sdmpbase{}}{\conf{\aconf, \buf, \sdmpbase{\micro}}\ }{\conf{\sdmpbase{{\aconf'}}, \sdmpbase{{\buf'}}, \sdmpbase{{\micro'}}}}{} \text{, } \\
\absbase{\predict}(\absbase{\micro}) = \sdmpbase{\predict}(\sdmpbase{\micro}) \text{, and } \\
\absbase{\nextdir}(\absbase{\micro}) = \sdmpbase{\nextdir}(\sdmpbase{\micro}) \\
\text{then }
\sdmpbase{{\aconf'}} = \absbase{{\aconf'}} \text{ and } \sdmpbase{{\buf'}} = \absbase{{\buf'}}
\end{gather*}
where \(\absbase{\predict}\), \(\absbase{\nextdir}\) (resp. \(\sdmpbase{\predict}\), \(\sdmpbase{\nextdir}\)) are the prediction and next-directive functions in \(\absbasesem{}\) (resp. \(\sdmpbasesem{}\)).
\end{restatable}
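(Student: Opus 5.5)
The plan is to reduce the lemma to \cref{hyp:corr} by showing that one step of \(\sdmpbasesem{}\) produces exactly the architectural state and reorder buffer of a step of \(\absbasesem{}\), taken from the same architectural configuration and reorder buffer, under a suitably chosen microarchitectural state. We unfold the \rulename{step}{fig:step-rule} rule on both sides. By hypothesis the two semantics select the same directive \(\directive = \absbase{\nextdir}(\absbase{\micro}) = \sdmpbase{\nextdir}(\sdmpbase{\micro})\). In \(\sdmpbasesem{}\) the directive step is derived by \textsc{dmp-activate} or \textsc{dmp-no-activate} (\Cref{fig:dmpbase}, with \(\train\), \(\prefetch\) restricted to \(\sproj{\mem}\)). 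Both rules take the base premise \(\conf{\mem,\reg,\buf,\sdmpbase{\micro}} \relmicro{\directive}{}^{\absbase{}} \conf{\sdmpbase{{\aconf'}}, \sdmpbase{{\buf'}}, \micro'}\) and only post-process the microarchitectural component, through \(\update\), \(\train\) and \(\prefetch\). The architectural state and reorder buffer of the \(\sdmpbasesem{}\) step are therefore exactly those of a \(\absbasesem{}\) step from \(\conf{\aconf,\buf,\sdmpbase{\micro}}\) with directive \(\directive\), using the DMP semantics' parameter functions.

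Next we compare this auxiliary base step, run with \(\sdmpbase{\micro}\) and the \(\sdmpbasesem{}\) functions, against the given base step, run with \(\absbase{\micro}\) and the \(\absbasesem{}\) functions. They start from the same \(\aconf\) and \(\buf\). They execute the same directive. By hypothesis they obtain the same prediction \(\absbase{\predict}(\absbase{\micro}) = \sdmpbase{\predict}(\sdmpbase{\micro})\). We then invoke \cref{hyp:corr} to conclude \(\sdmpbase{{\aconf'}} = \absbase{{\aconf'}}\) and \(\sdmpbase{{\buf'}} = \absbase{{\buf'}}\). \Cref{hyp:micro} is used along the way: \(\micro\) enters the base rules only as an input to \(\update\), \(\predict\) and \(\nextdir\). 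Differing \(\update\) functions therefore affect only the resulting microarchitectural state, never \(\aconf\) or \(\buf\).

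The main obstacle is the mismatch between what \cref{hyp:corr} literally provides and what this step needs. The assumption compares two runs with the \emph{same} function set and conditions on \(\predict(\micro)=\predict(\micro')\) for a single \(\predict\). Here the two runs use different attacker functions (\(\absbase{\funcs}\) versus \(\sdmpbase{\funcs}\)). My first attempt is to read \cref{hyp:corr} as quantifying over the parameter functions as well, since it is stated for arbitrary attacker strategies. Failing that, I would build a single combined function set. Its microarchitectural states are tagged pairs; its \(\predict\) and \(\nextdir\) dispatch on the tag and return the given prediction and directive in both cases; its \(\update\) does the same. I would then instantiate \cref{hyp:corr} with this combined set and the two tagged states. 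Under \cref{hyp:micro}, the base rules' effect on \(\aconf\) and \(\buf\) is unchanged by this re-encoding, since those components never depend on \(\micro\) except through \(\predict\) and \(\nextdir\). The equalities then transfer back to the original runs. A final minor point is that determinism of both semantics makes the base step appearing inside the \textsc{dmp-*} premise unique, so the identification in the first paragraph is well defined.
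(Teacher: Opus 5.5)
Your proposal follows the paper's proof: the paper also derives the lemma directly from \cref{hyp:corr}, together with the observation that the \textsc{dmp-activate} and \textsc{dmp-no-activate} rules keep the architectural state and reorder buffer of the underlying base step and only post-process the microarchitectural component. The paper does not discuss the function-set mismatch you flag and implicitly adopts your first reading of \cref{hyp:corr} (each run may use its own parameter functions); your tagged-state fallback does not really avoid that reading, because the representation independence it relies on is not what \cref{hyp:micro} states (that assumption only constrains how the microarchitectural state is updated, not what the architectural state and reorder buffer may depend on).
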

\begin{proof}
  This follows directly from \cref{hyp:corr} and from the fact that the semantics of \(\sdmpbase{}\) does not add extra modifications to the architectural state or reorder buffer compared to \(\absbase{}\) (\ie{} the rule \rulename{dmp-activate}{fig:execute-load-dmp} only adds extra modifications to the microarchitectural state).
\end{proof}

\secdmpbasethm*

\begin{proof}
  By \cref{def:security}, we need to show that for all programs \(\prog{}\), security partitions \(\parsec{}\), and pairs of architectural configurations \(\aconf,\aconf'\):
  \[
  \hwleak{\prog{}}{\parsec{}}{\aconf}{\aconf'}{\sdmpbasesem{}}
  \implies
  \hwleak{\prog{}}{\parsec{}}{\aconf}{\aconf'}{\absbasesem{}}
  \]

  \noindent
  Unrolling \cref{def:hwleak}, this is equivalent to showing:
  \begin{gather*}
      \aconf{} \loweq \aconf{'} \wedge \\
      \exists \sdmpbase{\attacker} \text{ s.t. } \\
      \bigmicro{\sdmpbase{}}{\prog{}}{\aconf}{\sdmpbase{\attacker}} \neq \bigmicro{\sdmpbase{}}{\prog{}}{\aconf'}{\sdmpbase{\attacker}} \\
    \implies \\
    \exists \absbase{\attacker} \text{ s.t. }\
    \bigmicro{\absbase{}}{\prog{}}{\aconf}{\absbase{\attacker}} \neq \bigmicro{\absbase{}}{\prog{}}{\aconf'}{\absbase{\attacker}}
  \end{gather*}

  Fix two publicly-equivalent architectural configurations \(\aconf{}_{0}\) and \(\aconf{' }_{0}\), and assume there exists an attacker \(\sdmpbase{\attacker} = (\sdmpbase{\micro_0}, \sdmpbase{\funcs})\) such that running program \(\prog{}\) on \(\sdmpbasesem{}\) from initial configurations \(\conf{\aconf_{0}, \varepsilon, \micro_{0}}\) and \(\conf{\aconf'_{0}, \varepsilon, \micro_{0}}\) produces different observation traces.
  This means that there exists a step \(n+1\) such that executions start diverging (\ie{} produce different microarchitectural states):
  \begin{gather*}
    \conf{\aconf_{0}, \varepsilon, \micro_{0}^{\sdmpbase{}}} \relmicro{}{}^{\sdmpbase{}} \dots \relmicro{}{}^{\sdmpbase{}} \conf{\aconf_{n}, \beta_{n}, \micro_{n}^{\sdmpbase{}}} \\
    \relmicro{}{}^{\sdmpbase{}} \conf{\aconf_{n+1}, \beta_{n+1}, \micro_{n+1}^{\sdmpbase{}}} \\
    \conf{\aconf'_{0}, \varepsilon, \micro_{0}^{\sdmpbase{}}} \relmicro{}{}^{\sdmpbase{}} \dots \relmicro{}{}^{\sdmpbase{}} \conf{\aconf_{n}', \beta'_{n}, \micro_{n}'^{\sdmpbase{}}} \\
    \relmicro{}{}^{\sdmpbase{}} \conf{\aconf_{n+1}', \beta'_{n+1}, \micro_{n+1}'^{\sdmpbase{}}} \\
        \text{and } \forall 0 \leq i \leq n, \sdmpbase\micro_{i} = \sdmpbase{{\micro'}_{i}} \text{ and } \sdmpbase\micro_{n+1} \neq \sdmpbase{{\micro'}_{n+1}}
  \end{gather*}

  Similar to the proof for \cref{thm:secdmpsecurity} and without loss of generality, we assume an injective \(\sdmpbase{\update}\) (\ie{} \cref{eq:update}).

  \paragraph{(Goal)}
  Our goal is to show that there exists an attacker \(\absbase{\attacker} = (\absbase{\micro_0}, \absbase{\funcs})\) such that executions from the same fixed pair \(\aconf_{0}\) and \(\aconf'_{0}\) also diverge in \(\absbasesem{}\) at step \(n+1\).
  We need to construct the attacker \(\absbase{\attacker}\) such that:
  \begin{gather*}
    \conf{\aconf_{0}, \varepsilon, \micro_{0}^{\absbase{}}} \relmicro{}{}^{\absbase{}} \dots \relmicro{}{}^{\absbase{}} \conf{\aconf_{n+1}, \beta_{n+1}, \micro_{n+1}^{\absbase{}}} \\
    \conf{\aconf'_{0}, \varepsilon, \micro_{0}^{\absbase{}}} \relmicro{}{}^{\absbase{}} \dots \relmicro{}{}^{\absbase{}} \conf{\aconf_{n+1}', \beta'_{n+1}, \micro_{n+1}'^{\absbase{}}} \\
    \text{and } \micro_{n+1}^{\absbase{}} \neq \micro_{n+1}'^{\absbase{}}
  \end{gather*}

  \paragraph{(Induction setup)} The proof proceeds by induction on the small step semantics of \(\sdmpbasesem{}\).
  At each step \(i\) of the execution in \(\sdmpbasesem{}\),
  we will show that we can construct an attacker \(\absbase{\attacker} = (\absbase{\micro_0}, \absbase{\funcs})\) such that:
  \begin{itemize}
    \item Executions in \(\absbasesem{}\) produce the same architectural states and reorder buffers as those in \(\sdmpbasesem{}\) up to step \(i\):
    \begin{equation}
      \begin{gathered}
      \conf{\aconf_{0}, \varepsilon, \micro_{0}^{\absbase{}}} \relmicro{}{}^{\absbase{}} \dots
      \relmicro{}{}^{\absbase{}} \conf{\aconf_{i}^{\absbase{}}, \beta_{i}^{\absbase{}}, \micro_{i}^{\absbase{}}} \\
      \text{ and }
      \conf{\aconf'_{0}, \varepsilon, \micro_{0}^{\absbase{}}} \relmicro{}{}^{\absbase{}} \dots
      \relmicro{}{}^{\absbase{}} \conf{\aconf_{i}'^{\absbase{}}, \beta_{i}'^{\absbase{}}, \micro_{i}'^{\absbase{}}}\\
      \text { with }
      \aconf_{j}^{\sdmpbase{}} = \aconf_{j}^{\absbase{}} \text{ and }
      \beta_{j}^{\sdmpbase{}} = \beta_{j}^{\absbase{}} \\ \text{ and }
      \aconf_{j}'^{\sdmpbase{}} = \aconf_{j}'^{\absbase{}} \text{ and }
      \beta_{j}'^{\sdmpbase{}} = \beta_{j}'^{\absbase{}} \\
      \text{ for all } j \leq i.
      \end{gathered}
      \tag{\textbf{IH1}}\label{eq:secdmpbase-ih1}
    \end{equation}
    \item If the microarchitectural states diverge at step \(i\) in \(\sdmpbasesem{}\), then they also diverge at step \(i\) in \(\absbasesem{}\):
    \begin{equation}
      (\micro_{i}^{\sdmpbase{}} \neq {\micro'}_{i}^{\sdmpbase{}}) \implies (\micro_{i}^{\absbase{}} \neq {\micro'}_{i}^{\absbase{}})
      \tag{\textbf{IH2}}\label{eq:secdmpbase-ih2}
    \end{equation}
  \end{itemize}
  Additionally, we will also show that before the divergence, pairs of configurations in \(\sdmpbase{}\) have the same public memory:
    \begin{equation}
      \micro_{i}^{\sdmpbase{}} = {\micro'}_{i}^{\sdmpbase{}} \implies
      \smemproj{\sdmpbase{\mem_{i}}} = \smemproj{\sdmpbase{{\mem'}_{i}}}. \tag{\textbf{IH3}}\label{eq:secdmpbase-ih3}
    \end{equation}

  \paragraph{(Base case)} The base case \(i = 0\) is trivial.

  \paragraph{(Inductive case)}
  For the inductive case, we assume that \cref{eq:secdmpbase-ih1,eq:secdmpbase-ih2,eq:secdmpbase-ih3} hold for step \(i\) and we show that they also hold for step \(i+1\).

\begin{itemize}[leftmargin=*, labelwidth=\widthof{\labelitemi~\cref{eq:secdmpbase-ih1}}, align=left]
\medskip
\item[\labelitemi~\cref{eq:secdmpbase-ih1}]
We show that we can construct an attacker \(\absbase{\attacker} = (\absbase{\micro_0}, \absbase{\funcs})\) such that executions in \(\absbasesem{}\) produce the same architectural states and reorder buffers as those in \(\sdmpbasesem{}\) up to step \(i+1\).

From \cref{lemma:corr}, we know that if we can match predictions and directives taken at step \(i+1\) in \(\absbasesem{}\) with those in \(\sdmpbasesem{}\), then the architectural states and reorder buffers will be updated identically in both semantics.

\begin{itemize}
    \item
    If \(\micro_{i}^{\sdmpbase{}} = {\micro'}_{i}^{\sdmpbase{}}\), we can simply construct an attacker strategy \(\absbase{\nextdir}\), \(\absbase{\predict}\) that mimics the prediction for step \(i+1\), regardless of the microarchitectural state.
    \item
    If \(\micro_{i}^{\sdmpbase{}} \neq {\micro'}_{i}^{\sdmpbase{}}\), then predictions taken in \(\sdmpbasesem{}\) may differ between both configurations.
    Therefore, we can construct an attacker strategy \(\absbase{\nextdir}\), \(\absbase{\predict}\) that predicts different values given \(\micro_{i}^{\absbase{}}\) and \({\micro'}_{i}^{\absbase{}}\) to match the predictions in \(\sdmpbasesem{}\).
    From \cref{eq:secdmpbase-ih2} we know that \(\micro_{i}^{\absbase{}} \neq {\micro'}_{i}^{\absbase{}}\).
\end{itemize}

We have shown that we can match predictions and directives from \(\absbasesem{}\) with those in \(\sdmpbasesem{}\) at step \(i+1\), which concludes the proof of \cref{eq:secdmpbase-ih1}.

\bigskip
\item[\labelitemi~\cref{eq:secdmpbase-ih2}] Assuming \(\micro_{i+1}^{\sdmpbase{}} \neq {\micro'}_{i+1}^{\sdmpbase{}}\), we must show that \(\micro_{i+1}^{\absbase{}} \neq {\micro'}_{i+1}^{\absbase{}}\) as well.

From \cref{hyp:micro}, we know that, except in the rule \rulename{dmp-activate}{fig:execute-load-dmp}, the microarchitectural state is only updated through the functions \(\absbase{\update{}}\) and \(\sdmpbase{\update{}}\) in \(\basesem{}\) and \(\sdmpbasesem{}\) respectively.
Thus, we need to show that if the \(\sdmpbase{\update{}}\) produces different results, then we can also derive a function \(\absbase{\update{}}\) that produces different results.

If there is already a divergence in \(\absbasesem{}\) (\(\absbase{\micro_{n}} \neq \absbase{\micro_{n}'}\)) then any injective  \(\absbase{\update{}}\) function will preserve this divergence.
The interesting case is when \(\micro_{n}^{\text{base}} = {\micro'}_{n}^{\text{base}}\).
Note that in that case, from \cref{eq:secdmpbase-ih2}, we also have \(\sdmpbase{\micro_{n}} = \sdmpbase{{\micro'}_{n}}\), so the directives are also the same in both executions.
From this, we get that the rules applied at step \(i+1\) in both executions in \(\sdmpbasesem{}\) are the same.
We first consider the case where both executions in \(\sdmpbasesem{}\) follow the rule \rulename{dmp-no-activate}{fig:execute-load-dmp}, and then the case where they both follow \rulename{dmp-activate}{fig:execute-load-dmp}.

\begin{itemize}
  \smallskip
  \item[]\textbf{(\rulename{dmp-no-activate}{fig:execute-load-dmp})}
  Similar to the proof of \cref{thm:secdmpsecurity}, case \textbf{(\(\neg\)\rulename{execute-load}{fig:execute-mem-rules})}.

  \medskip
  \item[]\textbf{(\rulename{dmp-activate}{fig:dmpbase})}
Let \(\sdmpbase{\micro_{\mathrm B}}\) and
\(\sdmpbase{{\micro'}_{\mathrm B}}\) denote the microarchitectural
states produced by the \(\absbasesem{}\) step, and let
\(\sdmpbase{\micro_{\mathrm D}}\) and
\(\sdmpbase{{\micro'}_{\mathrm D}}\) denote the states produced by the
\rulename{dmp-activate}{fig:dmpbase} rule before the final application
of \(\update\) in \rulename{step}{fig:step-rule}.

In the step rule, the incoming microarchitectural states are equal. Therefore, a divergence after the \rulename{step}{fig:step-rule} rule implies
\(\sdmpbase{\micro_{\mathrm D}} \neq \sdmpbase{{\micro'}_{\mathrm D}}\).

The \rulename{dmp-activate}{fig:dmpbase} rule computes these states as
\[
\begin{aligned}
  \sdmpbase{\micro_{\mathrm D}}
  =&
  \sdmpbase{\update}(
    \sdmpbase{\micro_i},
    \sdmpbase{\micro_{\mathrm B}},
    \mathsf{activate},\\
  &
    \sdmpbase{\micro_{\mathrm{train}}},
    \sdmpbase{\micro_{\mathrm{pref}}},
    \sdmpbase{\val{a}}),\\
  \sdmpbase{{\micro'}_{\mathrm D}}
  =&
  \sdmpbase{\update}(
    \sdmpbase{{\micro'}_i},
    \sdmpbase{{\micro'}_{\mathrm B}},
    \mathsf{activate}, \\
  &
    \sdmpbase{{\micro'}_{\mathrm{train}}},
    \sdmpbase{{\micro'}_{\mathrm{pref}}},
    \sdmpbase{\val{a}'}).
\end{aligned}
\]
Consequently, at least one of the base-step result, the training result, the prefetching result, or the loaded address differs between the two executions.

We distinguish two cases:
\begin{itemize}
  \item If
  \(\sdmpbase{\micro_{\mathrm B}}
    \neq \sdmpbase{{\micro'}_{\mathrm B}}\),
  then the divergence already arises from the \(\absbasesem{}\) step.
  As in the \rulename{dmp-no-activate}{fig:dmpbase} case, we can construct an attacker strategy \(\absbase{\update}\) for which
  \(
    \absbase{\micro_{i+1}}
    \neq
    \absbase{{\micro'}_{i+1}}
  \).

  \item Assume instead that
  \(\sdmpbase{\micro_{\mathrm B}}
    = \sdmpbase{{\micro'}_{\mathrm B}}\).
  The training and prefetching results are computed independently
  from the incoming microarchitectural state, the current location,
  the public-memory projection, and the loaded address.
  The incoming microarchitectural states are equal, and
  \cref{eq:secdmpbase-ih3} gives
  \[
    \smemproj{\sdmpbase{\mem_i}}
    =
    \smemproj{\sdmpbase{{\mem'}_i}}.
  \]
  Hence, a difference in the training result, the prefetching result,
  or the loaded address implies that either \(\getpc\) or
  \(\loadaddr\) differs between the two executions.

  By \cref{hyp:ct}, such a difference would imply
  \[
    \sdmpbase{\micro_{\mathrm B}}
    \neq
    \sdmpbase{{\micro'}_{\mathrm B}},
  \]
  contradicting the assumption of this case. Therefore, this case
  cannot produce a divergence.
\end{itemize}

Thus, any divergence produced by
\rulename{dmp-activate}{fig:dmpbase} must already be reflected by the
\(\absbasesem{}\) step, and we can construct an attacker
strategy such that
\(
  \absbase{\micro_{i+1}}
  \neq
  \absbase{{\micro'}_{i+1}}.
\)
\end{itemize}
We have shown for each rule that if there is a divergence in \(\sdmpbasesem{}\) at step \(i+1\), then we can construct an attacker strategy \(\absbase{\update{}}\) such that there is also a divergence in \(\absbasesem{}\) at step \(i+1\).
Thus, \cref{eq:secdmpbase-ih2} holds for step \(i+1\) as well.

\bigskip
\item[\labelitemi~\cref{eq:secdmpbase-ih3}]
Assume that
\(\sdmpbase{\micro_{i+1}} =
  \sdmpbase{{\micro'}_{i+1}}\).
We must show that
\[
  \smemproj{\sdmpbase{\mem_{i+1}}}
  =
  \smemproj{\sdmpbase{{\mem'}_{i+1}}}.
\]

Let \(\sdmpbase{\micro_{\mathrm D}}\) and \(\sdmpbase{{\micro'}_{\mathrm D}}\) denote the microarchitectural states produced by the directive rules before the
final application of \(\sdmpbase{\update}\) in \rulename{step}{fig:step-rule}.
By \cref{eq:update}, equality of the final microarchitectural states implies
\[
  \sdmpbase{\micro_i} = \sdmpbase{{\micro'}_i}
  \quad\text{and}\quad
  \sdmpbase{\micro_{\mathrm D}}
  = \sdmpbase{{\micro'}_{\mathrm D}}.
\]
We can therefore apply the induction assumption
\cref{eq:secdmpbase-ih3} at step \(i\), obtaining
\[
  \smemproj{\sdmpbase{\mem_i}}
  =
  \smemproj{\sdmpbase{{\mem'}_i}}.
\]

Since the incoming microarchitectural states are equal, \(\nextdir\) returns the same directive in both executions, the current locations are equal, and \(\dmpactivate\) therefore selects the same DMP rule.
Let \(\sdmpbase{\micro_{\mathrm B}}\) and \(\sdmpbase{{\micro'}_{\mathrm B}}\) be the microarchitectural states produced by the \(\absbasesem{}\) step.
In both \rulename{dmp-activate}{fig:dmpbase} and \rulename{dmp-no-activate}{fig:dmpbase}, these states are passed as arguments to the final \(\sdmpbase{\update}\).
Hence, from
\(\sdmpbase{\micro_{\mathrm D}}
 = \sdmpbase{{\micro'}_{\mathrm D}}\)
and \cref{eq:update}, we obtain
\[
  \sdmpbase{\micro_{\mathrm B}}
  =
  \sdmpbase{{\micro'}_{\mathrm B}}.
\]

The DMP rules preserve the memory produced by the \(\absbasesem{}\) step, so any difference in public memory at step \(i+1\) would have to come from this base step.
Additionally, the base step starts from equal public memories, and \cref{hyp:ct} tells us that unequal public memories would force the resulting microarchitectural states to differ.
Since we have
\(\sdmpbase{\micro_{\mathrm B}} = \sdmpbase{{\micro'}_{\mathrm B}}\),
the base step cannot produce different public memories.
Therefore,
\[
  \smemproj{\sdmpbase{\mem_{i+1}}}
  =
  \smemproj{\sdmpbase{{\mem'}_{i+1}}},
\]
which concludes the proof of \cref{eq:secdmpbase-ih3}.
\end{itemize}

We have shown that for all steps \(i\) our induction hypotheses \cref{eq:secdmpbase-ih1,eq:secdmpbase-ih2,eq:secdmpbase-ih3} hold.
Our goal was to show that if microarchitectural states start diverging at step \(n+1\) in \(\sdmpbasesem{}\), then they also start diverging at step \(n+1\) in \(\absbasesem{}\). This directly follows from \cref{eq:secdmpbase-ih2}, which concludes our proof.

\end{proof}

%% file: sections/102-artifact-appendix.tex
\section{Artifact Appendix}

The \secdmp{} artifact contains all implementations, benchmarks, and scripts from \secdmp{}'s implementation and evaluation (\cf \cref{sec:implementation-evaluation}).
It consists of the source code of the Proteus \riscv{} processor, the four open-source DMP implementations (the Apple DMP, Intel DMP, CDP, and IMP, see \cref{sec:dmp-models}), and the implementation of \secdmp{} and of the defenses we compare against (\ie permanent disabling, selective disabling, \prospect{} and \prospect{}+\secdmp{}).
It additionally contains all security and performance benchmarks and all evaluation scripts, including the scripts to generate the paper's figures and tables.
A Dockerfile provides the complete build environment for reproducing the evaluation.

\subsection{Description \& Requirements}

\subsubsection{How to access}
The artifact is available at \artifactrepo{} and has been archived on Zenodo at \zenodoartifact{}.
It contains a comprehensive \texttt{README.md} file, describing all artifacts and containing detailed instructions for reproducing the evaluation results, and is released under the MIT license.

\subsubsection{Hardware dependencies}
None.

\subsubsection{Software dependencies}
The artifact includes a Dockerfile containing all necessary dependencies.
As such, the only requirement is Docker and approximately \SI{20}{\giga\byte} of disk space for the Docker container, Docker build cache, additional storage for temporary files during the Docker build process, and the space occupied by the compiled benchmarks.

\subsubsection{Benchmarks}
All benchmarks are included in the artifact repository.
The security evaluation uses our nine DMP leakage programs (IMP Leak, Augury PoC, GoFetch PoC, Peek-a-Walk, DMP-Spectre, Recursion, Load Secret, Same Cache Block, Transient Load) and the five Spectre-PHT tests (PHT-Load, PHT-Store, PHT-Branch, PHT-Jump, PHT-Division).
The performance evaluation uses five cryptographic primitives (Chacha20, Salsa20, SHA2, HMAC, HKDF) from the HACL* cryptographic library, the Embench benchmark suite, CoreMark, the SpMV benchmark from HPCG, our Array-of-Pointers (AoP) benchmark, and finally DMP-Bench, which combines each work benchmark with each cryptographic primitive at four work percentage variations.

\subsection{Artifact Installation \& Configuration}
\label{app:artifact-app-setup}

Perform the following steps to install and configure the artifact and its dependencies:
\begin{enumerate}
    \item Install Docker following the instructions at \url{https://docs.docker.com/engine/install/};
    \item Clone the artifact repository from \artifactrepo{};
    \item Build the Docker image (takes approximately 20 compute-minutes, detailed instructions are provided in the artifact's \texttt{README.md}).
\end{enumerate}

The \verb|verify_installation.sh| script runs a minimal working example in the container to verify the installation, taking about one compute-minute.

Afterwards, perform the following two steps to build all the RISC-V processor prototypes (Proteus+DMP, permanent disabling, selective disabling, \secdmp{}, \prospect{}, and \prospect{}+\secdmp{}, for each of the four DMPs) and to compile all benchmarks used in the evaluation.
This takes about 75 compute-minutes.

\begin{enumerate}
    \item Build the prototypes by executing the \verb|build_processor_sims.sh| script;
    \item Compile the benchmarks by running \verb|make|.
\end{enumerate}

\subsection{Major Claims}
The main claims are that \secdmp{} mitigates all DMP secret leakage (C1) while introducing minimal performance overhead (C2) and low hardware cost (C3), improving over the state-of-the-art DMP defenses, \ie permanent disabling and selective disabling, as they incur high performance overhead (C2) and fail to mitigate all DMP leakage (C1), respectively.
Additionally, \secdmp{} can be combined with strong speculative-execution defenses such as \prospect{} to obtain full microarchitectural security against both DMP leakage and Spectre attacks (C1).
Each claim is explained in detail below.

\begin{itemize}
    \item (C1): \secdmp{} mitigates all DMP secret leakage, while the selective disabling defense only prevents architectural DMP leakage.
    Moreover, combining \secdmp{} with \prospect{} prevents leakage from both Spectre-PHT (which \prospect{}'s existing implementation has been evaluated against) and DMPs.
    This claim is supported by (E1) and reported in \cref{tab:security-evaluation}, \cref{sec:evaluation:sec}.
    \item (C2): \secdmp{} (and selective disabling) preserves the performance benefit of the DMPs, whereas permanently disabling them incurs a significant overhead (up to \SI{66.4}{\percent} on our work benchmarks and \SI{59.9}{\percent} on DMP-Bench).
    \secdmp{}'s own overhead stems solely from the secret partitioning method used (at most \SI{4.8}{\percent} and an average of \SI{1.15}{\percent} on the cryptographic benchmarks).
    This is supported by experiments (E2), (E3) and (E4), whose results are reported in \Cref{fig:hacl_results,fig:work_results_disabling,fig:dmp_bench_results}, \cref{sec:evaluation:performance}.
    \item (C3): \secdmp{}'s hardware cost is negligible: it increases the area by at most \SI{0.5}{\percent}, compared to \SI{0.2}{\percent} for selective disabling, and leaves the critical path unchanged at \SI{12.43}{\nano\second}.
    Combining \secdmp{} with \prospect{} increases the area by only \SI{0.3}{\percent}.
    This claim is supported by (E5) and reported in \cref{tab:hardware-evaluation}, \cref{sec:evaluation:hardware}.
\end{itemize}

\subsection{Evaluation}
\label{app:artifact-app-evaluation}

All experiments share the preparation steps described in \cref{app:artifact-app-setup}, which need to be performed only once.
Moreover, each experiment outputs its results to the \texttt{results/} directory.
The artifact includes an \texttt{expected\_results/} directory, which contains our own output, and which should be compared against the generated output in \texttt{results/}.
Since the simulation is deterministic, results obtained with the provided Docker image match \texttt{expected\_results/} exactly, except for the hardware evaluation, whose numbers can fluctuate negligibly between runs.

Experiments (E1)--(E5) can be run individually as described in the artifact's \texttt{README.md}.
Afterwards, follow the \texttt{README.md}'s instructions to generate the paper's figures and tables, which can be compared against the figures and tables provided in \texttt{expected\_results/}.
The artifact additionally includes the \verb|run_secdmp_eval.sh| script, which runs the entire evaluation with one command.
Running all experiments is expected to take around 17 compute-hours.

\subsubsection{Experiment (E1)}
\emph{Security} [5 human-minutes + 20 compute-minutes]:
verifies non-interference of all programs on all prototypes, checking attacker-observable and internal DMP signals (the latter for leakage tests only; see \cref{sec:evaluation:sec}).

\textit{[Execution]} Run \texttt{source .venv/bin/activate \&\& python3 security\_eval.py ; deactivate}.

\textit{[Results]} The \texttt{results/security\_evaluation/} directory contains a table marking each program as secure or leaking, per DMP, for each prototype.
The outcome should match \texttt{expected\_results/} and correspond to \cref{tab:security-evaluation}.

\subsubsection{Experiment (E2)}
\emph{HACL* cryptographic performance} [5 human-minutes + 2 compute-minutes]:
measures the cycle counts of the five HACL* primitives on every prototype and verifies that the secret annotations do not spill.

\textit{[Execution]} Run the command \texttt{python3 performance\_eval\_hacl\_crypto.py} followed by \texttt{./check\_crypto\_partition.sh}, which prints no exceptions if the partitioning is correct.

\textit{[Results]} \path{results/performance_evaluation/hacl_crypto/} reports the cycle counts and relative overheads of every prototype, supporting (C2) and corresponding to \Cref{fig:hacl_results}, \cref{sec:evaluation:performance}.

\subsubsection{Experiment (E3)}
\emph{Work benchmarks} [5 human-minutes + 2 compute-hours]:
measures Embench, CoreMark, AoP, and SpMV on every prototype.

\textit{[Execution]} Run \path{./performance_eval_embench.sh}, \path{./performance_eval_coremark.sh}, \path{./performance_eval_array_of_pointers.sh} and \path{./performance_eval_spmv.sh}.

\textit{[Results]} The corresponding subdirectories of \path{results/performance_evaluation/} support (C2) and correspond to \Cref{fig:work_results_disabling}, \cref{sec:evaluation:performance}.

\subsubsection{Experiment (E4)}
\emph{DMP-Bench} [5 human-minutes + 6 compute-hours]:
measures the 20 work/crypto combinations at four work percentages, for each DMP on every prototype.

\textit{[Execution]} Run the command \texttt{python3 performance\_eval\_dmp\_bench.py}.

\textit{[Results]} \path{results/performance_evaluation/dmp_bench/} supports (C2) and corresponds to \Cref{fig:dmp_bench_results}.

\subsubsection{Experiment (E5)}
\emph{Hardware cost} [5 human-minutes + 8--24 compute-hours, depending on RAM]:
synthesizes all prototypes with EVAL-HD, reporting area and critical path.

\textit{[Execution]} Run \texttt{./hardware\_eval.sh}.

\textit{[Results]} \path{results/hardware_evaluation/} supports (C3) and corresponds to \cref{tab:hardware-evaluation}; this is the only experiment whose numbers fluctuate between runs.
These fluctuations are negligible, as they occur only in the fourth decimal place (\ie at the \SI{e-4}{\milli\meter\squared} level).

%% file: tables/hacl_table.tex
\begin{table*}[t]
  \centering
  \caption{Performance overhead in \% on the HACL* primitives per DMP, relative to Proteus+DMP.}
  \label{tab:hacl-results}
  \begin{tabular}{l c c c c c c c c c c}\toprule
    \multirow{2}{*}{Defense} & \multicolumn{5}{c}{Apple DMP} & \multicolumn{5}{c}{Intel DMP} \\
  \cmidrule(lr){2-6}\cmidrule(lr){7-11}
     & Chacha20 & Salsa20 & SHA2 & HMAC & HKDF & Chacha20 & Salsa20 & SHA2 & HMAC & HKDF \\ \midrule
    Permanent Disabling & \num{+0.0} & \num{+2.9} & \num{+0.1} & \num{+1.1} & \num{+1.2} & \num{+0.0} & \num{+0.0} & \num{+0.0} & \num{+0.0} & \num{+0.0} \\
    Selective Disabling & \num{+0.0} & \num{+2.9} & \num{+0.1} & \num{+1.1} & \num{+1.2} & \num{+0.0} & \num{+0.0} & \num{+0.0} & \num{+0.0} & \num{+0.0} \\
    Partitioning & \num{+1.3} & \num{+4.8} & \num{+0.2} & \num{+0.2} & \num{+1.5} & \num{+1.3} & \num{+1.8} & \num{+0.1} & \num{-0.9} & \num{+0.3} \\
    \secdmp{} & \num{+1.3} & \num{+4.8} & \num{+0.2} & \num{+0.2} & \num{+1.5} & \num{+1.3} & \num{+1.8} & \num{+0.1} & \num{-0.9} & \num{+0.3} \\
    \prospect{} & \num{+1.3} & \num{+4.8} & \num{+0.2} & \num{-0.2} & \num{+1.5} & \num{+1.3} & \num{+1.8} & \num{+0.1} & \num{-1.3} & \num{+0.3} \\
    \prospect{}+\secdmp{} & \num{+1.3} & \num{+4.8} & \num{+0.2} & \num{-0.2} & \num{+1.5} & \num{+1.3} & \num{+1.8} & \num{+0.1} & \num{-1.3} & \num{+0.3} \\
    \midrule
    \multirow{2}{*}{Defense} & \multicolumn{5}{c}{CDP} & \multicolumn{5}{c}{IMP} \\
  \cmidrule(lr){2-6}\cmidrule(lr){7-11}
     & Chacha20 & Salsa20 & SHA2 & HMAC & HKDF & Chacha20 & Salsa20 & SHA2 & HMAC & HKDF \\ \midrule
    Permanent Disabling & \num{+0.0} & \num{+2.9} & \num{+0.2} & \num{+1.1} & \num{+1.2} & \num{+0.0} & \num{+0.0} & \num{+0.0} & \num{+0.0} & \num{+0.0} \\
    Selective Disabling & \num{+0.0} & \num{+2.9} & \num{+0.1} & \num{+1.1} & \num{+1.2} & \num{+0.0} & \num{+0.0} & \num{+0.0} & \num{+0.0} & \num{+0.0} \\
    Partitioning & \num{+1.2} & \num{+4.8} & \num{+1.8} & \num{+0.6} & \num{+1.7} & \num{+1.3} & \num{+1.8} & \num{+0.1} & \num{-0.9} & \num{+0.3} \\
    \secdmp{} & \num{+1.3} & \num{+4.8} & \num{+1.7} & \num{+0.3} & \num{+1.6} & \num{+1.3} & \num{+1.8} & \num{+0.1} & \num{-0.9} & \num{+0.3} \\
    \prospect{} & \num{+1.3} & \num{+4.8} & \num{+1.8} & \num{+0.2} & \num{+1.6} & \num{+1.3} & \num{+1.8} & \num{+0.1} & \num{-1.3} & \num{+0.3} \\
    \prospect{}+\secdmp{} & \num{+1.3} & \num{+4.8} & \num{+1.7} & \num{-0.1} & \num{+1.6} & \num{+1.3} & \num{+1.8} & \num{+0.1} & \num{-1.3} & \num{+0.3} \\
    \bottomrule
  \end{tabular}
\end{table*}

%% file: tables/dmp_bench_table.tex
\begin{table*}[t]
  \centering
  \caption{Performance overhead in \% on DMP-Bench for the Apple DMP, relative to Proteus+DMP.}
  \label{tab:dmp-bench-AppleDmp}
  \setlength{\tabcolsep}{1.9pt}
  \begin{tabular}{l S[table-format=+2.1] S[table-format=+2.1] S[table-format=+2.1] S[table-format=+2.1] S[table-format=+2.1] S[table-format=+2.1] S[table-format=+2.1] S[table-format=+2.1] S[table-format=+2.1] S[table-format=+2.1] S[table-format=+2.1] S[table-format=+2.1] S[table-format=+2.1] S[table-format=+2.1] S[table-format=+2.1] S[table-format=+2.1]}\toprule
    \multirow{2}{*}{Defense} & \multicolumn{4}{c}{Sglib} & \multicolumn{4}{c}{CoreMark} & \multicolumn{4}{c}{AoP} & \multicolumn{4}{c}{SpMV} \\
  \cmidrule(lr){2-5}\cmidrule(lr){6-9}\cmidrule(lr){10-13}\cmidrule(lr){14-17}
     & \multicolumn{1}{c}{25\%\,W} & \multicolumn{1}{c}{50\%\,W} & \multicolumn{1}{c}{75\%\,W} & \multicolumn{1}{c}{90\%\,W} & \multicolumn{1}{c}{25\%\,W} & \multicolumn{1}{c}{50\%\,W} & \multicolumn{1}{c}{75\%\,W} & \multicolumn{1}{c}{90\%\,W} & \multicolumn{1}{c}{25\%\,W} & \multicolumn{1}{c}{50\%\,W} & \multicolumn{1}{c}{75\%\,W} & \multicolumn{1}{c}{90\%\,W} & \multicolumn{1}{c}{25\%\,W} & \multicolumn{1}{c}{50\%\,W} & \multicolumn{1}{c}{75\%\,W} & \multicolumn{1}{c}{90\%\,W} \\ \midrule
    \multicolumn{17}{l}{\emph{Chacha20}} \\
    Permanent Disabling & +1.7 & +3.3 & +4.8 & +5.8 & +4.7 & +9.6 & +14.4 & +17.4 & +10.3 & +22.1 & +31.8 & +37.2 & +0.0 & +0.0 & +0.0 & +0.0 \\
    Selective Disabling & +0.0 & +0.0 & +0.0 & +0.0 & +0.0 & +0.0 & +0.0 & +0.0 & +0.0 & +0.0 & +0.0 & +0.0 & +0.0 & +0.0 & +0.0 & +0.0 \\
    Partitioning & +0.9 & +0.6 & +0.3 & +0.2 & +1.0 & +0.7 & +0.3 & +0.1 & +1.5 & +0.8 & +0.2 & -0.2 & -2.5 & -1.8 & -1.1 & -0.7 \\
    \secdmp{} & +0.9 & +0.6 & +0.3 & +0.2 & +1.0 & +0.7 & +0.3 & +0.1 & +1.5 & +0.8 & +0.2 & -0.2 & -2.5 & -1.8 & -1.1 & -0.7 \\
    \prospect{} & +1.2 & +1.0 & +0.6 & +0.5 & +1.0 & +0.8 & +0.5 & +0.4 & +1.4 & +0.7 & +0.2 & -0.2 & -2.6 & -1.9 & -1.1 & -0.7 \\
    \prospect{}+\secdmp{} & +1.2 & +1.0 & +0.6 & +0.5 & +1.0 & +0.8 & +0.5 & +0.4 & +1.4 & +0.7 & +0.2 & -0.2 & -2.6 & -1.9 & -1.1 & -0.7 \\
    \midrule
    \multicolumn{17}{l}{\emph{Salsa20}} \\
    Permanent Disabling & +3.5 & +4.6 & +5.6 & +5.9 & +6.1 & +10.5 & +14.9 & +17.5 & +11.7 & +20.2 & +30.3 & +36.8 & +1.3 & +0.9 & +0.5 & +0.2 \\
    Selective Disabling & +1.8 & +1.2 & +0.6 & +0.2 & +1.3 & +0.9 & +0.4 & +0.2 & +1.8 & +1.3 & +0.7 & +0.3 & +1.3 & +0.9 & +0.4 & +0.2 \\
    Partitioning & +2.9 & +1.9 & +1.0 & +0.1 & +3.1 & +2.0 & +1.0 & +0.3 & +2.5 & +1.9 & +1.0 & +0.5 & +2.6 & +2.0 & +1.5 & +1.0 \\
    \secdmp{} & +3.0 & +2.1 & +1.0 & +0.2 & +3.1 & +2.0 & +1.0 & +0.3 & +2.5 & +1.9 & +1.0 & +0.5 & +2.6 & +2.0 & +1.5 & +1.0 \\
    \prospect{} & +3.1 & +2.2 & +1.2 & +0.4 & +3.2 & +2.2 & +1.2 & +0.6 & +2.5 & +1.9 & +1.0 & +0.5 & +3.0 & +2.3 & +1.7 & +1.1 \\
    \prospect{}+\secdmp{} & +3.1 & +2.2 & +1.2 & +0.4 & +3.2 & +2.2 & +1.2 & +0.6 & +2.5 & +1.9 & +1.0 & +0.5 & +3.0 & +2.3 & +1.7 & +1.1 \\
    \midrule
    \multicolumn{17}{l}{\emph{SHA2}} \\
    Permanent Disabling & +2.6 & +4.2 & +5.5 & +6.3 & +5.9 & +10.4 & +14.8 & +17.5 & +11.6 & +21.5 & +31.2 & +37.0 & +0.9 & +0.6 & +0.3 & +0.1 \\
    Selective Disabling & +0.9 & +0.6 & +0.3 & +0.1 & +1.0 & +0.7 & +0.3 & +0.1 & +0.9 & +0.6 & +0.3 & +0.1 & +0.9 & +0.6 & +0.3 & +0.1 \\
    Partitioning & +0.0 & +0.1 & +0.1 & +0.2 & +0.8 & +0.5 & +0.2 & +0.0 & +0.0 & -0.1 & -0.3 & -0.3 & +0.1 & +0.1 & +0.1 & +0.1 \\
    \secdmp{} & +0.0 & +0.1 & +0.1 & +0.2 & +0.8 & +0.5 & +0.2 & +0.0 & +0.0 & -0.1 & -0.3 & -0.3 & +0.1 & +0.1 & +0.1 & +0.1 \\
    \prospect{} & +0.1 & +0.3 & +0.4 & +0.5 & +1.1 & +0.8 & +0.5 & +0.4 & +0.1 & +0.0 & -0.2 & -0.3 & +0.1 & +0.1 & +0.1 & +0.1 \\
    \prospect{}+\secdmp{} & +0.1 & +0.3 & +0.4 & +0.5 & +1.1 & +0.8 & +0.5 & +0.4 & +0.1 & +0.0 & -0.2 & -0.3 & +0.1 & +0.1 & +0.1 & +0.1 \\
    \midrule
    \multicolumn{17}{l}{\emph{HMAC}} \\
    Permanent Disabling & +2.9 & +4.2 & +5.5 & +6.3 & +6.0 & +10.3 & +14.8 & +17.5 & +11.9 & +21.8 & +31.6 & +37.4 & +1.1 & +0.7 & +0.4 & +0.2 \\
    Selective Disabling & +1.2 & +0.8 & +0.4 & +0.2 & +1.2 & +0.8 & +0.4 & +0.2 & +1.2 & +0.7 & +0.4 & +0.2 & +1.1 & +0.7 & +0.4 & +0.1 \\
    Partitioning & +1.8 & +1.3 & +0.9 & +0.7 & +0.8 & +0.7 & +0.6 & +0.5 & +1.2 & +0.9 & +0.7 & +0.5 & +0.7 & +0.6 & +0.5 & +0.5 \\
    \secdmp{} & +1.8 & +1.3 & +0.9 & +0.7 & +0.8 & +0.7 & +0.6 & +0.5 & +1.2 & +0.9 & +0.7 & +0.5 & +0.7 & +0.6 & +0.5 & +0.5 \\
    \prospect{} & +1.5 & +0.8 & +0.5 & +0.2 & +0.8 & +0.8 & +0.8 & +0.8 & +1.2 & +0.9 & +0.6 & +0.5 & +0.8 & +0.7 & +0.6 & +0.5 \\
    \prospect{}+\secdmp{} & +1.5 & +0.8 & +0.5 & +0.2 & +0.8 & +0.8 & +0.8 & +0.8 & +1.2 & +0.9 & +0.6 & +0.5 & +0.8 & +0.7 & +0.6 & +0.5 \\
    \midrule
    \multicolumn{17}{l}{\emph{HKDF}} \\
    Permanent Disabling & +2.4 & +3.6 & +4.7 & +5.5 & +6.0 & +10.2 & +14.7 & +17.4 & +12.9 & +23.3 & +31.4 & +37.4 & +1.2 & +0.8 & +0.4 & +0.2 \\
    Selective Disabling & +0.9 & +0.7 & +0.3 & +0.1 & +1.2 & +0.8 & +0.4 & +0.2 & +0.9 & +0.6 & +0.3 & +0.1 & +1.2 & +0.8 & +0.4 & +0.2 \\
    Partitioning & +0.7 & +0.6 & +0.3 & +0.1 & +0.4 & +0.2 & +0.1 & +0.0 & +1.0 & +0.8 & +0.6 & +0.5 & +0.2 & +0.1 & +0.1 & +0.0 \\
    \secdmp{} & +0.7 & +0.6 & +0.3 & +0.1 & +0.4 & +0.2 & +0.1 & +0.0 & +1.0 & +0.8 & +0.6 & +0.5 & +0.2 & +0.1 & +0.1 & +0.0 \\
    \prospect{} & +0.4 & +0.1 & -0.2 & -0.4 & +0.4 & +0.4 & +0.3 & +0.3 & +1.0 & +0.7 & +0.6 & +0.5 & -0.1 & -0.1 & -0.1 & +0.0 \\
    \prospect{}+\secdmp{} & +0.4 & +0.1 & -0.2 & -0.4 & +0.4 & +0.4 & +0.3 & +0.3 & +1.0 & +0.7 & +0.6 & +0.5 & -0.1 & -0.1 & -0.1 & +0.0 \\
    \bottomrule
  \end{tabular}
\end{table*}

\begin{table*}[t]
  \centering
  \caption{Performance overhead in \% on DMP-Bench for the Intel DMP, relative to Proteus+DMP.}
  \label{tab:dmp-bench-IntelDmp}
  \setlength{\tabcolsep}{1.9pt}
  \begin{tabular}{l S[table-format=+2.1] S[table-format=+2.1] S[table-format=+2.1] S[table-format=+2.1] S[table-format=+2.1] S[table-format=+2.1] S[table-format=+2.1] S[table-format=+2.1] S[table-format=+2.1] S[table-format=+2.1] S[table-format=+2.1] S[table-format=+2.1] S[table-format=+2.1] S[table-format=+2.1] S[table-format=+2.1] S[table-format=+2.1]}\toprule
    \multirow{2}{*}{Defense} & \multicolumn{4}{c}{Sglib} & \multicolumn{4}{c}{CoreMark} & \multicolumn{4}{c}{AoP} & \multicolumn{4}{c}{SpMV} \\
  \cmidrule(lr){2-5}\cmidrule(lr){6-9}\cmidrule(lr){10-13}\cmidrule(lr){14-17}
     & \multicolumn{1}{c}{25\%\,W} & \multicolumn{1}{c}{50\%\,W} & \multicolumn{1}{c}{75\%\,W} & \multicolumn{1}{c}{90\%\,W} & \multicolumn{1}{c}{25\%\,W} & \multicolumn{1}{c}{50\%\,W} & \multicolumn{1}{c}{75\%\,W} & \multicolumn{1}{c}{90\%\,W} & \multicolumn{1}{c}{25\%\,W} & \multicolumn{1}{c}{50\%\,W} & \multicolumn{1}{c}{75\%\,W} & \multicolumn{1}{c}{90\%\,W} & \multicolumn{1}{c}{25\%\,W} & \multicolumn{1}{c}{50\%\,W} & \multicolumn{1}{c}{75\%\,W} & \multicolumn{1}{c}{90\%\,W} \\ \midrule
    \multicolumn{17}{l}{\emph{Chacha20}} \\
    Permanent Disabling & +0.0 & +0.0 & +0.0 & +0.0 & +0.0 & +0.1 & +0.1 & +0.1 & +16.6 & +33.1 & +49.7 & +59.6 & +0.0 & +0.0 & +0.0 & +0.0 \\
    Selective Disabling & +0.0 & +0.0 & +0.0 & +0.0 & +0.0 & +0.0 & +0.0 & +0.0 & +0.0 & +0.0 & +0.0 & +0.0 & +0.0 & +0.0 & +0.0 & +0.0 \\
    Partitioning & +1.0 & +0.8 & +0.6 & +0.4 & +1.1 & +0.7 & +0.3 & +0.1 & +1.6 & +1.1 & +0.6 & +0.2 & -2.5 & -1.8 & -1.2 & -0.7 \\
    \secdmp{} & +1.0 & +0.8 & +0.6 & +0.4 & +1.1 & +0.7 & +0.3 & +0.1 & +1.6 & +1.1 & +0.6 & +0.2 & -2.5 & -1.8 & -1.2 & -0.7 \\
    \prospect{} & +1.2 & +1.0 & +0.8 & +0.7 & +1.0 & +0.6 & +0.3 & +0.0 & +1.5 & +1.0 & +0.6 & +0.2 & -2.6 & -1.9 & -1.2 & -0.7 \\
    \prospect{}+\secdmp{} & +1.2 & +1.0 & +0.8 & +0.7 & +1.0 & +0.6 & +0.3 & +0.0 & +1.6 & +1.1 & +0.6 & +0.2 & -2.6 & -1.9 & -1.2 & -0.7 \\
    \midrule
    \multicolumn{17}{l}{\emph{Salsa20}} \\
    Permanent Disabling & +0.0 & -0.1 & -0.1 & -0.1 & +0.0 & +0.0 & +0.1 & +0.1 & +17.6 & +34.2 & +50.4 & +59.9 & +0.0 & +0.0 & +0.0 & +0.0 \\
    Selective Disabling & +0.0 & +0.0 & +0.0 & +0.0 & +0.0 & +0.0 & +0.0 & +0.0 & -0.1 & +0.0 & +0.0 & +0.0 & +0.0 & +0.0 & +0.0 & +0.0 \\
    Partitioning & +1.2 & +0.8 & +0.5 & +0.3 & +1.8 & +1.2 & +0.6 & +0.2 & +1.0 & +0.7 & +0.3 & +0.2 & +1.3 & +1.1 & +0.9 & +0.7 \\
    \secdmp{} & +1.2 & +0.8 & +0.5 & +0.3 & +1.8 & +1.2 & +0.6 & +0.2 & +1.0 & +0.7 & +0.3 & +0.2 & +1.3 & +1.1 & +0.9 & +0.7 \\
    \prospect{} & +1.4 & +1.1 & +0.8 & +0.6 & +1.9 & +1.2 & +0.6 & +0.2 & +1.0 & +0.7 & +0.3 & +0.2 & +1.6 & +1.3 & +1.0 & +0.8 \\
    \prospect{}+\secdmp{} & +1.4 & +1.1 & +0.8 & +0.6 & +1.9 & +1.2 & +0.6 & +0.2 & +1.0 & +0.7 & +0.3 & +0.2 & +1.6 & +1.3 & +1.0 & +0.8 \\
    \midrule
    \multicolumn{17}{l}{\emph{SHA2}} \\
    Permanent Disabling & +0.0 & +0.1 & +0.1 & +0.1 & +0.0 & +0.0 & +0.1 & +0.1 & +17.2 & +30.6 & +47.2 & +57.6 & +0.0 & +0.0 & +0.0 & +0.0 \\
    Selective Disabling & +0.0 & +0.0 & +0.0 & +0.0 & +0.0 & +0.0 & +0.0 & +0.0 & +0.0 & +0.0 & +0.0 & +0.0 & +0.0 & +0.0 & +0.0 & +0.0 \\
    Partitioning & -0.9 & -0.5 & -0.2 & +0.0 & -0.2 & -0.2 & -0.2 & -0.1 & -1.1 & -0.9 & -0.9 & -0.9 & -0.9 & -0.6 & -0.2 & -0.1 \\
    \secdmp{} & -0.9 & -0.5 & -0.2 & +0.0 & -0.2 & -0.2 & -0.2 & -0.1 & -1.1 & -0.9 & -0.9 & -0.9 & -0.9 & -0.6 & -0.2 & -0.1 \\
    \prospect{} & -0.9 & -0.5 & -0.2 & +0.0 & +0.0 & -0.1 & -0.1 & -0.1 & -0.9 & -0.8 & -0.8 & -0.9 & -0.9 & -0.6 & -0.3 & -0.1 \\
    \prospect{}+\secdmp{} & -0.9 & -0.5 & -0.2 & +0.0 & +0.0 & -0.1 & -0.1 & -0.1 & -0.9 & -0.8 & -0.8 & -0.9 & -0.9 & -0.6 & -0.3 & -0.1 \\
    \midrule
    \multicolumn{17}{l}{\emph{HMAC}} \\
    Permanent Disabling & +0.0 & +0.0 & -0.1 & -0.1 & +0.0 & +0.0 & +0.0 & +0.1 & +15.1 & +31.0 & +48.0 & +59.4 & +0.0 & +0.0 & +0.0 & +0.0 \\
    Selective Disabling & +0.0 & +0.0 & +0.0 & +0.0 & +0.0 & +0.0 & +0.0 & +0.0 & +0.0 & +0.0 & +0.0 & +0.0 & +0.0 & +0.0 & +0.0 & +0.0 \\
    Partitioning & +0.5 & +0.3 & +0.2 & +0.1 & -0.4 & -0.2 & +0.1 & +0.2 & +0.0 & +0.0 & +0.0 & +0.0 & -0.4 & -0.1 & +0.2 & +0.3 \\
    \secdmp{} & +0.5 & +0.3 & +0.2 & +0.1 & -0.4 & -0.2 & +0.1 & +0.2 & +0.0 & +0.0 & +0.0 & +0.0 & -0.4 & -0.1 & +0.2 & +0.3 \\
    \prospect{} & +0.4 & +0.3 & +0.2 & +0.1 & -0.4 & -0.1 & +0.2 & +0.3 & -0.1 & +0.0 & +0.0 & +0.0 & -0.3 & +0.0 & +0.2 & +0.4 \\
    \prospect{}+\secdmp{} & +0.4 & +0.3 & +0.2 & +0.1 & -0.4 & -0.1 & +0.2 & +0.3 & -0.1 & +0.0 & +0.0 & +0.0 & -0.3 & +0.0 & +0.2 & +0.4 \\
    \midrule
    \multicolumn{17}{l}{\emph{HKDF}} \\
    Permanent Disabling & +0.0 & +0.0 & +0.0 & +0.0 & +0.0 & +0.0 & +0.1 & +0.1 & +16.9 & +33.6 & +50.0 & +59.4 & +0.0 & +0.0 & +0.0 & +0.0 \\
    Selective Disabling & +0.0 & +0.0 & +0.0 & +0.0 & +0.0 & +0.0 & +0.0 & +0.0 & +0.0 & +0.0 & +0.0 & +0.0 & +0.0 & +0.0 & +0.0 & +0.0 \\
    Partitioning & -0.1 & +0.0 & +0.2 & +0.4 & -0.8 & -0.6 & -0.3 & -0.1 & +0.1 & +0.4 & +0.7 & +0.8 & -1.0 & -0.7 & -0.3 & -0.2 \\
    \secdmp{} & -0.1 & +0.0 & +0.2 & +0.4 & -0.8 & -0.6 & -0.3 & -0.1 & +0.1 & +0.4 & +0.7 & +0.8 & -1.0 & -0.7 & -0.3 & -0.2 \\
    \prospect{} & -0.3 & -0.1 & +0.0 & +0.1 & -0.8 & -0.5 & -0.2 & -0.1 & +0.1 & +0.4 & +0.7 & +0.8 & -1.3 & -0.9 & -0.5 & -0.2 \\
    \prospect{}+\secdmp{} & -0.3 & -0.1 & +0.0 & +0.1 & -0.8 & -0.5 & -0.2 & -0.1 & +0.1 & +0.4 & +0.7 & +0.8 & -1.3 & -0.9 & -0.5 & -0.2 \\
    \bottomrule
  \end{tabular}
\end{table*}

\begin{table*}[t]
  \centering
  \caption{Performance overhead in \% on DMP-Bench for CDP, relative to Proteus+DMP.}
  \label{tab:dmp-bench-Cdp}
  \setlength{\tabcolsep}{1.9pt}
  \begin{tabular}{l S[table-format=+2.1] S[table-format=+2.1] S[table-format=+2.1] S[table-format=+2.1] S[table-format=+2.1] S[table-format=+2.1] S[table-format=+2.1] S[table-format=+2.1] S[table-format=+2.1] S[table-format=+2.1] S[table-format=+2.1] S[table-format=+2.1] S[table-format=+2.1] S[table-format=+2.1] S[table-format=+2.1] S[table-format=+2.1]}\toprule
    \multirow{2}{*}{Defense} & \multicolumn{4}{c}{Sglib} & \multicolumn{4}{c}{CoreMark} & \multicolumn{4}{c}{AoP} & \multicolumn{4}{c}{SpMV} \\
  \cmidrule(lr){2-5}\cmidrule(lr){6-9}\cmidrule(lr){10-13}\cmidrule(lr){14-17}
     & \multicolumn{1}{c}{25\%\,W} & \multicolumn{1}{c}{50\%\,W} & \multicolumn{1}{c}{75\%\,W} & \multicolumn{1}{c}{90\%\,W} & \multicolumn{1}{c}{25\%\,W} & \multicolumn{1}{c}{50\%\,W} & \multicolumn{1}{c}{75\%\,W} & \multicolumn{1}{c}{90\%\,W} & \multicolumn{1}{c}{25\%\,W} & \multicolumn{1}{c}{50\%\,W} & \multicolumn{1}{c}{75\%\,W} & \multicolumn{1}{c}{90\%\,W} & \multicolumn{1}{c}{25\%\,W} & \multicolumn{1}{c}{50\%\,W} & \multicolumn{1}{c}{75\%\,W} & \multicolumn{1}{c}{90\%\,W} \\ \midrule
    \multicolumn{17}{l}{\emph{Chacha20}} \\
    Permanent Disabling & +2.0 & +3.7 & +5.1 & +5.8 & +5.1 & +10.4 & +15.4 & +18.5 & +10.3 & +22.1 & +31.8 & +37.2 & +0.1 & +0.1 & +0.1 & +0.2 \\
    Selective Disabling & +0.1 & +0.1 & +0.0 & +0.0 & +0.0 & +0.0 & +0.0 & +0.0 & +0.0 & +0.1 & +0.0 & +0.0 & +0.0 & +0.0 & +0.0 & +0.0 \\
    Partitioning & +1.4 & +1.4 & +1.0 & +0.6 & +1.1 & +0.8 & +0.3 & +0.2 & +1.6 & +0.9 & +0.3 & -0.1 & -2.5 & -1.9 & -1.2 & -0.8 \\
    \secdmp{} & +1.4 & +1.4 & +1.0 & +0.5 & +1.1 & +0.8 & +0.3 & +0.2 & +1.5 & +0.8 & +0.3 & -0.1 & -2.5 & -1.9 & -1.2 & -0.8 \\
    \prospect{} & +1.5 & +1.4 & +1.2 & +0.7 & +1.1 & +0.9 & +0.6 & +0.5 & +1.5 & +0.9 & +0.3 & -0.1 & -2.6 & -1.9 & -1.2 & -0.8 \\
    \prospect{}+\secdmp{} & +1.5 & +1.4 & +1.2 & +0.7 & +1.1 & +0.9 & +0.6 & +0.5 & +1.5 & +0.8 & +0.3 & -0.1 & -2.6 & -1.9 & -1.2 & -0.8 \\
    \midrule
    \multicolumn{17}{l}{\emph{Salsa20}} \\
    Permanent Disabling & +3.7 & +4.7 & +5.5 & +6.0 & +6.6 & +11.2 & +15.8 & +18.6 & +12.3 & +20.6 & +30.4 & +36.8 & +1.5 & +1.1 & +0.7 & +0.5 \\
    Selective Disabling & +1.9 & +1.3 & +0.7 & +0.3 & +1.5 & +1.0 & +0.5 & +0.2 & +2.0 & +1.4 & +0.7 & +0.3 & +1.5 & +1.0 & +0.5 & +0.2 \\
    Partitioning & +3.1 & +2.2 & +1.1 & +0.6 & +3.3 & +2.2 & +1.0 & +0.4 & +3.1 & +2.3 & +1.2 & +0.6 & +2.8 & +2.3 & +1.7 & +1.4 \\
    \secdmp{} & +3.1 & +2.2 & +1.2 & +0.6 & +3.3 & +2.2 & +1.1 & +0.4 & +3.1 & +2.2 & +1.2 & +0.6 & +2.9 & +2.3 & +1.7 & +1.4 \\
    \prospect{} & +3.4 & +2.5 & +1.4 & +0.7 & +3.4 & +2.4 & +1.3 & +0.7 & +3.1 & +2.3 & +1.2 & +0.6 & +3.2 & +2.5 & +1.8 & +1.4 \\
    \prospect{}+\secdmp{} & +3.4 & +2.5 & +1.4 & +0.7 & +3.4 & +2.4 & +1.3 & +0.7 & +3.1 & +2.2 & +1.2 & +0.6 & +3.2 & +2.5 & +1.8 & +1.4 \\
    \midrule
    \multicolumn{17}{l}{\emph{SHA2}} \\
    Permanent Disabling & +2.6 & +4.2 & +5.2 & +5.9 & +6.5 & +11.2 & +15.8 & +18.5 & +11.7 & +21.7 & +31.4 & +37.0 & +1.1 & +0.8 & +0.5 & +0.4 \\
    Selective Disabling & +0.8 & +0.6 & +0.3 & +0.1 & +1.3 & +0.9 & +0.4 & +0.2 & +1.0 & +0.7 & +0.4 & +0.1 & +1.0 & +0.7 & +0.3 & +0.1 \\
    Partitioning & +1.0 & +1.0 & +0.8 & +0.5 & +1.2 & +0.7 & +0.2 & -0.1 & +0.7 & +0.3 & +0.0 & -0.2 & +1.4 & +0.9 & +0.4 & +0.1 \\
    \secdmp{} & +0.6 & +0.7 & +0.6 & +0.2 & +1.0 & +0.6 & +0.2 & -0.1 & +0.2 & +0.3 & +0.0 & -0.2 & +0.1 & +0.1 & +0.0 & +0.0 \\
    \prospect{} & +1.0 & +1.1 & +0.6 & +0.7 & +1.4 & +1.0 & +0.6 & +0.4 & +0.8 & +0.4 & +0.1 & -0.2 & +1.4 & +0.9 & +0.5 & +0.2 \\
    \prospect{}+\secdmp{} & +0.7 & +0.9 & +0.4 & +0.6 & +1.3 & +1.0 & +0.6 & +0.4 & +0.4 & +0.4 & +0.0 & -0.2 & +0.2 & +0.1 & +0.1 & +0.1 \\
    \midrule
    \multicolumn{17}{l}{\emph{HMAC}} \\
    Permanent Disabling & +3.2 & +4.4 & +5.4 & +6.0 & +6.3 & +10.9 & +15.6 & +18.5 & +12.2 & +22.1 & +31.6 & +37.4 & +0.8 & +0.6 & +0.5 & +0.4 \\
    Selective Disabling & +1.4 & +1.0 & +0.5 & +0.2 & +1.2 & +0.8 & +0.4 & +0.2 & +1.5 & +1.0 & +0.4 & +0.2 & +0.7 & +0.5 & +0.2 & +0.1 \\
    Partitioning & +2.2 & +1.7 & +1.0 & +0.7 & +1.9 & +1.5 & +1.0 & +0.7 & +1.7 & +1.2 & +0.7 & +0.5 & +0.5 & +0.5 & +0.6 & +0.6 \\
    \secdmp{} & +2.2 & +1.7 & +1.0 & +0.6 & +1.6 & +1.3 & +0.9 & +0.7 & +1.5 & +1.2 & +0.7 & +0.5 & +0.4 & +0.5 & +0.6 & +0.6 \\
    \prospect{} & +2.1 & +1.6 & +1.1 & +0.6 & +2.1 & +1.7 & +1.4 & +1.1 & +1.6 & +1.1 & +0.6 & +0.5 & +0.6 & +0.6 & +0.6 & +0.6 \\
    \prospect{}+\secdmp{} & +2.0 & +1.5 & +1.0 & +0.6 & +1.7 & +1.4 & +1.2 & +1.1 & +1.5 & +1.1 & +0.6 & +0.5 & +0.5 & +0.5 & +0.5 & +0.5 \\
    \midrule
    \multicolumn{17}{l}{\emph{HKDF}} \\
    Permanent Disabling & +2.5 & +3.7 & +4.6 & +5.3 & +6.6 & +11.0 & +15.6 & +18.4 & +12.8 & +23.1 & +31.3 & +37.3 & +1.5 & +1.1 & +0.7 & +0.4 \\
    Selective Disabling & +0.9 & +0.7 & +0.4 & +0.1 & +1.5 & +1.0 & +0.5 & +0.2 & +0.9 & +0.5 & +0.3 & +0.1 & +1.4 & +0.9 & +0.5 & +0.2 \\
    Partitioning & +1.5 & +1.1 & +0.4 & +0.2 & +1.3 & +0.9 & +0.4 & +0.1 & +1.1 & +0.7 & +0.6 & +0.5 & +1.0 & +0.6 & +0.3 & +0.1 \\
    \secdmp{} & +1.2 & +0.9 & +0.3 & +0.2 & +0.9 & +0.6 & +0.3 & +0.1 & +1.0 & +0.6 & +0.5 & +0.4 & +0.9 & +0.6 & +0.3 & +0.1 \\
    \prospect{} & +1.6 & +1.3 & +0.6 & +0.0 & +1.5 & +1.1 & +0.8 & +0.6 & +1.1 & +0.7 & +0.5 & +0.4 & +0.7 & +0.4 & +0.2 & +0.0 \\
    \prospect{}+\secdmp{} & +1.2 & +1.0 & +0.5 & +0.0 & +1.0 & +0.8 & +0.6 & +0.5 & +0.9 & +0.6 & +0.5 & +0.4 & +0.5 & +0.3 & +0.1 & +0.0 \\
    \bottomrule
  \end{tabular}
\end{table*}

\begin{table*}[t]
  \centering
  \caption{Performance overhead in \% on DMP-Bench for IMP, relative to Proteus+DMP.}
  \label{tab:dmp-bench-Imp}
  \setlength{\tabcolsep}{1.9pt}
  \begin{tabular}{l S[table-format=+2.1] S[table-format=+2.1] S[table-format=+2.1] S[table-format=+2.1] S[table-format=+2.1] S[table-format=+2.1] S[table-format=+2.1] S[table-format=+2.1] S[table-format=+2.1] S[table-format=+2.1] S[table-format=+2.1] S[table-format=+2.1] S[table-format=+2.1] S[table-format=+2.1] S[table-format=+2.1] S[table-format=+2.1]}\toprule
    \multirow{2}{*}{Defense} & \multicolumn{4}{c}{Sglib} & \multicolumn{4}{c}{CoreMark} & \multicolumn{4}{c}{AoP} & \multicolumn{4}{c}{SpMV} \\
  \cmidrule(lr){2-5}\cmidrule(lr){6-9}\cmidrule(lr){10-13}\cmidrule(lr){14-17}
     & \multicolumn{1}{c}{25\%\,W} & \multicolumn{1}{c}{50\%\,W} & \multicolumn{1}{c}{75\%\,W} & \multicolumn{1}{c}{90\%\,W} & \multicolumn{1}{c}{25\%\,W} & \multicolumn{1}{c}{50\%\,W} & \multicolumn{1}{c}{75\%\,W} & \multicolumn{1}{c}{90\%\,W} & \multicolumn{1}{c}{25\%\,W} & \multicolumn{1}{c}{50\%\,W} & \multicolumn{1}{c}{75\%\,W} & \multicolumn{1}{c}{90\%\,W} & \multicolumn{1}{c}{25\%\,W} & \multicolumn{1}{c}{50\%\,W} & \multicolumn{1}{c}{75\%\,W} & \multicolumn{1}{c}{90\%\,W} \\ \midrule
    \multicolumn{17}{l}{\emph{Chacha20}} \\
    Permanent Disabling & +0.0 & +0.0 & +0.0 & +0.1 & +0.0 & +0.0 & +0.0 & +0.0 & +0.0 & +0.0 & +0.0 & +0.0 & +5.2 & +10.4 & +15.6 & +18.7 \\
    Selective Disabling & +0.0 & +0.0 & +0.0 & +0.0 & +0.0 & +0.0 & +0.0 & +0.0 & +0.0 & +0.0 & +0.0 & +0.0 & +0.0 & +0.0 & +0.0 & +0.0 \\
    Partitioning & +1.0 & +0.8 & +0.5 & +0.4 & +1.1 & +0.7 & +0.3 & +0.1 & +1.6 & +1.0 & +0.5 & +0.2 & -2.6 & -2.0 & -1.3 & -0.9 \\
    \secdmp{} & +1.0 & +0.8 & +0.5 & +0.4 & +1.1 & +0.7 & +0.3 & +0.1 & +1.6 & +1.0 & +0.5 & +0.2 & -2.6 & -2.0 & -1.3 & -0.9 \\
    \prospect{} & +1.2 & +1.0 & +0.8 & +0.6 & +1.0 & +0.6 & +0.3 & +0.1 & +1.5 & +1.0 & +0.5 & +0.2 & -2.6 & -2.0 & -1.3 & -0.9 \\
    \prospect{}+\secdmp{} & +1.2 & +1.0 & +0.8 & +0.6 & +1.0 & +0.6 & +0.3 & +0.1 & +1.5 & +1.0 & +0.5 & +0.2 & -2.6 & -2.0 & -1.3 & -0.9 \\
    \midrule
    \multicolumn{17}{l}{\emph{Salsa20}} \\
    Permanent Disabling & +0.0 & +0.0 & +0.0 & +0.1 & +0.0 & +0.0 & +0.0 & +0.0 & +0.0 & +0.0 & +0.0 & +0.0 & +5.3 & +10.6 & +15.9 & +19.1 \\
    Selective Disabling & +0.0 & +0.0 & +0.0 & +0.0 & +0.0 & +0.0 & +0.0 & +0.0 & +0.0 & +0.0 & +0.0 & +0.0 & +0.0 & +0.0 & +0.0 & +0.0 \\
    Partitioning & +1.2 & +0.9 & +0.5 & +0.3 & +1.8 & +1.2 & +0.6 & +0.2 & +0.7 & +0.6 & +0.3 & +0.2 & +0.9 & +0.4 & -0.1 & -0.4 \\
    \secdmp{} & +1.2 & +0.9 & +0.5 & +0.3 & +1.8 & +1.2 & +0.6 & +0.2 & +0.7 & +0.6 & +0.3 & +0.2 & +0.9 & +0.4 & -0.1 & -0.4 \\
    \prospect{} & +1.4 & +1.2 & +0.9 & +0.8 & +1.9 & +1.2 & +0.6 & +0.2 & +0.7 & +0.6 & +0.3 & +0.2 & +1.3 & +0.7 & +0.1 & -0.2 \\
    \prospect{}+\secdmp{} & +1.4 & +1.2 & +0.9 & +0.8 & +1.9 & +1.2 & +0.6 & +0.2 & +0.7 & +0.6 & +0.3 & +0.2 & +1.3 & +0.7 & +0.1 & -0.2 \\
    \midrule
    \multicolumn{17}{l}{\emph{SHA2}} \\
    Permanent Disabling & +0.0 & +0.0 & +0.1 & +0.2 & +0.0 & +0.0 & +0.0 & +0.0 & +0.0 & +0.0 & +0.0 & +0.0 & +5.3 & +10.6 & +15.9 & +19.1 \\
    Selective Disabling & +0.0 & +0.0 & +0.0 & +0.0 & +0.0 & +0.0 & +0.0 & +0.0 & +0.0 & +0.0 & +0.0 & +0.0 & +0.0 & +0.0 & +0.0 & +0.0 \\
    Partitioning & -0.9 & -0.6 & -0.2 & +0.0 & -0.2 & -0.2 & -0.2 & -0.2 & -0.9 & -0.6 & -0.3 & -0.1 & -0.9 & -0.5 & -0.3 & -0.1 \\
    \secdmp{} & -0.9 & -0.6 & -0.2 & -0.1 & -0.2 & -0.2 & -0.2 & -0.2 & -0.9 & -0.6 & -0.3 & -0.1 & -0.9 & -0.5 & -0.3 & -0.1 \\
    \prospect{} & -0.9 & -0.6 & -0.2 & +0.0 & +0.0 & +0.0 & -0.1 & -0.1 & -0.8 & -0.5 & -0.2 & -0.1 & -0.8 & -0.5 & -0.1 & +0.1 \\
    \prospect{}+\secdmp{} & -0.9 & -0.6 & -0.2 & +0.0 & +0.0 & +0.0 & -0.1 & -0.1 & -0.8 & -0.5 & -0.2 & -0.1 & -0.8 & -0.5 & -0.1 & +0.1 \\
    \midrule
    \multicolumn{17}{l}{\emph{HMAC}} \\
    Permanent Disabling & +0.0 & +0.0 & +0.0 & +0.0 & +0.0 & +0.0 & +0.0 & +0.0 & +0.0 & +0.0 & +0.0 & +0.0 & +5.3 & +10.6 & +15.9 & +19.1 \\
    Selective Disabling & +0.0 & +0.0 & +0.0 & +0.0 & +0.0 & +0.0 & +0.0 & +0.0 & +0.0 & +0.0 & +0.0 & +0.0 & +0.0 & +0.0 & +0.0 & +0.0 \\
    Partitioning & +0.5 & +0.4 & +0.3 & +0.2 & -0.4 & -0.1 & +0.1 & +0.2 & +0.0 & +0.0 & +0.0 & +0.0 & -0.3 & +0.1 & +0.5 & +0.7 \\
    \secdmp{} & +0.5 & +0.4 & +0.3 & +0.2 & -0.4 & -0.1 & +0.1 & +0.2 & +0.0 & +0.0 & +0.0 & +0.0 & -0.3 & +0.1 & +0.5 & +0.7 \\
    \prospect{} & +0.4 & +0.2 & +0.1 & +0.0 & -0.3 & -0.1 & +0.2 & +0.3 & -0.1 & -0.1 & +0.0 & +0.0 & -0.2 & +0.2 & +0.5 & +0.8 \\
    \prospect{}+\secdmp{} & +0.4 & +0.2 & +0.1 & +0.0 & -0.3 & -0.1 & +0.2 & +0.3 & -0.1 & -0.1 & +0.0 & +0.0 & -0.2 & +0.2 & +0.5 & +0.8 \\
    \midrule
    \multicolumn{17}{l}{\emph{HKDF}} \\
    Permanent Disabling & +0.0 & +0.0 & +0.0 & +0.1 & +0.0 & +0.0 & +0.0 & +0.0 & +0.0 & +0.0 & +0.0 & +0.0 & +5.3 & +10.4 & +15.7 & +18.9 \\
    Selective Disabling & +0.0 & +0.0 & +0.0 & +0.0 & +0.0 & +0.0 & +0.0 & +0.0 & +0.0 & +0.0 & +0.0 & +0.0 & +0.0 & +0.0 & +0.0 & +0.0 \\
    Partitioning & -0.1 & +0.1 & +0.3 & +0.5 & -0.8 & -0.6 & -0.3 & -0.2 & -0.1 & +0.0 & +0.0 & +0.0 & -1.0 & -0.7 & -0.4 & -0.3 \\
    \secdmp{} & -0.1 & +0.1 & +0.3 & +0.4 & -0.8 & -0.6 & -0.3 & -0.2 & -0.1 & +0.0 & +0.0 & +0.0 & -1.0 & -0.7 & -0.4 & -0.3 \\
    \prospect{} & -0.3 & -0.2 & -0.1 & +0.1 & -0.8 & -0.5 & -0.2 & -0.1 & -0.2 & -0.1 & +0.0 & +0.0 & -1.3 & -0.9 & -0.5 & -0.3 \\
    \prospect{}+\secdmp{} & -0.3 & -0.2 & -0.1 & +0.1 & -0.8 & -0.5 & -0.2 & -0.1 & -0.2 & -0.1 & +0.0 & +0.0 & -1.3 & -0.9 & -0.5 & -0.3 \\
    \bottomrule
  \end{tabular}
\end{table*}